\newcolumntype{V}{!{\vrule width 1pt}}
\newtheorem{theorem}{Theorem}
\newtheorem{lemma}[theorem]{Lemma}
\newtheorem{corollary}[theorem]{Corollary}
\newtheorem{proposition}[theorem]{Proposition}
\newtheorem{definition}[theorem]{Definition}
\newtheorem{example}[theorem]{Example}
\newtheorem{remark}[theorem]{Remark}
\newtheorem{problem}[theorem]{Problem}
\newcommand{\gr}{Gr\"{o}bner }
\def\z{{\bf z}}
\def\F{{\bf F}}
\def\G{{\bf G}}
\begin{document}

\begin{frontmatter}

\title{New Remarks on the Factorization and Equivalence Problems for a Class of Multivariate Polynomial Matrices}

\author[baic,smbh]{Dong Lu}
\ead{donglu@buaa.edu.cn}

\author[klmm,ucas]{Dingkang Wang}
\ead{dwang@mmrc.iss.ac.cn}

\author[zjnu]{Fanghui Xiao\corref{cor1}}
\ead{xiaofanghui@amss.ac.cn}

\cortext[cor1]{Corresponding author}

\address[baic]{Beijing Advanced Innovation Center for Big Data and Brain Computing, Beihang University, Beijing 100191, China}

\address[smbh]{School of Mathematical Sciences, Beihang University, Beijing 100191, China}

\address[klmm]{KLMM, Academy of Mathematics and Systems Science, Chinese Academy of Sciences, Beijing 100190, China}

\address[ucas]{School of Mathematical Sciences, University of Chinese Academy of Sciences, Beijing 100049, China}

\address[zjnu]{College of Mathematics and Computer Science, Zhejiang Normal University, Jinhua, 321004, China}

\begin{abstract}
 This paper is concerned with the factorization and equivalence problems of multivariate polynomial matrices. We present some new criteria for the existence of matrix factorizations for a class of multivariate polynomial matrices, and obtain a necessary and sufficient condition for the equivalence of a square polynomial matrix and a diagonal matrix. Based on the constructive proof of the new criteria, we give a factorization algorithm and prove the uniqueness of the factorization. We implement the algorithm on Maple, and two illustrative examples are given to show the effectiveness of the algorithm.
\end{abstract}

\begin{keyword}
 Multivariate polynomial matrices, Matrix factorization, Matrix equivalence, Column reduced minors, \gr basis
\end{keyword}
\end{frontmatter}

\section{Introduction}\label{sec_intro}

 Multidimensional systems have wide applications in image, signal processing, control of networked systems, and other areas (see, e.g., \cite{Bose1982,Bose2003}). A multidimensional system may be represented by a multivariate polynomial matrix, and we can obtain some important properties of the system by studying the corresponding matrix. Symbolic computation provides many effective theories and algorithms, such as module theory and \gr basis algorithm \citep{Cox2005Using,Lin2008ATutorial}, for the research of multidimensional systems. Therefore, the factorization and equivalence problems related to multivariate polynomial matrices have made great progress over the past decades.

 Up to now, the factorization problem for univariate and bivariate polynomial matrices has been completely solved by \cite{Morf1977New,Guiver1982Polynomial,Liu2013New}, but the case of more than two variables is still open. \cite{Youla1979Notes} first introduced three important concepts according to different properties of multivariate polynomial matrices, namely zero prime matrix factorization, minor prime matrix factorization and factor prime matrix factorization. When multivariate polynomial matrices satisfy several special properties, there are some results about the existence problem of zero prime matrix factorizations for the matrices (see, e.g., \cite{Charoenlarpnopparut1999Multidimensional,Lin1999Notes,Lin2001Further}).
 After that, \cite{Lin2001A} proposed the famous Lin-Bose conjecture: a multivariate polynomial matrix admits a zero prime matrix factorization if all its maximal reduced minors generate a unit ideal. This conjecture was proved by \cite{Pommaret2001Solving,Srinivas2004A,Wang2004On,Liu2014The}, respectively. \cite{Mingsheng2005On} gave a necessary and sufficient condition for a multivariate polynomial matrix with full rank to have a minor prime matrix factorization. They extracted an algorithm from Pommaret's proof of the Lin-Bose conjecture, and examples showed the effectiveness of the algorithm. \cite{Guan2019} generalized the main results in \cite{Mingsheng2005On} to the case of multivariate polynomial matrices without full rank. For the existence problem of factor prime matrix factorizations for multivariate polynomial matrices with full rank, \cite{Mingsheng2007On} and \cite{Liu2010Notes} introduced the concept of regularity and obtained a necessary and sufficient condition. \cite{Guan2018} gave an algorithm to judge whether a multivariate polynomial matrix with the greatest common divisor of all its maximal minors being square-free has a factor prime matrix factorization. However, the existence problem for factor prime matrix factorizations of multivariate polynomial matrices remains a challenging open problem so far.

 Comparing to the factorization problem of multivariate polynomial matrices which has been widely investigated during the past years, less attention has been paid to the equivalence problem of multivariate polynomial matrices. For any given multidimensional system, our goal is to simplify it into a simpler equivalent form.

 Since a univariate polynomial ring is a principal ideal domain, a univariate polynomial matrix is always equivalent to its Smith form. This implies that the equivalence problem of univariate polynomial matrices has been solved (see, e.g., \cite{Rosenbrock1970,Kailath1980}). For any given bivariate polynomial matrix, conditions under which it is equivalent to its Smith form have been investigated by \cite{Frost1978Equivalence,Lee1983Smith,Frost1986Some}. Note that the equivalence problem of two multivariate polynomial matrices is equivalent to the isomorphism problem of two finitely presented modules. \cite{Boudellioua2010Serre} and \cite{Cluzeau2008F,Cluzeau2013Is,Cluzeau2015A} obtained some important results by using module theory and homological algebra. According to the works of \cite{Boudellioua2010Serre}, \cite{Boudellioua2012Com,Boudellioua2014} designed some algorithms based on Maple to compute Smith forms for some classes of multivariate polynomial matrices. For the case of multivariate polynomial matrices with more than one variable, however, the equivalence problem is not yet fully solved due to the lack of a mature polynomial matrix theory (see, e.g., \cite{Kung1977New,Morf1977New,Pugh1998A}).

 From our personal viewpoint, new ideas need to be injected into these areas to obtain new theoretical results and effective algorithms. Therefore, it would be significant to provide some new criteria to study the factorization and equivalence problems for some classes of multivariate polynomial matrices.

 From the 1990s to the present, there is a class of multivariate polynomial matrices that has always attracted attention. That is,
 $$\mathcal{M} = \{\mathbf{F}\in k[\z]^{l\times m} : z_1 - f(\z_2) \text{ is a divisor of } d_l(\mathbf{F}) \text{ with } f(\z_2)\in k[\z_2]\},$$
 where $l \leq m$, $\z = \{z_1,\ldots,z_n\}$ with $n\geq 3$, $\z_2 = \{z_2,\ldots,z_n\}$ and $d_l(\mathbf{F})$ is the greatest common divisor of all the $l\times l$ minors of $\mathbf{F}$. People tried to solve the factorization and equivalence problems of multivariate polynomial matrices in $\mathcal{M}$. Let $\mathbf{F}\in \mathcal{M}$ and $h = z_1 - f(\z_2)$. Many factorization criteria on the existence of a matrix factorization for $\mathbf{F}$ with respect to $h$ have been proposed (see, e.g., \cite{Lin2001Factorizations,Liu2011On,Lu2019Factorizations}).  When $l=m$ and ${\rm det}(\mathbf{F})=h$, \cite{Lin2006On} proved that $\mathbf{F}$ is equivalent to its Smith form. After that, \cite{Li2017On} studied the equivalence problem of a square matrix $\F$ with ${\rm det}(\mathbf{F})=h^r$ and a diagonal matrix, where $r \geq 1$.

 Through research, there are still many multivariate polynomial matrices in $\mathcal{M}$ without satisfying previous factorization criteria or equivalence conditions, but they can be factorized with respect to $h$ or equivalent to simpler forms. As a consequence, we continue to study the factorization and equivalence problems of multivariate polynomial matrices in $\mathcal{M}$.

 This paper is an extension of \cite{Lu2020Further}, and the contributions listed following are new. 1) Under the assumption that $h$ is not a divisor of the greatest common divisor of all the $(l-1)\times (l-1)$ minors of $\F$, we give a necessary and sufficient condition for the existence of a matrix factorization of $\F$ with respect to $h$. 2) We summarize all factorization criteria for the existence of a matrix factorization of $\F$ with respect to $h$, and study the relationships among them. 3) For the case that $h$ is a divisor of the greatest common divisor of all the $(l-1)\times (l-1)$ minors of $\F$, we obtain a sufficient condition for the existence of a matrix factorization of $\F$ with respect to $h^r$, where $2 \leq r \leq l$. 4) Based on the new factorization criteria, we construct a new factorization algorithm and implement it on Maple; codes and examples are available on the website: \url{http://www.mmrc.iss.ac.cn/~dwang/software.html}.

 The rest of the paper is organized as follows. After a brief introduction to matrix factorization and matrix equivalence in section \ref{sec_notions}, we use two examples to propose two problems that we shall consider. We present in section \ref{sec_matrix factorization} two criteria for factorizing $\mathbf{F}$ with respect to $h$, and then study the relationships among all existed factorization criteria. A necessary and sufficient condition for the equivalence of a square polynomial matrix and a diagonal matrix is described in section \ref{sec_matrix equivalence}. We in section \ref{sec_generalization} generalize the main result in section \ref{sec_matrix factorization} to a more general case. We in section \ref{sec_algorithm} construct a factorization algorithm and study the uniqueness of matrix factorizations by the algorithm, and use two examples to illustrate the effectiveness of the algorithm in section \ref{sec_examples}. The paper contains a summary of contributions and some remarks in section \ref{sec_conclusions}.

\section{Preliminaries and problems} \label{sec_notions}

 In this section we first recall some basic notions which will be used in the following sections, and then we use two examples to put forward two problems that we are considering.

 \subsection{Basic notions}

 We denote by $k$ an algebraically closed field. Let $k[\z]$ and $k[\z_2]$ be the polynomial ring in variables $\z$ and $\z_2$ with coefficients in $k$, respectively. Let $k[\z]^{l\times m}$ be the set of $l\times m$ matrices with entries in $k[\z]$. Throughout the paper, we assume that $l\leq m$, and use uppercase bold letters to denote polynomial matrices. In addition, ``w.r.t." stands for ``with respect to".

 Let $\mathbf{F}\in k[\z]^{l\times m}$, we use $d_i(\mathbf{F})$ to denote the greatest common divisor of all the $i\times i$ minors of $\mathbf{F}$ with the convention that $d_0(\mathbf{F})=1$, where $i=1,\ldots,l$. Let $f\in k[\z_2]$, then $\mathbf{F}(f,\z_2)$ denotes a polynomial matrix in $k[\z_2]^{l\times m}$ which is formed by transforming $z_1$ in $\mathbf{F}$ into $f$.

 \begin{definition}[\cite{Lin1988On,Sule1994Feed}]
  Let $\mathbf{F}\in k[\z]^{l\times m}$ with rank $r$, where $1 \leq r \leq l$. For any given integer $i$ with $1\leq i \leq r$, let $a_1,\ldots,a_\beta$ denote all the $i\times i$ minors of $\mathbf{F}$, where $\beta = \binom l{i} \cdot \binom m{i}$. Extracting $d_i(\mathbf{F})$ from $a_1,\ldots,a_\beta$ yields
  $$a_j = d_i(\mathbf{F})\cdot b_j, ~~j=1,\ldots,\beta,$$
  then $b_1,\ldots,b_\beta$ are called the $i\times i$ reduced minors of $\mathbf{F}$.
 \end{definition}

 \cite{Lin1988On} showed that reduced minors are important invariants for polynomial matrices.

 \begin{lemma}\label{RM_relation}
  Let $\F_1\in k[\z]^{r\times t}$ be of full row rank, $b_1, \ldots, b_{\gamma}$ be all the $r\times r$ reduced minors of $\F_1$, and $\F_2\in k[\z]^{t\times (t-r)}$ be of full column rank, $\bar{b}_1, \ldots, \bar{b}_{\gamma}$ be all the $(t-r)\times (t-r)$ reduced minors of $\F_2$, where $r<t$ and $\gamma = \binom {t}{r}$. If $\F_1\F_2 = \mathbf{0}_{r\times(t-r)}$, then $\bar{b}_i=\pm b_i$ for $i=1,\ldots,\gamma$, and signs depend on indices.
 \end{lemma}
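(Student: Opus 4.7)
The plan is to first establish a tight proportionality between the ordinary maximal minors of $\F_1$ and those of $\F_2$ over the fraction field $K = k(\z)$, and then to pass to reduced minors by dividing by the appropriate $\gcd$s. Over $K$, the hypotheses $\F_1 \F_2 = 0$, $\mathrm{rank}(\F_1) = r$, and $\mathrm{rank}(\F_2) = t-r$ assemble into a short exact sequence
$$0 \longrightarrow K^{t-r} \xrightarrow{\F_2} K^{t} \xrightarrow{\F_1} K^{r} \longrightarrow 0,$$
so the column space of $\F_2$ is exactly $\ker \F_1$ and, dually, the row space of $\F_1$ is the annihilator of the column space of $\F_2$ under the standard pairing on $K^t$.

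Index the maximal minors by subsets: for an $r$-subset $I \subseteq \{1,\ldots,t\}$, let $a_I$ denote the $r \times r$ minor of $\F_1$ using columns in $I$, and let $\bar a_{I^c}$ denote the $(t-r) \times (t-r)$ minor of $\F_2$ using rows in the complementary set $I^c$. The crucial step is to show that there exist $\lambda \in K^\ast$ and signs $\epsilon_I \in \{\pm 1\}$ with
$$a_I \;=\; \lambda\,\epsilon_I\,\bar a_{I^c} \qquad \text{for every } r\text{-subset } I.$$
Conceptually, this says that $(a_I)_I$ and $(\epsilon_I\,\bar a_{I^c})_I$ are two sets of Pl\"ucker coordinates for the same $r$-plane in $K^t$ (the row space of $\F_1$, equivalently the annihilator of the column space of $\F_2$). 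For an elementary derivation I would fix $I_0$ with $a_{I_0} \neq 0$, permute columns so that $I_0 = \{1,\ldots,r\}$, partition $\F_1 = [A \mid B]$ and $\F_2 = \bigl[\begin{smallmatrix} C \\ D \end{smallmatrix}\bigr]$ accordingly, use $AC+BD=0$ to solve $C = -A^{-1} B D$ in $K$, and then expand each remaining $a_I$ and each $\bar a_{I^c}$ via the Cauchy--Binet formula to read off the common ratio $\lambda$. The main technical obstacle is pinning down the sign $\epsilon_I$ cleanly for every $I$ simultaneously; once the signs are fixed, the rest of this step is routine.

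To finish, substitute $a_I = d_r(\F_1)\,b_I$ and $\bar a_{I^c} = d_{t-r}(\F_2)\,\bar b_{I^c}$ into the proportionality to obtain $b_I = \mu\,\epsilon_I\,\bar b_{I^c}$ with $\mu = \lambda\,d_{t-r}(\F_2)/d_r(\F_1) \in K^\ast$. Write $\mu = p/q$ in lowest terms in $k[\z]$ and compare $\gcd$s on both sides of $q\,b_I = p\,\epsilon_I\,\bar b_{I^c}$: since each family of reduced minors has trivial $\gcd$ in $k[\z]$, both $p$ and $q$ must be units, so $\mu \in k^\ast$. As the reduced minors are themselves defined only up to a common unit of $k$, absorbing $\mu$ into the chosen value of $d_r(\F_1)$ yields $b_i = \pm\,\bar b_i$ for $i = 1,\ldots,\gamma$ under the natural pairing of $r$-subsets with their $(t-r)$-subset complements, with the sign depending on $i$. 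This is exactly the asserted conclusion.
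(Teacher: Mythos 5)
The paper does not prove this lemma; it is imported from Lin (1988) as a known structural fact, so there is no in-paper argument to compare against. Your blind proof is correct and follows the standard route: over $K=k(\z)$ the short exact sequence makes the row space of $\F_1$ the annihilator of the column space of $\F_2$, and after normalizing ($\F_1\mapsto[I_r\mid M]$ and $\F_2$ correspondingly to the block matrix with top block $-M$ and bottom block $I_{t-r}$) the complementary maximal minors of the two reduced matrices coincide up to sign by cofactor expansion; undoing the normalization then gives $a_I=\lambda\,\epsilon_I\,\bar a_{I^c}$ with $\lambda$ the fixed ratio of the two normalizing determinants. Two small points worth noting. First, the ``main technical obstacle'' you flag, namely computing $\epsilon_I$ explicitly, is not actually required for the lemma as stated, which only asserts the sign is $\pm1$ and may depend on the index; your construction already delivers that. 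Second, your gcd step for $\mu\in k^\ast$ is sound: from $q\,b_I=p\,\epsilon_I\,\bar b_{I^c}$ with $\gcd(p,q)=1$, the polynomial $q$ divides every $\bar b_{I^c}$ and hence their unit gcd, and symmetrically for $p$; and reading the conclusion $\bar b_i=\pm b_i$ modulo a common unit of $k$ is the correct interpretation, since reduced minors are themselves only determined up to such a unit.
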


 Let $\mathbf{F}\in k[\z]^{l\times m}$ with rank $r$, where $1\leq r \leq l$. Let $\bar{\F}_1,\ldots,\bar{\F}_\eta \in k[\z]^{l\times r}$ be all the full column rank submatrices of $\F$, where $1\leq \eta \leq \binom{m}{r}$. According to Lemma \ref{RM_relation}, it is easy to prove that $\bar{\F}_1,\ldots,\bar{\F}_\eta$ have the same $r\times r$ reduced minors. Based on this phenomenon, we give the following concept.

 \begin{definition}
  Let $\mathbf{F}\in k[\z]^{l\times m}$ with rank $r$, and $\bar{\F}\in k[\z]^{l\times r}$ be an arbitrary full column rank submatrix of $\F$, where $1\leq r \leq l$. Let $c_1,\ldots, c_\xi$ be all the $r\times r$ reduced minors of $\bar{\F}$, where $\xi = \binom{l}{r}$. Then $c_1,\ldots, c_\xi$ are called the $r\times r$ {\bf column} reduced minors of $\F$.
 \end{definition}

  We can define the $r\times r$ {\bf row} reduced minors of $\F$ in the same way.

 In order to state conveniently problems and main conclusions of this paper, we introduce the following concepts and results.

 \begin{definition}
  Let $\mathbf{F}\in k[\z]^{l\times m}$ be of full row rank.
  \begin{enumerate}
    \item If all the $l\times l$ minors of $\mathbf{F}$ generate $k[\z]$, then $\mathbf{F}$ is said to be a zero left prime (ZLP) matrix.

    \item If all the $l\times l$ minors of $\mathbf{F}$ are relatively prime, i.e., $d_l(\mathbf{F})$ is a nonzero constant in $k$, then $\mathbf{F}$ is said to be a minor left prime (MLP) matrix.

    \item If for any polynomial matrix factorization $\mathbf{F} = \mathbf{F}_1\mathbf{F}_2$ with $\mathbf{F}_1\in k[\z]^{l\times l}$, $\mathbf{F}_1$ is necessarily a unimodular matrix, i.e., ${\rm det}(\mathbf{F}_1)$ is a nonzero constant in $k$, then $\mathbf{F}$ is said to be a factor left prime (FLP) matrix.
  \end{enumerate}
 \end{definition}

 Zero right prime (ZRP) matrices, minor right prime (MRP) matrices and factor right prime (FRP) matrices can be similarly defined for matrices $\mathbf{F}\in k[\z]^{m\times l}$ with $m\geq l$. We refer to \cite{Youla1979Notes} for more details about the relationships among ZLP matrices, MLP matrices and FLP matrices.

 For any given ZLP matrix $\mathbf{F}\in k[\z]^{l\times m}$, \cite{Quillen1976Projective} and \cite{Suslin1976Projective} proved that an $m\times m$ unimodular matrix can be constructed such that $\mathbf{F}$ is its first $l$ rows, respectively. This result is called Quillen-Suslin theorem, and it solved the problem raised by \cite{serre1955faisceaux}.

\begin{lemma}\label{QS-2}
 If $\mathbf{F}\in k[\z]^{l \times m}$ is a ZLP matrix, then a unimodular matrix $\mathbf{U}\in k[\z]^{m\times m}$ can be constructed such that $\mathbf{F}$ is its first $l$ rows.
\end{lemma}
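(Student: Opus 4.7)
The plan is to derive this statement from the Quillen--Suslin theorem (every finitely generated projective $k[\z]$-module is free), plus a short module-theoretic completion argument. First I would reformulate the ZLP hypothesis as a statement about a linear map. Viewing $\mathbf{F}$ as the $k[\z]$-linear map $\mathbf{F}:k[\z]^m\to k[\z]^l$, $v\mapsto \mathbf{F} v$, the condition that the $l\times l$ minors of $\mathbf{F}$ generate the unit ideal is equivalent to $\mathbf{F}$ being surjective. Indeed, if $1=\sum_I a_I M_I$ where the $M_I$ are the $l\times l$ minors and the $a_I\in k[\z]$, then a Cramer-type identity assembles an explicit right inverse; conversely, the vanishing of the Fitting ideal of the cokernel forces the cokernel to be zero.

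Once surjectivity is available, I would split the short exact sequence
\[
0\longrightarrow \ker(\mathbf{F})\longrightarrow k[\z]^m \xrightarrow{\ \mathbf{F}\ } k[\z]^l\longrightarrow 0,
\]
which splits because $k[\z]^l$ is free, yielding an $m\times l$ matrix $\sigma\in k[\z]^{m\times l}$ with $\mathbf{F}\sigma=\mathbf{I}_l$ and a direct-sum decomposition $k[\z]^m=\operatorname{im}(\sigma)\oplus\ker(\mathbf{F})$. In particular $\ker(\mathbf{F})$ is a direct summand of a free module, hence a finitely generated projective $k[\z]$-module. Here I would invoke the Quillen--Suslin theorem to conclude that $\ker(\mathbf{F})$ is free, and tensoring with the fraction field of $k[\z]$ pins down its rank as $m-l$. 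Choosing a basis of $\ker(\mathbf{F})$ produces an $m\times(m-l)$ matrix $\mathbf{G}$ with $\mathbf{F}\mathbf{G}=\mathbf{0}$ whose columns are $k[\z]$-linearly independent.

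To finish, I would assemble the two blocks into a single square matrix $\mathbf{V}=[\sigma\mid \mathbf{G}]\in k[\z]^{m\times m}$. The splitting above says exactly that $(v,w)\mapsto \sigma v+\mathbf{G} w$ is an isomorphism $k[\z]^l\oplus k[\z]^{m-l}\xrightarrow{\sim} k[\z]^m$, so $\mathbf{V}$ is unimodular. A direct block computation gives $\mathbf{F}\mathbf{V}=[\mathbf{F}\sigma\mid \mathbf{F}\mathbf{G}]=[\mathbf{I}_l\mid\mathbf{0}]$, whence $\mathbf{F}=[\mathbf{I}_l\mid\mathbf{0}]\mathbf{V}^{-1}$. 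Setting $\mathbf{U}:=\mathbf{V}^{-1}$ thus produces a unimodular matrix whose first $l$ rows coincide with $\mathbf{F}$, as required.

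The only non-elementary step is the freeness of $\ker(\mathbf{F})$, and this is precisely the content of the Quillen--Suslin theorem, which is assumed as a black box here. Once that is granted, all the remaining work (realising the ZLP condition as surjectivity, splitting the sequence, and gluing the section with a kernel basis into a square unimodular matrix) is routine linear algebra over $k[\z]$, so the main obstacle is entirely absorbed into the invocation of Quillen--Suslin.
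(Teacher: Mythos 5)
Your argument is correct. The paper itself does not prove this lemma: it simply records it as the Quillen--Suslin theorem (in the Serre-conjecture/unimodular-completion form) and cites Quillen and Suslin directly, with pointers to Youla, Logar, Park, and the Maple package \texttt{QUILLENSUSLIN} for the algorithmic side. What you have written is the standard derivation of this unimodular-completion statement from the ``finitely generated projective $\Rightarrow$ free'' formulation: ZLP is equivalent to surjectivity of $\mathbf{F}\colon k[\z]^m\to k[\z]^l$ (via the vanishing of the zeroth Fitting ideal of the cokernel), the resulting short exact sequence splits because $k[\z]^l$ is free, $\ker(\mathbf{F})$ is therefore a direct summand of $k[\z]^m$ and hence projective, Quillen--Suslin makes it free of rank $m-l$, and gluing a section $\sigma$ with a kernel basis $\mathbf{G}$ yields a unimodular $\mathbf{V}=[\sigma\mid\mathbf{G}]$ with $\mathbf{F}\mathbf{V}=[\mathbf{I}_l\mid\mathbf{0}]$, so $\mathbf{U}=\mathbf{V}^{-1}$ has $\mathbf{F}$ as its first $l$ rows. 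One remark on emphasis rather than correctness: the lemma's phrase ``can be constructed'' is doing real work in this paper, since later algorithms call \texttt{QUILLENSUSLIN} to produce $\mathbf{U}$ explicitly. Your proof invokes Quillen--Suslin as a black box to get freeness of $\ker(\mathbf{F})$, which gives existence; to get the constructive statement one must additionally appeal to one of the effective versions of Quillen--Suslin (e.g., Logar--Sturmfels or Park) that algorithmically extract a free basis, after which your assembly of $\mathbf{V}$ and inversion are entirely explicit. This is precisely the gap the paper bridges by citing those algorithmic references alongside the theorem.
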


 There are many algorithms for the Quillen-Suslin theorem, we refer to \cite{Youla1984The,Logar1992Algorithms,park1995} for more details. \cite{Fabianska2007Applications} first designed a Maple package, which is called QUILLENSUSLIN, to implement the Quillen-Suslin theorem.

 Let $W$ be a $k[\z]$-module generated by $\vec{u}_1,\ldots,\vec{u}_l\in k[\z]^{1\times m}$. The set of all $(b_1,\ldots,b_l)\in k[\z]^{1\times l}$ such that $b_1\vec{u}_1+ \cdots + b_l\vec{u}_l = \vec{0}$ is a $k[\z]$-module of $k[\z]^{1\times l}$, is called the (first) syzygy module of $W$, and denoted by ${\rm Syz}(W)$. \cite{Lin1999On} proposed several interesting structural properties of syzygy modules. Let $\mathbf{F} = \begin{bmatrix} \vec{u}_1^{\rm T},\ldots,\vec{u}_l^{\rm T} \end{bmatrix}^{\rm T}$. The rank of $W$ is defined as the rank of $\mathbf{F}$ that is denoted by ${\rm rank}(\mathbf{F})$.  \cite{Guan2018} proved that the rank of $W$ does not depend on the choice of generators of $W$.

 \begin{lemma}\label{rank-syz}
  With above notations. If ${\rm rank}(W) = r$ with $1\leq r \leq l$, then the rank of ${\rm Syz}(W)$ is $l-r$.
 \end{lemma}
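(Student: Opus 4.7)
The plan is to reduce the statement to a rank--nullity computation over a field by localizing at the generic point. Let $K = k(\z)$ denote the field of fractions of $k[\z]$. Since $k[\z]$ is an integral domain, the rank of any finitely generated $k[\z]$-module $M$ equals $\dim_K(M \otimes_{k[\z]} K)$, and this agrees with the ``maximum number of linearly independent elements'' characterization that is in play here. In particular, the hypothesis $\text{rank}(W) = r$ translates to $\dim_K(W \otimes K) = r$, which coincides with the rank of $\mathbf{F}$ viewed as an $l \times m$ matrix with entries in $K$.

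First, I would record the defining left-exact sequence of $k[\z]$-modules
\begin{equation*}
 0 \longrightarrow \mathrm{Syz}(W) \longrightarrow k[\z]^{1\times l} \xrightarrow{\;\cdot\,\mathbf{F}\;} k[\z]^{1\times m},
\end{equation*}
where the first nontrivial map is inclusion and the second is right multiplication by $\mathbf{F}$ (so its image is precisely $W$). Next, since $K$ is a localization of $k[\z]$, tensoring with $K$ is exact, so I obtain
\begin{equation*}
 0 \longrightarrow \mathrm{Syz}(W)\otimes_{k[\z]} K \longrightarrow K^{1\times l} \xrightarrow{\;\cdot\,\mathbf{F}\;} K^{1\times m}.
\end{equation*}
Thus $\mathrm{Syz}(W)\otimes K$ is canonically identified with the left kernel of $\mathbf{F}$ over $K$.

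Finally, I would invoke the classical rank--nullity theorem for the $K$-linear map $\cdot\,\mathbf{F}\colon K^{1\times l}\to K^{1\times m}$: the image has $K$-dimension equal to the rank of $\mathbf{F}$, namely $r$, so the kernel has $K$-dimension $l - r$. Combining this with the identification above yields $\dim_K(\mathrm{Syz}(W)\otimes K) = l - r$, which is by definition the rank of $\mathrm{Syz}(W)$. This completes the proof.

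The only point that requires care is the coherence between the two possible definitions of module rank (dimension after tensoring with $K$, versus maximal size of a $k[\z]$-linearly independent subset), together with the fact that $\mathrm{rank}(\mathbf{F})$ as defined in the paper indeed equals the $K$-rank of $\mathbf{F}$; once these standard equivalences are acknowledged, the argument is essentially a one-line application of rank--nullity, so I would not expect any genuine obstacle in the proof.
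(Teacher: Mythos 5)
Your proof is correct and follows essentially the same route as the paper: both arguments pass to the fraction field $k(\z)$, identify the problem as computing the dimension of the left kernel of $\mathbf{F}$ over $k(\z)$, and apply rank--nullity. The only difference is one of packaging: you invoke exactness of localization to transport the syzygy sequence to $k(\z)$, while the paper proves the same fact by hand via a two-inequality argument (linear dependence over $k(\z)$ forces dependence over $k[\z]$ in one direction, and clearing denominators in the other).
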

 \begin{proof}
   Let $k(\z)$ be the fraction field of $k[\z]$, and ${\rm Syz}^\ast(W) = \{ \vec{v}\in k(\z)^{1\times l} : \vec{v}\cdot\mathbf{F} = \vec{0}\}$. Then, ${\rm Syz}^\ast(W)$ is a $k(\z)$-vector space of dimension $l-r$. For any given $l-r+1$ different vectors $\vec{v}_1,\ldots,$ $\vec{v}_{l-r+1} \in k[\z]^{1\times l}$ in ${\rm Syz}(W)$, it is obvious that $\vec{v}_i \in {\rm Syz}^\ast(W)$ for each $i$, and they are $k(\z)$-linearly dependent. This implies that $\vec{v}_1,\ldots,\vec{v}_{l-r+1}$ are $k[\z]$-linearly dependent. Thus ${\rm rank}({\rm Syz}(W)) \leq l-r$.

   Assume that $\vec{p}_1,\ldots,\vec{p}_{l-r} \in k(\z)^{1\times l}$ are $l-r$ vectors in ${\rm Syz}^\ast(W)$, and they are $k(\z)$-linearly independent. For each $j$, we have $p_{j1}\vec{u}_1+ \cdots + p_{jl}\vec{u}_l = \vec{0}$, where $\vec{p}_j = (p_{j1},\ldots,p_{jl})$. Multiplying both sides of the equation by the least common multiple of the denominators of $p_{j1},\ldots,p_{jl}$, we obtain $\bar{p}_j = (\bar{p}_{j1},\ldots,\bar{p}_{jl})\in k[\z]$ such that $\bar{p}_{j1}\vec{u}_1+ \cdots + \bar{p}_{jl}\vec{u}_l = \vec{0}$. Then, $\bar{p}_j\in {\rm Syz}(W)$, where $j=1,\ldots, l-r$. Moreover, $\bar{p}_1,\ldots,\bar{p}_{l-r}$ are $k[\z]$-linearly independent. Thus, ${\rm rank}({\rm Syz}(W)) \geq l-r$.

   As a consequence, the rank of ${\rm Syz}(W)$ is $l-r$ and the proof is completed.
 \end{proof}

 \begin{remark}
  Assume that ${\rm Syz}(W)$ is generated by $\vec{v}_1,\ldots,\vec{v}_t\in k[\z]^{1\times l}$, and $\mathbf{H} = \begin{bmatrix} \vec{v}_1^{\rm T}, \ldots,\vec{v}_t^{\rm T} \end{bmatrix}^{\rm T}$. It follows from ${\rm rank}(\mathbf{H}) = l-r$ that $t \geq l-r$. That is, the number of vectors in any given generators of ${\rm Syz}(W)$ is greater than or equal to $l-r$.
 \end{remark}

 Let $\mathbf{F}\in k[\z]^{l\times m}$ with rank $r$, where $1\leq r \leq l$. For each $1\leq i \leq r$, we use $I_i(\mathbf{F})$ to denote the ideal generated by all the $i\times i$ minors of $\mathbf{F}$. For convenience, let $I_0(\mathbf{F}) = k[\z]$. Moreover, we denote the submodule of $k[\z]^{1\times m}$ generated by all the row vectors of $\mathbf{F}$ by ${\rm Im}(\mathbf{F})$.

\begin{definition}
 Let $W$ be a finitely generated $k[\z]$-module, and $k[\z]^{1\times l} \xlongrightarrow{\phi} k[\z]^{1\times m} \rightarrow W \rightarrow 0$ be a presentation of $W$, where $\phi$ acts on the right on row vectors, i.e., $\phi(\vec{u}) = \vec{u}\cdot\mathbf{F}$ for $\vec{u}\in k[\z]^{1\times l}$ with $\mathbf{F}$ being a presentation matrix corresponding to the linear mapping $\phi$. Then the ideal $Fitt_j(W) = I_{m-j}(\mathbf{F})$ is called the $j$-th Fitting ideal of $W$. Here, we make the convention that $Fitt_j(W) = k[\z]$ for $j \geq m$, and that $Fitt_j(W) = 0$ for $j < {\rm max}\{m-l,0\}$.
\end{definition}

 We remark that $Fitt_j(W)$ only depend on $W$ (see, e.g., \cite{Greuel2002A,Eisenbud2013}). In addition, the chain $0 = Fitt_{-1}(W) \subseteq Fitt_0(W) \subseteq \ldots \subseteq Fitt_m(W) = k[\z]$ of Fitting ideals is increasing. \cite{Cox2005Using} showed that one obtains the presentation matrix $\mathbf{F}$ for $W$ by arranging the generators of ${\rm Syz}(W)$ as rows.

\subsection{Matrix factorization problem}

 A matrix factorization of a multivariate polynomial matrix is formulated as follows.

 \begin{definition}\label{matrix_factorization}
  Let $\mathbf{F}\in k[\z]^{l\times m}$ and $h_0\mid d_l(\mathbf{F})$. $\mathbf{F}$ is said to admit a matrix factorization w.r.t. $h_0$ if $\mathbf{F}$ can be factorized as
  \begin{equation}\label{gerneral-matirx-factorization}
   \mathbf{F} = \mathbf{G}_1\mathbf{F}_1
  \end{equation}
  such that $\mathbf{G}_1\in k[\z]^{l\times l}$ with ${\rm det}(\mathbf{G}_1) = h_0$ and $\mathbf{F}_1\in k[\z]^{l\times m}$. In particular, Equation (\ref{gerneral-matirx-factorization}) is said to be a ZLP (MLP, FLP) matrix factorization if $\mathbf{F}_1$ is a ZLP (MLP, FLP) matrix.
 \end{definition}

 Throughout the paper, let $h= z_1 - f(\z_2)$ with $f(\z_2)\in k[\z_2]$. This paper will address the following specific matrix factorization problem.

 \begin{problem}\label{problem_1}
  Let $\mathbf{F}\in \mathcal{M}$. Under what conditions do $\mathbf{F}$ have a matrix factorization w.r.t. $h$.
 \end{problem}

 So far, several results have been made on Problem \ref{problem_1}, and the latest progress on this problem was obtained by \cite{Lu2019Factorizations}.

 \begin{lemma}\label{theorem_Lu_2019}
 Let $\mathbf{F}\in \mathcal{M}$. If $h \nmid d_{l-1}(\mathbf{F})$ and the ideal generated by $h$ and all the $(l-1)\times (l-1)$ reduced minors of $\mathbf{F}$ is $k[\z]$, then $\mathbf{F}$ admits a matrix factorization w.r.t. $h$.
\end{lemma}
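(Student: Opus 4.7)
The plan is to mimic the standard Quillen--Suslin-based factorisation strategy: pass to $\bar{\mathbf{F}} := \mathbf{F}(f(\z_2),\z_2) \in k[\z_2]^{l\times m}$, build a ZLP row vector $\vec{v}(\z_2) \in k[\z_2]^{1\times l}$ in the left kernel of $\bar{\mathbf{F}}$, and then embed $\vec{v}$ as a row of a unimodular matrix $\mathbf{U}(\z_2)$ via Lemma \ref{QS-2} to read off the factorisation from $\mathbf{U}\mathbf{F}$. First I would determine $\mathrm{rank}(\bar{\mathbf{F}})$. Because $h\mid d_l(\mathbf{F})$ every $l\times l$ minor of $\bar{\mathbf{F}}$ vanishes, so $\mathrm{rank}(\bar{\mathbf{F}}) \le l-1$; and because $h\nmid d_{l-1}(\mathbf{F})$ the polynomial $\bar{d}_{l-1} := d_{l-1}(\mathbf{F})(f(\z_2),\z_2)$ is a nonzero element of $k[\z_2]$, so some $(l-1)\times(l-1)$ minor of $\bar{\mathbf{F}}$ is nonzero and $\mathrm{rank}(\bar{\mathbf{F}}) = l-1$ exactly. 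Lemma \ref{rank-syz} then tells me that the left syzygy module $M$ of the rows of $\bar{\mathbf{F}}$ has rank $1$ over $k[\z_2]$.

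Next I need a ZLP primitive generator of $M$. For each $(l-1)$-subset $J \subseteq \{1,\ldots,m\}$ I form the cofactor vector $\vec{w}^{(J)} = (w_1^{(J)},\ldots,w_l^{(J)})$ whose $i$-th entry is $(-1)^{i+1}$ times the $(l-1)\times(l-1)$ minor of $\bar{\mathbf{F}}$ obtained by deleting row $i$ and keeping only the columns indexed by $J$. Laplace expansion gives $\vec{w}^{(J)}\bar{\mathbf{F}} = \vec{0}$: columns in $J$ collapse through a repeated column, and columns outside $J$ produce $l\times l$ minors of $\bar{\mathbf{F}}$, all of which are zero. Writing $w_i^{(J)} = \pm \bar{d}_{l-1}\bar{c}_i^{(J)}$, where $c_i^{(J)}$ is the corresponding reduced $(l-1)\times(l-1)$ minor of $\mathbf{F}$ and $\bar{c}_i^{(J)} := c_i^{(J)}(f(\z_2),\z_2)$, and cancelling the nonzero factor $\bar{d}_{l-1}$ in the domain $k[\z_2]$, I obtain that each vector $\vec{u}^{(J)} := (\pm\bar{c}_1^{(J)},\ldots,\pm\bar{c}_l^{(J)})$ also lies in $M$. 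Now pick any nonzero element of $M$ and divide out the gcd of its entries to get $\vec{v} = (v_1,\ldots,v_l) \in M$ with $\gcd(v_1,\ldots,v_l) = 1$. Since $M$ has rank $1$ over $k[\z_2]$, every element of $M$ is a $k(\z_2)$-multiple of $\vec{v}$; coprimality of the $v_i$ and a short denominator-clearing argument force the multiplier to lie in $k[\z_2]$, so each $\bar{c}_i^{(J)}$ lies in $\langle v_1,\ldots,v_l\rangle$. Specialising the hypothesis $\langle h, c_1,\ldots,c_N\rangle = k[\z]$ at $z_1 = f(\z_2)$ yields $\langle \bar{c}_1,\ldots,\bar{c}_N\rangle = k[\z_2]$, whence $\langle v_1,\ldots,v_l\rangle = k[\z_2]$ and $\vec{v}$ is ZLP.

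To finish, I would apply Lemma \ref{QS-2} to complete $\vec{v}$ to a unimodular matrix $\mathbf{U}(\z_2) \in k[\z_2]^{l\times l}$ with $\vec{v}$ as its last row, and rescale its first row so that $\det(\mathbf{U}) = 1$. Since $\vec{v}\bar{\mathbf{F}} = \vec{0}$, the last row of $\mathbf{U}\mathbf{F} \in k[\z]^{l\times m}$ vanishes at $z_1 = f(\z_2)$, so $h = z_1 - f(\z_2)$ divides every one of its entries. Writing that last row as $h\vec{w}$ gives $\mathbf{U}\mathbf{F} = \mathrm{diag}(1,\ldots,1,h)\,\mathbf{F}_1$ for some $\mathbf{F}_1 \in k[\z]^{l\times m}$, and hence $\mathbf{F} = \mathbf{G}_1\mathbf{F}_1$ with $\mathbf{G}_1 := \mathbf{U}^{-1}\mathrm{diag}(1,\ldots,1,h)$ satisfying $\det(\mathbf{G}_1) = h$, which is the factorisation required by Definition \ref{matrix_factorization}. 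The hard part will be the middle step: the cofactor vector obtained from any single column subset $J$ is essentially never ZLP, so the argument must pool all the $\vec{u}^{(J)}$ together and crucially exploit the calibration $\vec{w}^{(J)} = \pm\bar{d}_{l-1}\vec{u}^{(J)}$ to transfer the hypothesis on $\langle h, c_1,\ldots,c_N\rangle$ into zero left primeness of the single primitive syzygy $\vec{v}$.
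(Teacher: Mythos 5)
Your argument is correct. Since the paper cites Lemma~\ref{theorem_Lu_2019} from the literature without reproducing its proof, the natural comparison is with the proof of Theorem~\ref{main-theorem-problem-2} (the direction $2\Rightarrow 1$), which follows the same Quillen--Suslin blueprint you use: specialize at $z_1=f(\z_2)$, produce a ZLP left syzygy of $\bar{\mathbf F}$, complete it to a unimodular $\mathbf U\in k[\z_2]^{l\times l}$, and read off $\mathbf F=\mathbf U^{-1}\mathrm{diag}(\ldots,h)\mathbf F_1$. The only substantive divergence is how zero left primeness of the syzygy vector is established. The paper invokes Lemma~\ref{RM_relation} (the transfer of reduced minors along a syzygy $\mathbf F_1\mathbf F_2=0$), applied to $\vec w/w_0$ and a full-column-rank $l\times(l-1)$ submatrix of $\bar{\mathbf F}$, to identify the entries of the primitive syzygy with the $(l-1)\times(l-1)$ column reduced minors of $\bar{\mathbf F}$. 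You instead unroll that lemma by hand for the rank-one case: you build all cofactor vectors $\vec w^{(J)}$, cancel $\bar d_{l-1}$ to land on $\vec u^{(J)}\in M$, use rank-one-ness plus primitivity and a denominator-clearing step to see each $\vec u^{(J)}$ is a $k[\z_2]$-multiple of your primitive $\vec v$, and then push the specialized unit-ideal hypothesis through to $\langle v_1,\dots,v_l\rangle=k[\z_2]$. This buys self-containedness (no dependence on the general reduced-minor transfer lemma) at the cost of some extra bookkeeping; the paper's citation of Lemma~\ref{RM_relation} is shorter and scales directly to the higher-corank setting of Theorem~\ref{main-theorem-1-2-2}, which your cofactor argument would need to be redone for with $r\times r$ cofactors. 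One tiny clean-up remark: primitivity ($\gcd(v_i)=1$) alone does not make $\vec v$ ZLP; you correctly use it only to force the multipliers $\lambda_J$ into $k[\z_2]$ and then deduce ZLP-ness from the ideal hypothesis, but that distinction is worth keeping explicit since the reader might conflate the two.
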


 Although Lemma \ref{theorem_Lu_2019} gives a criterion to determine whether $\mathbf{F}$ has a matrix factorization w.r.t. $h$, we found that there exist some polynomial matrices in $\mathcal{M}$ which do not satisfy the conditions of Lemma \ref{theorem_Lu_2019}, but still admit matrix factorizations w.r.t. $h$.

 \begin{example}\label{counter-example-1}
  Let
  \[\mathbf{F} =
  \begin{bmatrix}
    -2z_1z_2^2+z_1^2z_3+z_2^2z_3-z_1z_3^2+z_2z_3^2 & z_1^3-z_2^3-z_1^2z_3+z_2z_3^2 & z_1z_2-z_2z_3 &  z_2^2   \\
    -z_1z_2+z_3^2  &  -z_2^2+z_1z_3  &  0  & z_2
  \end{bmatrix}\]
  be a polynomial matrix in $\mathbb{C}[z_1,z_2,z_3]^{2\times 4}$, where $\mathbb{C}$ is the complex field.

  It is easy to compute that $d_2(\mathbf{F})=z_2(z_1-z_3)$ and $d_1(\mathbf{F})=1$. Let $h=z_1-z_3$, then $h\mid d_2(\mathbf{F})$ implies that $\mathbf{F}\in \mathcal{M}$. Obviously, $h\nmid d_1(\mathbf{F})$. Since $d_1(\mathbf{F})=1$, the entries in $\mathbf{F}$ are all the $1\times 1$ reduced minors of $\mathbf{F}$. Let $\prec_{\z}$ be the degree reverse lexicographic order, then the reduced \gr basis $G$ of the ideal generated by $h$ and all the $1\times 1$ reduced minors of $\mathbf{F}$ w.r.t. $\prec_{\z}$ is $\{z_1-z_3,z_2,z_3^2\}$. It follows from $G \neq \{1\}$ that Lemma \ref{theorem_Lu_2019} cannot be applied.

  However, $\mathbf{F}$ admits a matrix factorization w.r.t. $h$, i.e., there exist $\mathbf{G}_1\in \mathbb{C}[z_1,z_2,z_3]^{2\times 2}$ and $\mathbf{F}_1\in \mathbb{C}[z_1,z_2,z_3]^{2\times 4}$ such that
  \[\mathbf{F} = \mathbf{G}_1 \mathbf{F}_1 =
    \begin{bmatrix}
    h &  z_2 \\
       0      & 1
   \end{bmatrix}
   \begin{bmatrix}
    z_1z_3-z_2^2   &   z_1^2-z_2z_3  & z_2 &  0   \\
    -z_1z_2+z_3^2  &  -z_2^2+z_1z_3  &  0  & z_2
   \end{bmatrix},\]
  where ${\rm det}(\mathbf{G}_1)= h$.
 \end{example}

 From the above example we see that Problem \ref{problem_1} is far from being resolved. So, in the next section we make a detailed analysis on this problem.

\subsection{Matrix equivalence problem}

 Now we introduce the concept of the equivalence of two multivariate polynomial matrices.

 \begin{definition}
  Two polynomial matrices $\mathbf{F}_1\in k[\z]^{l\times m}$ and $\mathbf{F}_2\in k[\z]^{l\times m}$ are said to be equivalent if there exist two unimodular matrices $\mathbf{U}\in k[\z]^{l\times l}$ and $\mathbf{V}\in k[\z]^{m\times m}$ such that
  \begin{equation}\label{smith_equ_3}
   \mathbf{F}_1 = \mathbf{U} \mathbf{F}_2 \mathbf{V}.
  \end{equation}
 \end{definition}

 In fact, a univariate polynomial matrix is always equivalent to its Smith form. However, this result is not valid for the case of more than one variable, and there are many counter-examples (see, e.g., \cite{Lee1983Smith,Boudellioua2013Further}). Hence, people began to consider under what conditions multivariate polynomial matrices in $k[\z]$ are equivalent to simpler forms. \cite{Li2017On} investigated the equivalence problem for a class of multivariate polynomial matrices and obtained the following result.

 \begin{lemma}\label{Li_equivalent}
  Let $\mathbf{F}\in k[\z]^{l\times l}$ with ${\rm det}(\mathbf{F}) = h^r$, where $h=z_1-f(\z_2)$ and $r$ is a positive integer. Then $\mathbf{F}$ is equivalent to ${\rm diag}(h^r,1,\ldots,1)$ if and only if $h^r$ and all the $(l-1)\times (l-1)$ minors of $\mathbf{F}$ generate $k[\z]$.
 \end{lemma}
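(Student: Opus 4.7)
The plan is to prove both implications separately. For \textbf{necessity}, I would observe that if $\F = \mathbf{U}\mathbf{D}\mathbf{V}$ with $\mathbf{D} = \mathrm{diag}(h^r,1,\ldots,1)$ and $\mathbf{U}, \mathbf{V}$ unimodular, then the ideal $I_{l-1}(\F)$ of $(l-1)\times(l-1)$ minors is invariant under the equivalence and equals $I_{l-1}(\mathbf{D})$. The latter already contains the minor obtained by deleting the first row and column of $\mathbf{D}$, which equals $1$. Hence $I_{l-1}(\F) = k[\z]$, and trivially $(h^r) + I_{l-1}(\F) = k[\z]$.

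For \textbf{sufficiency}, I plan to work with the cokernel module $M = k[\z]^{1\times l}/k[\z]^{1\times l}\F$. Since $\det\F = h^r$ annihilates $M$ via the adjugate identity $\F \cdot \mathrm{adj}(\F) = h^r \mathbf{I}$, $M$ is naturally a module over $R = k[\z]/(h^r)$. First I would reduce modulo $h$: the matrix $\bar{\F} = \F \bmod h$ lies in $k[\z_2]^{l\times l}$ with $\det\bar{\F} = 0$, while the hypothesis $(h^r) + I_{l-1}(\F) = k[\z]$ reduces to $I_{l-1}(\bar{\F}) = k[\z_2]$. Translating through Fitting ideals, $M/hM$ has $Fitt_0 = (0)$ and $Fitt_1 = k[\z_2]$, so it is a rank-$1$ projective $k[\z_2]$-module; by the Quillen--Suslin theorem (Lemma \ref{QS-2}) applied over $k[\z_2]$, it is free, hence cyclic.

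Next I would lift cyclicity to $M$. If $\bar m$ generates $M/hM$ and $m \in M$ is any lift, then $M = Rm + hM$; iterating yields $M = Rm + h^k M$ for every $k$, and since $h^r M = 0$ in $R$ we obtain $M = Rm$. Writing $M \cong R/J$ and comparing the zeroth Fitting ideal computed from the cyclic presentation (which equals $J$) with the one from the $\bar{\F}$ presentation (which equals $(\det\bar{\F}) = (0)$ in $R$) forces $J = 0$, so $M \cong R \cong k[\z]/(h^r)$ as $k[\z]$-modules.

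The final step is to promote this cokernel isomorphism to a matrix equivalence. Lifting the isomorphism $M \cong \mathrm{coker}(\mathbf{D})$ and its inverse through the free presentations produces matrices $\mathbf{U}, \mathbf{V} \in k[\z]^{l\times l}$ with $\F\mathbf{V} = \mathbf{U}\mathbf{D}$; comparing determinants and using that $k[\z]$ is an integral domain gives $\det\mathbf{V} = \det\mathbf{U}$, and the composition with the inverse lift forces these determinants to be units, yielding the desired equivalence. The main obstacle is precisely this last step: the lifts must be chosen carefully within their affine space of freedom so that the resulting matrices are genuinely unimodular rather than merely invertible over the fraction field, a task amounting to a unimodular-completion argument in the spirit of Quillen--Suslin, and this is where the bulk of the technical work lies.
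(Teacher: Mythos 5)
Your overall plan is reasonable, and the first three quarters of the argument are sound. The necessity direction works: Fitting ideals are invariant under equivalence, and $I_{l-1}\bigl({\rm diag}(h^r,1,\ldots,1)\bigr)$ contains the complementary minor $1$, so $I_{l-1}(\mathbf{F})=k[\z]$; this is in fact equivalent to the stated condition ``$h^r$ and all $(l-1)\times(l-1)$ minors generate $k[\z]$'' because $h^r=\det\mathbf{F}$ already lies in $I_{l-1}(\mathbf{F})$ by Laplace expansion. The reduction modulo $h$, the identification of $M/hM$ as a rank-one projective over $k[\z_2]$, the appeal to Quillen--Suslin to get freeness, the Nakayama-type lift using $h^r M=0$ to deduce that $M$ is cyclic, and the Fitting-ideal comparison $M\cong k[\z]/(h^r)$ are all correct steps.

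The gap is in the last step, and it is a real one, not a routine finishing detail. Knowing $\operatorname{coker}(\mathbf{F})\cong\operatorname{coker}(\mathbf{D})$ and lifting $\phi$ and $\phi^{-1}$ through the free presentations only gives you $\mathbf{U}'\mathbf{U}=\mathbf{I}+\mathbf{F}\mathbf{W}$ (and similarly on the other side), because lifts of the identity on the cokernel are unique only modulo maps landing in the image; the determinant of $\mathbf{I}+\mathbf{F}\mathbf{W}$ is in no way forced to be a unit, so the sentence ``the composition with the inverse lift forces these determinants to be units'' is false as stated. You acknowledge this in the next breath, but then the claim being acknowledged is precisely the theorem: producing a genuinely unimodular lift amounts to finding a unimodular row $\vec{a}\in k[\z]^{1\times l}$ with $\vec{a}\mathbf{F}\equiv\mathbf{0}\pmod{h^r}$, i.e., to lifting a unimodular row over $k[\z]/(h^r)$ to one over $k[\z]$ within the row space of $\operatorname{adj}(\mathbf{F})$. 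This requires its own argument (stable-range/cancellation or a Quillen--Suslin-type lifting over the Artinian thickening), and without it the proof is incomplete. For comparison, the present paper does not prove this lemma (it is cited from Li, 2017), but its Theorem on the closely related target ${\rm diag}(h,\ldots,h,1,\ldots,1)$ avoids the cokernel-isomorphism bottleneck entirely: it directly builds a ZLP matrix over $k[\z_2]$ annihilating $\mathbf{F}(f,\z_2)$, completes it to a unimodular $\mathbf{U}$ via Quillen--Suslin, and concludes by a determinant count, so the unimodularity issue never arises. The analogue of that strategy for the target ${\rm diag}(h^r,1,\ldots,1)$ with $r>1$ needs a $z_1$-dependent annihilating row and is exactly the content your final step leaves open.
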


 For a given square matrix that does not satisfy the condition of Lemma \ref{Li_equivalent}, we use the following example to illustrate that it can be equivalent to another diagonal matrix.

 \begin{example}\label{counter-example-2}
  Let
  \[\mathbf{F} =
  \begin{bmatrix}
    z_1z_2-z_2^2+z_2z_3+z_2-z_3-1 & z_1z_2z_3-z_2^2z_3+z_1z_2-z_2^2+z_2z_3-z_3 & z_1z_2z_3-z_2^2z_3   \\
    z_1z_2-z_2^2+z_1-z_2+z_3+1  &  (z_1-z_2)(z_2z_3+2z_2+z_3+1)+z_3  & \mathbf{F}[2,3] \\
    z_1-z_2 & z_1z_3-z_2z_3+2z_1-2z_2 & z_1z_3-z_2z_3+z_1-z_2
  \end{bmatrix}\]
  be a polynomial matrix in $\mathbb{C}[z_1,z_2,z_3]^{3\times 3}$, where $\mathbf{F}[2,3] = z_1z_2z_3-z_2^2z_3+z_1z_2-z_2^2+z_1z_3-z_2z_3$, and $\mathbb{C}$ is the complex field.

  It is easy to compute that ${\rm det}(\mathbf{F})=(z_1-z_2)^2$. Let $h = z_1-z_2$ and $\prec_{\z}$ be the degree reverse lexicographic order, then the reduced \gr basis $G$ of the ideal generated by $h^2$ and all the $2\times 2$ minors of $\mathbf{F}$ w.r.t. $\prec_{\z}$ is $\{z_1-z_2\}$. It follows from $G \neq \{1\}$ that Lemma \ref{Li_equivalent} cannot be applied.

  However, $\mathbf{F}$ is equivalent to ${\rm diag}(h,h,1)$, i.e., there exist two unimodular polynomial matrices $\mathbf{U}\in \mathbb{C}[z_1,z_2,z_3]^{3\times 3}$ and $\mathbf{V}\in \mathbb{C}[z_1,z_2,z_3]^{3\times 3}$ such that
  \[\mathbf{F} = \mathbf{U} \cdot {\rm diag}(h,h,1) \cdot \mathbf{V} =\begin{bmatrix}
      0   &  z_2   &  z_2 - 1  \\
     z_2  &  z_2+1 &      1    \\
      1   &    1   &      0
   \end{bmatrix}
   \begin{bmatrix}
    h &    0      &    0    \\
    0 &    h      &    0    \\
    0 &    0      &    1
   \end{bmatrix}
   \begin{bmatrix}
    0     &   1     &  1   \\
    1     & z_3+1   & z_3 \\
    z_3+1 &   z_3   & 0
   \end{bmatrix}.\]
 \end{example}

 Based on the phenomenon of Example \ref{counter-example-2}, we consider the following matrix equivalence problem in this paper.

 \begin{problem}\label{problem_2}
  Let $\mathbf{F}\in k[\z]^{l\times l}$ with ${\rm det}(\mathbf{F}) = h^r$, where $h = z_1-f(\z_2)$ and $1 \leq r \leq l$. What is the sufficient and necessary condition for the equivalence of $\mathbf{F}$ and ${\rm diag}(\underbrace{h,\ldots,h}_{r}, \underbrace{1,\ldots,1}_{l-r})$?
 \end{problem}

\section{Factorization for polynomial matrices}\label{sec_matrix factorization}

 In this section, we first propose two criteria to judge whether $\mathbf{F}\in \mathcal{M}$ has a matrix factorization w.r.t. $h$, and then study the relationships among all existed factorization criteria.

\subsection{A sufficient condition}

 We first introduce two lemmas.

\begin{lemma}[\cite{Wang2004On}]\label{Lin-Bose Con}
 Let $\mathbf{F}\in k[\z]^{l\times m}$ with rank $r$, and all the $r\times r$ reduced minors of $\mathbf{F}$ generate $k[\z]$. Then there exist $\mathbf{G}_1\in k[\z]^{l\times r}$ and $\mathbf{F}_1\in k[\z]^{r\times m}$ such that $\mathbf{F} = \mathbf{G}_1\mathbf{F}_1$ with $\mathbf{F}_1$ being a ZLP matrix.
\end{lemma}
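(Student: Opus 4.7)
The plan is to reduce to the full row rank case and invoke the (now proved) Lin--Bose conjecture. When $r = l$, the statement is precisely the Lin--Bose conjecture, established constructively by \cite{Pommaret2001Solving,Srinivas2004A,Wang2004On,Liu2014The}, so I may assume $r < l$ and attempt to pull back a ZLP factorization produced by the full row rank case.

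First I would select $r$ rows of $\mathbf{F}$ whose submatrix $\tilde{\mathbf{F}} \in k[\z]^{r \times m}$ has full row rank, and verify that the $r \times r$ reduced minors of $\tilde{\mathbf{F}}$ still generate $k[\z]$. Since every other row of $\mathbf{F}$ is a $k(\z)$-linear combination of the rows of $\tilde{\mathbf{F}}$, Laplace expansion of the vanishing $(r{+}1) \times (r{+}1)$ minors of $\mathbf{F}$ lets me express each $r \times r$ minor of $\mathbf{F}$ in terms of the $r \times r$ minors of $\tilde{\mathbf{F}}$ up to a common polynomial factor, which transfers the unit-ideal condition to $\tilde{\mathbf{F}}$. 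Applying the full row rank case to $\tilde{\mathbf{F}}$ then yields $\tilde{\mathbf{F}} = \tilde{\mathbf{G}}_1 \mathbf{F}_1$ with $\mathbf{F}_1 \in k[\z]^{r \times m}$ a ZLP matrix. By Lemma \ref{QS-2} I can complete $\mathbf{F}_1$ to a unimodular matrix $\mathbf{U} \in k[\z]^{m \times m}$ whose first $r$ rows are $\mathbf{F}_1$. Because every row of $\mathbf{F}$ lies in the $k(\z)$-row span of $\mathbf{F}_1$, the polynomial matrix $\mathbf{F}\mathbf{U}^{-1} \in k[\z]^{l \times m}$ has vanishing last $m - r$ columns (these columns are polynomials that already vanish in $k(\z)$), so it takes the form $[\mathbf{G}_1 \mid \mathbf{0}]$, giving $\mathbf{F} = \mathbf{G}_1 \mathbf{F}_1$ with $\mathbf{G}_1 \in k[\z]^{l \times r}$, as required.

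The main obstacle is the first step, the inheritance of the reduced-minor hypothesis from $\mathbf{F}$ to $\tilde{\mathbf{F}}$, because $d_r(\tilde{\mathbf{F}})$ and $d_r(\mathbf{F})$ generally differ (with $d_r(\mathbf{F}) \mid d_r(\tilde{\mathbf{F}})$) and the two sets of reduced minors are not directly comparable. I would handle this through the determinantal identities relating minors of $\mathbf{F}$ to those of $\tilde{\mathbf{F}}$, tracking the common denominator arising from the $k(\z)$-transition matrix together with Lemma \ref{RM_relation}-style invariance so that the reduced-minor ideal of $\tilde{\mathbf{F}}$ can be shown to equal $k[\z]$ independently of the choice of rows. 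Should this route prove cumbersome, an alternative would be to bypass $\tilde{\mathbf{F}}$ entirely and re-run the Lin--Bose argument (e.g.\ Pommaret's constructive proof via Quillen--Suslin and successive rank reductions) directly on the rank-deficient $\mathbf{F}$, extracting the ZLP factor one column at a time.
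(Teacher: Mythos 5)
The paper does not give its own proof of this lemma; it is cited verbatim from \cite{Wang2004On} as the generalized Lin--Bose theorem, so there is no internal argument to compare yours against. Judged on its own merits, your reduction to the full row rank case is sound, and the ``main obstacle'' you flag---inheriting the reduced-minor condition on the chosen full row rank submatrix $\tilde{\mathbf{F}}$---does in fact go through, though by a slightly cleaner route than the Laplace-expansion bookkeeping you sketch. Write $\mathbf{F}=\mathbf{A}\tilde{\mathbf{F}}$ with $\mathbf{A}\in k(\z)^{l\times r}$ the (unique) transition matrix, and let $a_{IJ}$ denote the $r\times r$ minor of $\mathbf{F}$ on row set $I$ and column set $J$, with $R$ the row set of $\tilde{\mathbf{F}}$. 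Binet--Cauchy gives $a_{IJ}=\lambda_I\,a_{RJ}$ where $\lambda_I=\det(\mathbf{A}_I)\in k(\z)$ depends only on $I$. Writing $\lambda_I=p_I/q_I$ in lowest terms, the identity $q_I a_{IJ}=p_I a_{RJ}$ together with $\gcd(p_I,q_I)=1$ and $\gcd_J a_{RJ}=d_r(\tilde{\mathbf{F}})$ forces $q_I\mid d_r(\tilde{\mathbf{F}})$, so $e_I:=\lambda_I\,d_r(\tilde{\mathbf{F}})$ is a genuine polynomial and $a_{IJ}=e_I\tilde{b}_J$ with $\tilde{b}_J$ the reduced minors of $\tilde{\mathbf{F}}$. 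Since $\gcd_J\tilde{b}_J=1$ one gets $d_r(\mathbf{F})=\gcd_I e_I$, hence every reduced minor $b_{IJ}=(e_I/\gcd_I e_I)\,\tilde{b}_J$ of $\mathbf{F}$ lies in $\langle\tilde{b}_J\rangle_J$, and the unit-ideal hypothesis on $\mathbf{F}$ transfers to $\tilde{\mathbf{F}}$ as you wanted. The remaining steps---applying the full row rank Lin--Bose theorem to $\tilde{\mathbf{F}}$, completing the resulting ZLP matrix $\mathbf{F}_1$ to a unimodular $\mathbf{U}$ via Lemma~\ref{QS-2}, and noting that $\mathbf{F}\mathbf{U}^{-1}=[\mathbf{G}_1\mid\mathbf{0}]$ with $\mathbf{G}_1$ polynomial because the rows of $\mathbf{F}$ lie in the $k(\z)$-span of $\mathbf{F}_1$---are all correct as written.
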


\begin{lemma}[\cite{Lin2001Factorizations}] \label{lemma_zero}
 Let $p\in k[\z]$ and $f(\z_2)\in k[\z_2]$. Then $z_1 - f(\z_2)$ is a divisor of $p$ if and only if $p(f,\z_2)$ is a zero polynomial in $k[\z_2]$.
\end{lemma}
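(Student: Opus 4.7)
The plan is to prove the two directions of the biconditional separately, using the substitution homomorphism $z_1 \mapsto f(\z_2)$ together with polynomial division in the variable $z_1$.

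For the forward direction, I would simply assume $z_1 - f(\z_2)$ divides $p$, write $p(\z) = (z_1 - f(\z_2)) \cdot q(\z)$ for some $q \in k[\z]$, and substitute $z_1 = f(\z_2)$. The factor $z_1 - f(\z_2)$ vanishes under this substitution, and since $k[\z]$ is an integral domain the entire product becomes zero, giving $p(f,\z_2) = 0$ in $k[\z_2]$.

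For the reverse direction, the key idea is to regard $k[\z]$ as the univariate polynomial ring $k[\z_2][z_1]$, so that $p$ becomes a polynomial in $z_1$ with coefficients in $k[\z_2]$. Since $z_1 - f(\z_2)$ is monic of degree one in $z_1$, the division algorithm applies over the coefficient ring $k[\z_2]$ (monicity is what makes division valid even though $k[\z_2]$ is not a field) and produces
$$p(\z) \;=\; (z_1 - f(\z_2)) \cdot q(\z) \;+\; r(\z_2),$$
where $q \in k[\z]$ and the remainder $r \in k[\z_2]$ has degree strictly less than one in $z_1$. Substituting $z_1 = f(\z_2)$ kills the first summand, so the hypothesis $p(f,\z_2) = 0$ forces $r(\z_2) = 0$. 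Therefore $p = (z_1 - f(\z_2)) \cdot q$, and $z_1 - f(\z_2)$ divides $p$ as claimed.

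I do not anticipate any real obstacle here; the only point worth underlining is the choice of arranging $k[\z]$ as $k[\z_2][z_1]$ rather than the reverse, which is what makes the divisor monic in the active variable and legitimizes the division step over a non-field coefficient ring.
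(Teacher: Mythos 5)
Your proof is correct. Note that the paper states this lemma with a citation to Lin--Bose (2001) and does not reproduce a proof, so there is no in-paper argument to compare against; your argument is the standard one (essentially the Factor Theorem over $k[\z_2]$). Both directions are sound, and you rightly flag the one subtle point: viewing $k[\z]$ as $k[\z_2][z_1]$ so that the divisor $z_1 - f(\z_2)$ is monic in $z_1$, which is exactly what licenses the division algorithm over the non-field coefficient ring $k[\z_2]$ and forces the remainder to lie in $k[\z_2]$. One tiny nit: in the forward direction the appeal to $k[\z]$ being an integral domain is unnecessary, since after the substitution $z_1 \mapsto f(\z_2)$ the product is $0 \cdot q(f,\z_2)$, which vanishes in any commutative ring.
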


 Now, we propose a sufficient condition to factorize $\F$ w.r.t. $h$.

 \begin{theorem}\label{main-theorem-problem-1}
  Let $\mathbf{F}\in \mathcal{M}$ and $W = {\rm Im}(\mathbf{F}(f,\z_2))$. If $Fitt_{l-2}(W)=0$ and $Fitt_{l-1}(W)=\langle d \rangle$ with $d\in k[\z_2]\setminus \{0\}$, then $\mathbf{F}$ admits a matrix factorization w.r.t. $h$.
 \end{theorem}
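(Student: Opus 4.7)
The plan is to distill from the Fitting-ideal hypotheses a unimodular left annihilator of $\mathbf{F}(f,\z_2)$ over $k[\z_2]$ and then, in the spirit of the proofs of Lemmas \ref{Lin-Bose Con} and \ref{theorem_Lu_2019}, invoke Quillen-Suslin to split off a factor of $h$. Since $h \mid d_l(\mathbf{F})$, Lemma \ref{lemma_zero} kills every $l\times l$ minor of $\mathbf{F}(f,\z_2)$, so ${\rm rank}(W) \le l-1$. Let $\mathbf{H}\in k[\z_2]^{t\times l}$ be a presentation matrix of $W$ whose rows generate ${\rm Syz}(W)$, so that $Fitt_{l-j}(W) = I_j(\mathbf{H})$. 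The hypothesis $Fitt_{l-2}(W)=0$ forces every $2\times 2$ minor of $\mathbf{H}$ to vanish, hence ${\rm rank}(\mathbf{H})\le 1$; the hypothesis $Fitt_{l-1}(W)=\langle d\rangle\ne 0$ supplies a nonzero entry, so ${\rm rank}(\mathbf{H})=1$, and Lemma \ref{rank-syz} gives ${\rm rank}(W)=l-1$.

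I would next exploit the rank-one structure of $\mathbf{H}$ over the UFD $k[\z_2]$. Pulling the gcd out of a nonzero row produces a primitive row $\vec{v}\in k[\z_2]^{1\times l}$, and the rank-one condition forces every row of $\mathbf{H}$ to be a $k[\z_2]$-multiple of $\vec{v}$; thus $\mathbf{H}=\vec{c}\,\vec{v}$ for some column $\vec{c}=(c_1,\ldots,c_t)^{\rm T}\in k[\z_2]^t$. Then $\langle d\rangle = Fitt_{l-1}(W)=\langle c_iv_j\rangle = \langle c_i\rangle\cdot\langle v_j\rangle$. Because $\vec{v}$ is primitive and $d\mid c_iv_j$ for all $i,j$, a standard valuation argument in the UFD $k[\z_2]$ forces $d\mid c_i$ for every $i$; writing $c_i=dc'_i$ and cancelling $d$ yields $\langle c'_i\rangle\cdot\langle v_j\rangle=k[\z_2]$. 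Since a product of ideals is contained in each factor, this forces $\langle v_j\rangle=k[\z_2]$, that is, $\vec{v}$ is a unimodular (ZLP) row over $k[\z_2]$.

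The final assembly is standard. Each row $c_i\vec{v}$ of $\mathbf{H}$ lies in ${\rm Syz}(W)$, so $c_i\,(\vec{v}\cdot\mathbf{F}(f,\z_2))=\vec{0}$ in the torsion-free module $k[\z_2]^{1\times m}$; picking any $c_i\ne 0$ gives $\vec{v}\cdot\mathbf{F}(f,\z_2)=\vec{0}$, and Lemma \ref{lemma_zero} supplies $\vec{g}\in k[\z]^{1\times m}$ with $\vec{v}\cdot\mathbf{F}=h\vec{g}$. By Lemma \ref{QS-2}, extend $\vec{v}$ to a unimodular matrix $\mathbf{U}\in k[\z_2]^{l\times l}$ having $\vec{v}$ as its first row, rescaled so that ${\rm det}(\mathbf{U})=1$. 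Then $\mathbf{U}\mathbf{F}$ has first row $h\vec{g}$, whence $\mathbf{U}\mathbf{F}={\rm diag}(h,1,\ldots,1)\,\mathbf{F}_1$ where $\mathbf{F}_1\in k[\z]^{l\times m}$ is obtained from $\mathbf{U}\mathbf{F}$ by replacing the first row by $\vec{g}$; setting $\mathbf{G}_1:=\mathbf{U}^{-1}\cdot{\rm diag}(h,1,\ldots,1)$ gives $\mathbf{F}=\mathbf{G}_1\mathbf{F}_1$ with ${\rm det}(\mathbf{G}_1)=h$. The main obstacle, in my view, is the middle paragraph: extracting the unimodularity of the direction vector $\vec{v}$ from the principality of $Fitt_{l-1}(W)$, which relies on the rank-one outer-product decomposition of $\mathbf{H}$ that is available over the UFD $k[\z_2]$ together with the elementary fact that a product of proper ideals is proper.
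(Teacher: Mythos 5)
Your proof is correct and the overall plan coincides with the paper's: use $Fitt_{l-2}(W)=0$ and $Fitt_{l-1}(W)\neq 0$ to force ${\rm rank}(\mathbf{H})=1$, extract a ZLP left annihilator of $\mathbf{F}(f,\z_2)$ over $k[\z_2]$, complete it to a unimodular matrix by Quillen-Suslin, and peel off the $h$-factor. The one place where your argument genuinely diverges is in producing the ZLP vector. The paper first shows $d$ and $d_1(\mathbf{H})$ agree up to a nonzero constant (giving $\langle b_1,\ldots,b_\beta\rangle=k[\z_2]$ for the $1\times1$ reduced minors of $\mathbf{H}$), and then invokes the Lin--Bose factorization (Lemma \ref{Lin-Bose Con}, itself a Quillen--Suslin--type theorem) to write $\mathbf{H}=\vec{u}\vec{w}$ with $\vec{w}$ ZLP. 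You instead build the rank-one outer product $\mathbf{H}=\vec{c}\,\vec{v}$ directly from the UFD structure of $k[\z_2]$ (take the primitive direction of a nonzero row, then show the cofactors $c_i$ are polynomial because $\vec{v}$ is primitive), and you read off the ZLP property of $\vec{v}$ from the ideal identity $\langle d\rangle = \langle c_i\rangle\cdot\langle v_j\rangle$ together with cancellation by $d$ and the fact that a product of ideals lies in each factor. This is a bit more elementary and self-contained: it trades the second appeal to Lin--Bose (which the paper uses in its rank-one special case) for a short valuation/cancellation argument, while the endgame (Lemma \ref{lemma_zero} plus Lemma \ref{QS-2}) is identical in both proofs. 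One cosmetic remark: you do not actually need to rescale $\mathbf{U}$ to have determinant $1$, only unimodularity, but that does no harm.
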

 \begin{proof}
  Let $k[\z_2]^{1\times s} \xlongrightarrow{\phi} k[\z_2]^{1\times l} \rightarrow W \rightarrow 0$ be a presentation of $W$, and $\mathbf{H}\in k[\z_2]^{s\times l}$ be a matrix corresponding to the linear mapping $\phi$. Then ${\rm Syz}(W) = {\rm Im}(\mathbf{H})$.

  It follows from $Fitt_{l-2}(W)=0$ that all the $2\times 2$ minors of $\mathbf{H}$ are zero polynomials. Then, ${\rm rank}(\mathbf{H}) \leq 1$. Moreover, $Fitt_{l-1}(W)=\langle d \rangle$ with $d\in k[\z_2]\setminus \{0\}$ implies that ${\rm rank}(\mathbf{H}) \geq 1$. As a consequence, we have ${\rm rank}(\mathbf{H}) = 1$.

  Let $a_1,\ldots,a_\beta \in k[\z_2]$ and $b_1,\ldots,b_\beta \in k[\z_2]$ be all the $1\times 1$ minors and $1\times 1$ reduced minors of $\mathbf{H}$, respectively. Then, $a_i = d_1(\mathbf{H})\cdot b_i$ for $i=1,\ldots, \beta$. Since $\langle a_1,\ldots,a_\beta \rangle = \langle d \rangle$, it is obvious that $d \mid d_1(\mathbf{H})$. Moreover, we have $d = \sum_{i=1}^{\beta}c_i a_i$ for some $c_i\in k[\z_2]$. Thus $d = d_1(\mathbf{H})\cdot ( \sum_{i=1}^{\beta}c_i b_i)$. This implies that $d_1(\mathbf{H}) \mid d$. Hence $d = \delta\cdot d_1(\mathbf{H})$, where $\delta$ is a nonzero constant. Therefore, $\langle b_1,\ldots,b_\beta \rangle = k[\z_2]$.

  According to Lemma \ref{Lin-Bose Con}, there exist $\vec{u}\in k[\z_2]^{s\times 1}$ and $\vec{w}\in k[\z_2]^{1\times l}$ such that $\mathbf{H} = \vec{u}\vec{w}$ with $\vec{w}$ being a ZLP vector. It follows from ${\rm Syz}(W) = {\rm Im}(\mathbf{H})$ that $\vec{u}\vec{w}\mathbf{F}(f,\z_2) = \mathbf{0}_{s\times m}$. Since $\vec{u}$ is a column vector, we have $\vec{w}\mathbf{F}(f,\z_2) = \mathbf{0}_{1\times m}$.

 Using the Quillen-Suslin theorem, we can construct a unimodular matrix $\mathbf{U}\in k[\z_2]^{l\times l}$ such that $\vec{w}$ is its first row. Let $\mathbf{F}_0 = \mathbf{U}\mathbf{F}$, then the first row of $\mathbf{F}_0(f,\z_2)= \mathbf{U}\mathbf{F}(f,\z_2)$ is zero vector. By Lemma \ref{lemma_zero}, $h$ is a common divisor of the polynomials in the first row of $\mathbf{F}_0$, thus
 \[\mathbf{F}_0=\mathbf{U}\mathbf{F} = \mathbf{D} \mathbf{F}_1={\rm diag}(h, \underbrace{1,\ldots,1}_{l-1})\cdot
   \begin{bmatrix}
     \bar{f}_{11}  & \bar{f}_{12}& \cdots   & \bar{f}_{1m}  \\
         \vdots    &     \vdots  & \vdots   &  \vdots    \\
     \bar{f}_{l1}  & \bar{f}_{l2}& \cdots   & \bar{f}_{lm}
   \end{bmatrix}.
 \]
 Consequently, we can now derive the matrix factorization of $\mathbf{F}$ w.r.t. $h$, i.e., $\mathbf{F}= \mathbf{G}_1\mathbf{F}_1$, where $\mathbf{G}_1= \mathbf{U}^{-1}\mathbf{D} \in k[\z]^{l\times l}$, $\mathbf{F}_1\in k[\z]^{l\times m}$ and ${\rm det}(\mathbf{G}_1)=h$.
 \end{proof}

 \subsection{A necessary and sufficient condition for a special case}

 In Theorem \ref{main-theorem-problem-1}, the conditions $Fitt_{l-2}(W)$ $=0$ and $Fitt_{l-1}(W)=\langle d \rangle$ imply that the rank of $\F(f,\z_2)$ is $l-1$. In the following, we first give a lemma about  the necessary and sufficient condition for ${\rm rank}(\F(f,\z_2)) = l-1$.

 \begin{lemma}\label{rank_matrix}
  Let $\mathbf{F}\in \mathcal{M}$. Then ${\rm rank}(\F(f,\z_2)) = l-1$ if and only if $h\nmid d_{l-1}(\F)$.
 \end{lemma}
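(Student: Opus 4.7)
The plan is to pass between minors of $\mathbf{F}$ over $k[\mathbf{z}]$ and minors of $\mathbf{F}(f,\mathbf{z}_2)$ over $k[\mathbf{z}_2]$ via the substitution $z_1 \mapsto f$, and then invoke Lemma \ref{lemma_zero} to translate ``$h$ divides a polynomial'' into ``its evaluation at $z_1=f$ vanishes.'' The key observation is that for any $i\times i$ submatrix of $\mathbf{F}$, the $i\times i$ minor commutes with the substitution: if $m$ is such a minor of $\mathbf{F}$, then $m(f,\mathbf{z}_2)$ is the corresponding minor of $\mathbf{F}(f,\mathbf{z}_2)$.

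First I would establish the upper bound $\mathrm{rank}(\mathbf{F}(f,\mathbf{z}_2)) \le l-1$. Since $\mathbf{F}\in\mathcal{M}$, we have $h\mid d_l(\mathbf{F})$, so $h$ divides every $l\times l$ minor of $\mathbf{F}$. By Lemma \ref{lemma_zero}, each such minor evaluates to zero at $z_1=f$, hence every $l\times l$ minor of $\mathbf{F}(f,\mathbf{z}_2)$ is the zero polynomial in $k[\mathbf{z}_2]$. Because $k[\mathbf{z}_2]$ is an integral domain, this forces $\mathrm{rank}(\mathbf{F}(f,\mathbf{z}_2)) \le l-1$.

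Next I would characterize when equality is achieved. The rank of $\mathbf{F}(f,\mathbf{z}_2)$ equals $l-1$ precisely when some $(l-1)\times(l-1)$ minor of $\mathbf{F}(f,\mathbf{z}_2)$ is a nonzero element of $k[\mathbf{z}_2]$. Equivalently, there exists an $(l-1)\times(l-1)$ minor $m$ of $\mathbf{F}$ with $m(f,\mathbf{z}_2)\ne 0$. Applying Lemma \ref{lemma_zero} in the contrapositive form, this says there exists some $(l-1)\times(l-1)$ minor $m$ of $\mathbf{F}$ with $h\nmid m$, which in turn is equivalent to $h\nmid d_{l-1}(\mathbf{F})$ (since $d_{l-1}(\mathbf{F})$ is the gcd of all such minors, $h$ divides the gcd iff $h$ divides every minor).

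Combining the two paragraphs: $\mathrm{rank}(\mathbf{F}(f,\mathbf{z}_2))=l-1$ iff $h\nmid d_{l-1}(\mathbf{F})$. There is no serious obstacle here; the statement is essentially the content of Lemma \ref{lemma_zero} applied uniformly to the maximal and the $(l-1)\times(l-1)$ minors, and the only care needed is the (routine) remark that taking an $i\times i$ minor commutes with the ring homomorphism $k[\mathbf{z}]\to k[\mathbf{z}_2]$ sending $z_1$ to $f(\mathbf{z}_2)$.
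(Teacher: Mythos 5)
Your proof is correct and takes essentially the same route as the paper's: bound the rank by $l-1$ using $h\mid d_l(\mathbf{F})$ together with Lemma \ref{lemma_zero}, then observe that taking $(l-1)\times(l-1)$ minors commutes with the substitution $z_1\mapsto f$, so $\mathrm{rank}(\mathbf{F}(f,\mathbf{z}_2))=l-1$ iff some such minor of $\mathbf{F}$ survives the substitution iff $h$ fails to divide the gcd $d_{l-1}(\mathbf{F})$. The only cosmetic difference is that you make explicit the integral-domain remark and the contrapositive phrasing, which the paper leaves implicit.
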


 \begin{proof}
  Since $h\mid d_l(\F)$, we have ${\rm rank}(\F(f,\z_2)) \leq l-1$. Let $a_1,\ldots,a_\gamma \in k[\z]$ be all the $(l-1)\times (l-1)$ minors of $\F$, then $a_1(f,\z_2),\ldots,a_\gamma(f,\z_2)$ are all the $(l-1)\times (l-1)$ minors of $\F(f,\z_2)$.

  Assume that ${\rm rank}(\F(f,\z_2)) = l-1$, then there is at least one integer $i$ with $1\leq i \leq \gamma$ such that $a_i(f,\z_2)$ is a nonzero polynomial. According to Lemma \ref{lemma_zero}, $h$ is not a divisor of $a_i$. Obviously, $h\nmid d_{l-1}(\F)$.

  Suppose $h\nmid d_{l-1}(\F)$. If ${\rm rank}(\F(f,\z_2)) < l-1$, then $a_j(f,\z_2) = 0$, $j=1,\ldots,\gamma$. This implies that $h$ is a common divisor of $a_1,\ldots,a_\gamma$, which leads to a contradiction. Therefore, ${\rm rank}(\F(f,\z_2)) = l-1$.
 \end{proof}

 \begin{lemma}[\cite{Lin2005On}]\label{lin-2005-lemma}
  Let $\mathbf{G}\in k[\z]^{l\times l}$ with ${\rm det}(\mathbf{G})=h$, then there is a ZLP vector $\vec{w}\in k[\z_2]^{1\times l}$ such that $\vec{w}\mathbf{G}(f,\z_2) = \mathbf{0}_{1\times l}$.
 \end{lemma}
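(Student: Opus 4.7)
I would prove the lemma in three steps, with the main obstacle being the first step.

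First, I would show that $\langle h\rangle + I_{l-1}(\mathbf{G}) = k[\z]$, equivalently (after modding out by $\langle h\rangle$ via $k[\z]/\langle h\rangle\cong k[\z_2]$, $z_1\mapsto f$) that $I_{l-1}(\mathbf{G}(f,\z_2)) = k[\z_2]$. Assume for contradiction that this fails; since $k$ is algebraically closed, Hilbert's Nullstellensatz supplies a common zero $P=(P_1,Q)\in k^n$ of $h$ and every $(l-1)\times(l-1)$ minor of $\mathbf{G}$. Then $h(P)=0$ forces $P_1=f(Q)$, and $\mathbf{G}(P)$ has rank at most $l-2$. Restricting to the line through $P$ in the $z_1$-direction yields a matrix $\mathbf{G}(z_1,Q)\in k[z_1]^{l\times l}$ with determinant $z_1-f(Q)=z_1-P_1$. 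Writing $\mathbf{G}(z_1,Q)=\mathbf{G}(P)+(z_1-P_1)\tilde{\mathbf{G}}(z_1)$ and expanding the determinant multilinearly in columns, the constant term equals $\det\mathbf{G}(P)=0$ and the $(z_1-P_1)^1$ coefficient is a linear combination of $(l-1)\times(l-1)$ minors of $\mathbf{G}(P)$, which all vanish. Hence $(z_1-P_1)^2$ divides the linear polynomial $z_1-P_1$, a contradiction.

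Second, with $I_{l-1}(\mathbf{G}(f,\z_2))=k[\z_2]$ in hand, $\mathbf{G}(f,\z_2)$ has rank exactly $l-1$ and its $(l-1)\times(l-1)$ reduced minors generate $k[\z_2]$ (since their GCD is forced to be a unit). Applying Lemma \ref{Lin-Bose Con} to $\mathbf{G}(f,\z_2)$, I obtain a factorization $\mathbf{G}(f,\z_2)=\mathbf{P}\mathbf{Q}$ with $\mathbf{P}\in k[\z_2]^{l\times(l-1)}$ and $\mathbf{Q}\in k[\z_2]^{(l-1)\times l}$ a ZLP matrix. Because $\mathbf{P}$ has exactly $l-1$ columns, the Cauchy--Binet formula forces every $(l-1)\times(l-1)$ minor of $\mathbf{P}\mathbf{Q}$ to split as a maximal minor of $\mathbf{P}$ times a maximal minor of $\mathbf{Q}$; hence $I_{l-1}(\mathbf{G}(f,\z_2))=I_{l-1}(\mathbf{P})\cdot I_{l-1}(\mathbf{Q})$. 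Since the left-hand side and $I_{l-1}(\mathbf{Q})$ both equal $k[\z_2]$, and in any commutative ring a product of ideals equals the unit ideal only when each factor does, I conclude $I_{l-1}(\mathbf{P}) = k[\z_2]$.

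Third, let $p_i=\det(\mathbf{P}_{\hat{i},*})$ denote the maximal minor of $\mathbf{P}$ obtained by deleting row $i$, and set $\vec{w}=((-1)^{i+1}p_i)_{i=1}^{l}\in k[\z_2]^{1\times l}$. For each column index $j$, the $l\times l$ matrix $[\mathbf{P}_{*,j},\mathbf{P}]$ has two identical columns and so its determinant vanishes; Laplace expansion along its first column then yields $\vec{w}\mathbf{P}=\mathbf{0}$, so that $\vec{w}\mathbf{G}(f,\z_2)=\vec{w}\mathbf{P}\mathbf{Q}=\mathbf{0}$. The entries of $\vec{w}$ are, up to sign, the maximal minors of $\mathbf{P}$, which by the second step generate $k[\z_2]$, so $\vec{w}$ is ZLP, completing the proof.
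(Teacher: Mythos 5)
The paper cites this result from \cite{Lin2005On} and does not reproduce a proof, so there is no in-paper argument for me to compare against; I am therefore assessing your proof on its own merits. Your argument is correct. One stylistic observation: the univariate-specialization reasoning in your first step is precisely the content of what the paper already quotes as Lemma \ref{Lin-2001-rank}. Your $\mathbf{G}(z_1,Q)$ is a full-rank univariate polynomial matrix whose determinant $z_1-P_1$ has $P_1$ as a simple zero, so Lemma \ref{Lin-2001-rank} gives ${\rm rank}(\mathbf{G}(P))=l-1$ at once, contradicting the assumed vanishing of all the $(l-1)\times(l-1)$ minors; you could have invoked that lemma and dispensed with the multilinear-expansion computation (which is, in effect, a re-derivation of it). The remaining two steps are sound: since $\mathbf{P}$ has exactly $l-1$ columns, Cauchy--Binet degenerates to a single term so that $I_{l-1}(\mathbf{P}\mathbf{Q})=I_{l-1}(\mathbf{P})\,I_{l-1}(\mathbf{Q})$, and a product of ideals being the unit ideal forces each factor to be the unit ideal; the cofactor vector $\vec{w}$ with entries $(-1)^{i+1}p_i$ then annihilates $\mathbf{P}$ on the left by the repeated-column Laplace argument and is ZLP because its entries are, up to sign, the maximal minors of $\mathbf{P}$.
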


 Now, we give a partial solution to Problem \ref{problem_1}.

 \begin{theorem}\label{main-theorem-problem-2}
  Let $\mathbf{F}\in \mathcal{M}$ with $h\nmid d_{l-1}(\F)$. Then the following are equivalent:
  \begin{enumerate}
    \item $\mathbf{F}$ admits a matrix factorization w.r.t. $h$;
    \item all the $(l-1)\times (l-1)$ column reduced minors of $\F(f,\z_2)$ generate $k[\z_2]$.
  \end{enumerate}
 \end{theorem}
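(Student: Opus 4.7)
The plan is to reduce both implications to the existence of a ZLP row vector $\vec{w}\in k[\z_2]^{1\times l}$ satisfying $\vec{w}\mathbf{F}(f,\z_2)=\mathbf{0}_{1\times l}$. The hypothesis $h\nmid d_{l-1}(\mathbf{F})$ combined with Lemma \ref{rank_matrix} guarantees that ${\rm rank}(\mathbf{F}(f,\z_2))=l-1$, so any full column rank $l\times(l-1)$ submatrix $\bar{\mathbf{F}}$ of $\mathbf{F}(f,\z_2)$ exists, and by definition its $(l-1)\times(l-1)$ reduced minors are the $(l-1)\times(l-1)$ column reduced minors of $\mathbf{F}(f,\z_2)$. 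The bridge between condition (2) and such a syzygy vector will be supplied by Lemma \ref{RM_relation} applied to the relation $\vec{w}\bar{\mathbf{F}}=\mathbf{0}_{1\times(l-1)}$ with $\F_1=\vec{w}$ (a $1\times l$ matrix of full row rank) and $\F_2=\bar{\mathbf{F}}$: this identifies, up to sign, the $1\times 1$ reduced minors of $\vec{w}$ (i.e. the entries of $\vec{w}$ after pulling out their gcd) with the $(l-1)\times(l-1)$ reduced minors of $\bar{\mathbf{F}}$.

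For the direction $(1)\Rightarrow(2)$, I would start from a factorization $\mathbf{F}=\mathbf{G}_1\mathbf{F}_1$ with ${\rm det}(\mathbf{G}_1)=h$ and invoke Lemma \ref{lin-2005-lemma} to produce a ZLP vector $\vec{w}\in k[\z_2]^{1\times l}$ with $\vec{w}\mathbf{G}_1(f,\z_2)=\mathbf{0}_{1\times l}$. Multiplying on the right by $\mathbf{F}_1(f,\z_2)$ gives $\vec{w}\mathbf{F}(f,\z_2)=\mathbf{0}_{1\times m}$, so in particular $\vec{w}\bar{\mathbf{F}}=\mathbf{0}_{1\times(l-1)}$. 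The bridge lemma then forces the $(l-1)\times(l-1)$ reduced minors of $\bar{\mathbf{F}}$ to coincide (up to sign) with the entries of $\vec{w}$, and ZLP-ness of $\vec{w}$ yields condition (2).

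For $(2)\Rightarrow(1)$, I would construct the candidate $\vec{w}$ by the cofactor identity: setting $m_i$ to be the $(l-1)\times(l-1)$ minor of $\bar{\mathbf{F}}$ obtained by deleting its $i$-th row, the vector $\vec{v}=\bigl((-1)^{i+l}m_i\bigr)_{i=1}^{l}$ satisfies $\vec{v}\bar{\mathbf{F}}=\mathbf{0}_{1\times(l-1)}$, since plugging any column of $\bar{\mathbf{F}}$ into an augmented $l\times l$ determinant produces a matrix with two equal columns. Dividing through by $d_{l-1}(\bar{\mathbf{F}})\in k[\z_2]\setminus\{0\}$ yields a vector $\vec{w}\in k[\z_2]^{1\times l}$ whose entries are, up to sign, the $(l-1)\times(l-1)$ column reduced minors of $\mathbf{F}(f,\z_2)$; hypothesis (2) then says $\vec{w}$ is ZLP. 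Because the columns of $\mathbf{F}(f,\z_2)$ lie in the $k(\z_2)$-column span of $\bar{\mathbf{F}}$ (both have rank $l-1$), the relation $\vec{w}\bar{\mathbf{F}}=\mathbf{0}_{1\times(l-1)}$ upgrades to $\vec{w}\mathbf{F}(f,\z_2)=\mathbf{0}_{1\times m}$.

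Once the ZLP syzygy $\vec{w}$ is in hand, the factorization follows exactly the template of Theorem \ref{main-theorem-problem-1}: the Quillen-Suslin theorem (Lemma \ref{QS-2}) extends $\vec{w}$ to a unimodular matrix $\mathbf{U}\in k[\z_2]^{l\times l}$ with first row $\vec{w}$; the first row of $\mathbf{U}\mathbf{F}\in k[\z]^{l\times m}$ then vanishes after the substitution $z_1=f$, so Lemma \ref{lemma_zero} lets one peel off a factor of $h$ from that row, giving $\mathbf{U}\mathbf{F}={\rm diag}(h,1,\ldots,1)\mathbf{F}_1$ and hence $\mathbf{F}=\mathbf{G}_1\mathbf{F}_1$ with $\mathbf{G}_1=\mathbf{U}^{-1}{\rm diag}(h,1,\ldots,1)$ of determinant $h$. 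I expect the only delicate step to be the clean use of Lemma \ref{RM_relation} to translate between the $1\times 1$ reduced minors of a syzygy row and the $(l-1)\times(l-1)$ column reduced minors of $\mathbf{F}(f,\z_2)$; once this dictionary is fixed, everything else reduces to arguments already deployed in the proof of Theorem \ref{main-theorem-problem-1}.
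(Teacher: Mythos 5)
Your proof is correct and follows essentially the same route as the paper: Lemma~\ref{lin-2005-lemma} plus Lemma~\ref{RM_relation} for $(1)\Rightarrow(2)$, and a ZLP syzygy vector extended by Quillen--Suslin plus Lemma~\ref{lemma_zero} for $(2)\Rightarrow(1)$. The one small divergence is cosmetic: in $(2)\Rightarrow(1)$ you build $\vec{w}$ explicitly from the signed $(l-1)\times(l-1)$ minors of $\bar{\mathbf{F}}$ and divide by their gcd, whereas the paper takes an arbitrary nonzero syzygy of $\mathbf{F}(f,\z_2)$ and applies Lemma~\ref{RM_relation} to identify its reduced minors with the column reduced minors before dividing out the gcd---both yield the same ZLP vector.
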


 \begin{proof}
  $1 \rightarrow 2$. If $\mathbf{F}$ admits a matrix factorization w.r.t. $h$, then there are $\mathbf{G}_1\in k[\z]^{l\times l}$ and $\mathbf{F}_1\in k[\z]^{l\times m}$ such that $\mathbf{F}=\mathbf{G}_1\mathbf{F}_1$ with ${\rm det}(\mathbf{G}_1)=h$. Obviously, $\mathbf{F}(f,\z_2)=\mathbf{G}_1(f,\z_2)\mathbf{F}_1(f,\z_2)$. Since ${\rm det}(\mathbf{G}_1)=h$, by Lemma \ref{lin-2005-lemma} there is a ZLP vector $\vec{w}\in k[\z_2]^{1\times l}$ such that $\vec{w}\mathbf{G}_1(f,\z_2) = \mathbf{0}_{1\times l}$. This implies that $\vec{w}\mathbf{F}(f,\z_2)=\mathbf{0}_{1\times m}$. According to Lemma \ref{rank_matrix}, we have ${\rm rank}(\mathbf{F}(f,\z_2)) = l-1$. Using Lemma \ref{RM_relation}, all the $(l-1)\times (l-1)$ column reduced minors of $\F(f,\z_2)$ are equivalent to all the $1\times 1$ reduced minors of $\vec{w}$. It follows that all the $(l-1)\times (l-1)$ column reduced minors of $\F(f,\z_2)$ generate $k[\z_2]$.

  $2 \rightarrow 1$. Sine ${\rm rank}(\mathbf{F}(f,\z_2)) = l-1$, there is a nonzero vector $\vec{w}=[w_1,\ldots,w_l] \in k[\z_2]^{1\times l}$ such that $\vec{w}\mathbf{F}(f,\z_2) = \mathbf{0}_{1\times m}$. As all the $(l-1)\times (l-1)$ column reduced minors of $\F(f,\z_2)$ generate $k[\z_2]$, all the $1\times 1$ reduced minors of $\vec{w}$ generate $k[\z_2]$ by Lemma \ref{RM_relation}. Assume that $w_0 \in k[\z_2]$ is the greatest common divisor of $w_1,\ldots,w_l$, then $\vec{w}/w_0$ is a ZLP vector. Using Quillen-Suslin theorem, we can construct a unimodular matrix $\mathbf{U}\in k[\z_2]^{l\times l}$ such that $\vec{w}/w_0$ is its first row. This implies that there are $\mathbf{D}\in k[\z]^{l\times l}$ and $\mathbf{F}_1\in k[\z]^{l\times m}$ such that $\mathbf{U}\mathbf{F} = \mathbf{D}\mathbf{F}_1$, where $\mathbf{D} = {\rm diag}(h,1,\ldots,1)$. Therefore, we obtain a matrix factorization of $\mathbf{F}$ w.r.t. $h$, i.e., $\mathbf{F}=\mathbf{G}_1 \mathbf{F}_1$, where $\mathbf{G}_1 = \mathbf{U}^{-1}\mathbf{D}$ and ${\rm det}(\mathbf{G}_1)=h$.
 \end{proof}

 \subsection{Comparison among all existed factorization criteria}

 Let $\F \in \mathcal{M}$, and $a_1,\ldots,a_\beta\in k[\z]$ be all the $l\times l$ minors of $\F$. Since $h\mid d_l(\F)$, there are $e_1,\ldots,e_\beta\in k[\z]$ such that $a_i = h e_i$, $i=1,\ldots,\beta$. \cite{Lin2001Factorizations} proved that $\F$ has a matrix factorization w.r.t. $h$ if $\langle h,e_1,\ldots,e_\beta \rangle = k[\z]$. The main idea is as follows. $\langle h,e_1,\ldots,e_\beta \rangle = k[\z]$ implies that ${\rm rank}(\F(f,\z_2)) = l-1$ for every $\z_2 \in k^{n-1}$, then we can construct a ZLP vector $\vec{w}\in k[\z_2]^{1\times l}$ such that $\vec{w} \F(f,\z_2) = \mathbf{0}_{1\times m}$. Obviously, in this situation, we have $h\nmid d_{l-1}(\F)$. So, the condition $\langle h,e_1,\ldots,e_\beta \rangle = k[\z]$ is a special case of Theorem \ref{main-theorem-problem-2}.

 When $d_l(\F) = h$, \cite{Lin2005On} proved that $\F$ has an MLP matrix factorization w.r.t. $h$ if and only if all the $(l-1)\times (l-1)$ column reduced minors of $\F(f,\z_2)$ generate $k[\z_2]$. In fact, $d_l(\F) = h$ implies that $h \nmid d_{l-1}(\F)$. Hence, the main result of \cite{Lin2005On} is also a special case of Theorem \ref{main-theorem-problem-2}.

 Let $c_1,\ldots,c_\eta\in k[\z]$ be all the $(l-1)\times (l-1)$ minors of $\F$. \cite{Liu2011On} proved that ${\rm rank}(\F(f,\z_2)) = l-1$ for every $\z_2 \in k^{n-1}$ if and only if $\langle h,c_1,\ldots,c_\eta \rangle = k[\z]$. Then, $\F$ has a matrix factorization w.r.t. $h$ if $\langle h,c_1,\ldots,c_\eta \rangle = k[\z]$. Although \cite{Liu2011On} generalized the main result of \cite{Lin2001Factorizations}, $\langle h,c_1,\ldots,c_\eta \rangle = k[\z]$ is still a special case of Theorem \ref{main-theorem-problem-2}.

 Let $b_1,\ldots,b_\eta\in k[\z]$ be all the $(l-1)\times (l-1)$ reduced minors of $\F$. \cite{Lu2019Factorizations} proved that $\F$ has a matrix factorization w.r.t. $h$ if $h\nmid d_{l-1}(\F)$ and $\langle h,b_1,\ldots,b_\eta \rangle = k[\z]$. We explain the difference between $\langle h,c_1,\ldots,c_\eta \rangle = k[\z]$ and $\langle h,b_1,\ldots,b_\eta \rangle = k[\z]$. $\langle h,c_1,\ldots,c_\eta \rangle = k[\z]$ implies that all the $(l-1)\times (l-1)$ minors of $\F(f,\z_2)$ generate $k[\z_2]$, and $\langle h,b_1,\ldots,b_\eta \rangle = k[\z]$ implies that all the $(l-1)\times (l-1)$ reduced minors of $\F(f,\z_2)$ generate $k[\z_2]$. Therefore, the main result of \cite{Lu2019Factorizations} is a generalization of that of \cite{Liu2011On}. Under the assumption that $h\nmid d_{l-1}(\F)$, there is no doubt that $\langle h,b_1,\ldots,b_\eta \rangle = k[\z]$ is a special case of Theorem \ref{main-theorem-problem-2}.

 Assume that $h\nmid d_{l-1}(\F)$ and $\langle h,b_1,\ldots,b_\eta \rangle = k[\z]$, then there exists a ZLP vector $\vec{w}\in k[\z_2]^{1\times l}$ such that $\vec{w} \F(f,\z_2) = \mathbf{0}_{1\times m}$. In Theorem \ref{main-theorem-problem-1}, $Fitt_{l-2}(W)=0$ and $Fitt_{l-1}(W)=\langle d \rangle$ implies that all the $1\times 1$ reduced minors of $\mathbf{H}$ generate $k[\z_2]$. Then we can obtain a ZLP vector $\vec{w}\in k[\z_2]^{1\times l}$ by factorizing $\mathbf{H}$. Although the conditions in \cite{Lu2019Factorizations} and Theorem \ref{main-theorem-problem-1} all imply that we can construct a ZLP vector $\vec{w}\in k[\z_2]^{1\times l}$, $\langle h,b_1,\ldots,b_\eta \rangle = k[\z]$ cannot deduce $Fitt_{l-1}(W)=\langle d \rangle$. It follows that Theorem \ref{main-theorem-problem-1} is not a generalization of the main result in \cite{Lu2019Factorizations}. However, Example \ref{counter-example-1} shows that Theorem \ref{main-theorem-problem-1} can solve some problems that the main result in \cite{Lu2019Factorizations} cannot solve.

 Assume that $\mathbf{H}\in k[\z_2]^{s\times l}$ is composed of a system of generators of the syzygy module of $\F(f,\z_2)$. Then, ${\rm Syz}(\F(f,\z_2)) = {\rm Im}(\mathbf{H})$. $Fitt_{l-1}(W)=\langle d \rangle$ in Theorem \ref{main-theorem-problem-1} implies that all the $1\times 1$ reduced minors of $\mathbf{H}$ generate $k[\z_2]$. According to Lemma \ref{RM_relation}, all the $(l-1)\times (l-1)$ column reduced minors of $\F(f,\z_2)$ generate $k[\z_2]$. Thus, Theorem \ref{main-theorem-problem-1} can deduce Theorem \ref{main-theorem-problem-2}. However, Theorem \ref{main-theorem-problem-2} only imply that all the $1\times 1$ row reduced minors of $\mathbf{H}$ generate $k[\z_2]$. It follows that Theorem \ref{main-theorem-problem-1} is not equivalent to Theorem \ref{main-theorem-problem-2}. Therefore, Theorem \ref{main-theorem-problem-1} is a special case of Theorem \ref{main-theorem-problem-2}.

 Based on Lemma \ref{Lin-Bose Con}, \cite{Liu2013New} proposed a criterion for the existence of a matrix factorization of $\F$ w.r.t. $h_0$.

 \begin{lemma}\label{Liu-Wang-2013}
  Let $\F \in k[\z]^{l\times m}$ be a full row rank matrix, and $h_0\in k[\z]$ be a divisor of $d_l(\F)$. $a_1,\ldots,a_\beta\in k[\z]$ and $c_1,\ldots,c_\eta\in k[\z]$ be all the $l\times l$ minors and  $(l-1)\times (l-1)$ minors of $\F$, respectively. There are $e_1,\ldots,e_\beta\in k[\z]$ such that $a_i = h_0 e_i$, $i=1,\ldots,\beta$. If $h_0,e_1,\ldots,e_\beta,c_1,\ldots,c_\eta$ generate $k[\z]$, then $\F$ has a matrix factorization w.r.t. $h_0$.
 \end{lemma}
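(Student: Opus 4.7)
The plan is to reduce the claim to the Lin--Bose criterion (Lemma \ref{Lin-Bose Con}) applied to the auxiliary block matrix
\[\tilde{\F} \;=\; [\,h_0 \mathbf{I}_l \ \ \F\,] \;\in\; k[\z]^{l\times(l+m)}.\]
Once I can show that $d_l(\tilde{\F}) = h_0$ up to a unit and that all the $l\times l$ reduced minors of $\tilde{\F}$ generate $k[\z]$, Lemma \ref{Lin-Bose Con} produces $\mathbf{G}_1\in k[\z]^{l\times l}$ and a ZLP matrix $\mathbf{F}_1'\in k[\z]^{l\times(l+m)}$ with $\tilde{\F} = \mathbf{G}_1 \mathbf{F}_1'$. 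Splitting $\mathbf{F}_1' = [\mathbf{F}_1^{(1)}\ \ \mathbf{F}_1^{(2)}]$ along the same block partition yields $h_0 \mathbf{I}_l = \mathbf{G}_1\mathbf{F}_1^{(1)}$ and $\F = \mathbf{G}_1\mathbf{F}_1^{(2)}$, and taking determinants gives $\det(\mathbf{G}_1)\cdot\det(\mathbf{F}_1^{(1)}) = h_0^l$; combined with $\det(\mathbf{G}_1)\mid d_l(\tilde{\F})=h_0$ this forces $\det(\mathbf{G}_1) = h_0$ up to a nonzero constant, which can be absorbed.

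To carry out the first reduction I would enumerate the $l\times l$ minors of $\tilde{\F}$ by Laplace expansion along the columns that come from $h_0 \mathbf{I}_l$: every such minor has the form $\pm h_0^{\,l-k}\mu$, where $\mu$ is a $k\times k$ minor of $\F$ for some $0\le k\le l$. In particular each is divisible by $h_0$, so $h_0\mid d_l(\tilde{\F})$; writing $d_l(\tilde{\F}) = h_0\,d'$, the divisibilities $d'\mid h_0^{\,l-1}$ (from $h_0^l$), $d'\mid e_i$ (from $h_0 e_i$), and $d'\mid c_j$ (from $h_0 c_j$), together with the hypothesis $\langle h_0,e_i,c_j\rangle = k[\z]$, give $\langle d', h_0\rangle = k[\z]$, whence $\gcd(d',h_0)=1$ and then $d'$ is a unit.

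For the reduced minors of $\tilde{\F}$, dividing the list above by $h_0$ shows that $h_0^{\,l-1}$, all $e_i$, and all $c_j$ lie among them. Starting from a Bézout identity $1 = \alpha h_0 + \sum\beta_i e_i + \sum\gamma_j c_j$ and raising to the $(l-1)$-th power, every monomial in the expansion either contains a factor $h_0^{\,l-1}$ or contains at least one $e_i$ or $c_j$; hence $1\in\langle h_0^{\,l-1}, e_i, c_j\rangle$, which lies inside the ideal of $l\times l$ reduced minors of $\tilde{\F}$. This verifies the hypothesis of Lemma \ref{Lin-Bose Con} and completes the argument.

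The main technical obstacle will be the bookkeeping of step~2: one must justify uniformly that the Laplace expansion of every $l\times l$ minor of $\tilde{\F}$ produces exactly the family $\{h_0^{\,l-k}\mu\}$ (including the boundary cases $k=0,l$), and then manage the exponent $l-1$ in the Bézout-power trick so as to guarantee that each expansion term is absorbed by one of $h_0^{\,l-1}$, $e_i$, or $c_j$. Both steps are elementary but require precise combinatorial control; once settled, the extraction of $\F = \mathbf{G}_1\mathbf{F}_1^{(2)}$ from the Lin--Bose factorization of $\tilde{\F}$ is immediate.
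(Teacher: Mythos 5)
This lemma is quoted by the paper from \cite{Liu2013New} without an internal proof, so there is no paper proof to compare against; your argument therefore has to stand on its own, and it almost does. The augmented-matrix device $\tilde{\F}=[\,h_0\mathbf{I}_l\ \ \F\,]$ is the right move: the Laplace expansion you describe correctly enumerates the $l\times l$ minors of $\tilde{\F}$ as $\pm h_0^{k}\mu$ with $\mu$ an $(l-k)\times(l-k)$ minor of $\F$, so the list of minors contains $h_0e_i$ (from $k=0$), $\pm h_0 c_j$ (from $k=1$) and $h_0^{l}$ (from $k=l$), each minor is divisible by $h_0$, and the divisibilities $d'\mid h_0^{l-1}$, $d'\mid e_i$, $d'\mid c_j$ do force $d'$ to be a unit under the hypothesis $\langle h_0,e_i,c_j\rangle=k[\z]$. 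The $l\times l$ reduced minors of $\tilde{\F}$ then contain $\{e_i\}$, $\{\pm c_j\}$ and $h_0^{l-1}$, and your B\'ezout-power trick (equivalently the Nullstellensatz, since $k$ is algebraically closed) shows they generate $k[\z]$, so Lemma~\ref{Lin-Bose Con} applies to $\tilde{\F}$, which has full row rank $l$, and gives $\tilde{\F}=\G_1\F_1'$ with $\F_1'$ ZLP; the ``bookkeeping'' you flagged is routine and not a real obstacle.

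The one genuine gap is the closing deduction that $\det(\G_1)=h_0$. From $\det(\G_1)\det(\F_1^{(1)})=h_0^{l}$ together with $\det(\G_1)\mid d_l(\tilde{\F})=h_0$ one cannot yet conclude $\det(\G_1)=h_0$: if $\det(\G_1)=g$ with $h_0=gp$ for a non-unit $p$, then $\det(\F_1^{(1)})=g^{l-1}p^{l}$ is a perfectly good polynomial, so no contradiction arises from these two facts alone. The correct way to close is to apply the Binet--Cauchy formula (Lemma~\ref{binet-cauchy}) to $\tilde{\F}=\G_1\F_1'$, giving $d_l(\tilde{\F})=\det(\G_1)\,d_l(\F_1')$, and then to note that $\F_1'$ being ZLP is in particular MLP, so $d_l(\F_1')$ is a nonzero constant; hence $\det(\G_1)=d_l(\tilde{\F})=h_0$ up to a unit. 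With this repair the split $\F=\G_1\F_1^{(2)}$ delivers exactly the required factorization w.r.t.\ $h_0$.
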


 In Lemma \ref{Liu-Wang-2013}, $\F$ does not have to belong to $\mathcal{M}$ and $h_0$ does not have to be of the form $z_1-f(\z_2)$. Obviously, the main results of \cite{Lin2001Factorizations} and \cite{Liu2011On} are special cases of Lemma \ref{Liu-Wang-2013}. When $h_0 = z_1 - f(\z_2)$, however, we find that $\langle h_0,e_1,\ldots,e_\beta,c_1,\ldots,c_\eta \rangle = k[\z]$ is equivalent to $\langle h_0,c_1,\ldots,c_\eta \rangle = k[\z]$. That is, Lemma \ref{Liu-Wang-2013} is the same as the main result of \cite{Liu2011On} for the case of $h_0 = z_1 - f(\z_2)$. Before proving this conclusion, we first introduce a lemma which proposed by \cite{Lin2001Factorizations}.

 \begin{lemma}\label{Lin-2001-rank}
  Let $\F \in k[z_1]^{l\times m}$ be a univariate polynomial matrix with full row rank, and $d\in k[z_1]$ be the greatest common divisor of all the $l\times l$ minors of $\F$. If $z_{11}\in k$ is a simple zero of $d$, i.e., $z_1 - z_{11}$ is a divisor of $d$, but $(z_1 - z_{11})^2$ is not a divisor of $d$, then ${\rm rank}(\F(z_{11})) = l-1$.
 \end{lemma}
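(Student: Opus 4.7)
The plan is to establish the two inequalities $\mathrm{rank}(\F(z_{11})) \leq l-1$ and $\mathrm{rank}(\F(z_{11})) \geq l-1$ separately. For the upper bound I would invoke the hypothesis $(z_1 - z_{11}) \mid d$ directly: since $d$ divides every $l\times l$ minor $a_i(z_1)$ of $\F$, substitution $z_1 = z_{11}$ annihilates each $a_i$, so every $l\times l$ minor of $\F(z_{11})$ vanishes and the rank can be at most $l-1$. This step is immediate and parallels the easy half of Lemma \ref{rank_matrix}.

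For the lower bound I would argue by contradiction. Suppose $\mathrm{rank}(\F(z_{11})) < l-1$; then every $(l-1)\times(l-1)$ minor $b(z_1)$ of $\F$ satisfies $b(z_{11}) = 0$, equivalently $(z_1 - z_{11}) \mid b$ in $k[z_1]$. The goal is to upgrade this to $(z_1 - z_{11})^2 \mid a_i$ for every $l\times l$ minor $a_i$ of $\F$, since this will force $(z_1 - z_{11})^2 \mid d$ and contradict the simplicity of the zero $z_{11}$.

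The key device is the cofactor--derivative identity. For any $l\times l$ submatrix $\mathbf{M}(z_1)$ of $\F(z_1)$ with entries $m_{jk}(z_1)$, multilinearity of the determinant combined with cofactor expansion yields
$$
\frac{d}{dz_1}\det \mathbf{M}(z_1) \;=\; \sum_{j,k} (-1)^{j+k} C_{jk}(z_1)\, m_{jk}'(z_1),
$$
where each cofactor $C_{jk}(z_1)$ is an $(l-1)\times(l-1)$ minor of $\mathbf{M}$, hence of $\F$. Under the contradiction hypothesis, $C_{jk}(z_{11}) = 0$ for all $j,k$, so $a_i'(z_{11}) = 0$. Combined with $a_i(z_{11}) = 0$ obtained in the upper-bound step, this gives $(z_1 - z_{11})^2 \mid a_i$ for each $l\times l$ minor $a_i$, and therefore $(z_1 - z_{11})^2 \mid d$, which is the desired contradiction.

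I expect no serious obstacle here; the mildly delicate point is packaging the derivative--cofactor identity, but this is a formal consequence of the Leibniz expansion of the determinant applied columnwise. An alternative route would be to pass to the local ring $k[z_1]_{(z_1 - z_{11})}$ and read off the rank drop from the Smith normal form (recalling that $k[z_1]$ is a PID, so $\F$ is equivalent to a diagonal matrix $\mathrm{diag}(s_1, \ldots, s_l, 0, \ldots, 0)$ with $s_1 \mid s_2 \mid \cdots \mid s_l$, and $d = s_1 \cdots s_l$), but the derivative approach seems more elementary and self-contained in the style of the paper.
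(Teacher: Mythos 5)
Your proof is correct. Note first that the paper does not prove Lemma~\ref{Lin-2001-rank}; it cites it as a result from Lin (2001) ``which proposed by \cite{Lin2001Factorizations}'', so there is no in-paper argument to compare against.

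Both halves of your argument are sound. The upper bound $\mathrm{rank}(\mathbf{F}(z_{11})) \le l-1$ follows immediately as you say. For the lower bound, the derivative--cofactor identity
$\frac{d}{dz_1}\det \mathbf{M}(z_1) = \sum_{j,k} (-1)^{j+k} C_{jk}(z_1)\, m_{jk}'(z_1)$
is a correct form of Jacobi's formula (obtained by differentiating the Leibniz expansion and collecting terms by cofactor expansion), and each $C_{jk}$ is indeed an $(l-1)\times(l-1)$ minor of $\mathbf{F}$. Your step passing from $a_i(z_{11}) = 0$ and $a_i'(z_{11}) = 0$ to $(z_1 - z_{11})^2 \mid a_i$ is valid over any field (write $a_i = (z_1 - z_{11})q$, then $a_i'(z_{11}) = q(z_{11})$), so the argument does not secretly rely on characteristic zero. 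Finally, $(z_1 - z_{11})^2$ dividing every $l\times l$ minor forces $(z_1-z_{11})^2 \mid d$ since $k[z_1]$ is a PID, giving the contradiction. The Smith-form route you sketch as an alternative is also correct and perhaps more conceptual: the divisibility chain $s_1 \mid \cdots \mid s_l$ forces the simple zero of $d = s_1\cdots s_l$ to sit entirely in $s_l$, hence $s_1(z_{11}),\ldots,s_{l-1}(z_{11})$ are nonzero and the rank drops by exactly one. The derivative argument has the merit of being entirely self-contained and elementary, while the Smith-form argument makes the structural reason for the rank drop transparent; either would be acceptable here. One tiny notational quibble: you call $C_{jk}$ a ``cofactor'' but write $(-1)^{j+k}C_{jk}$, so $C_{jk}$ should be called the $(j,k)$ minor rather than the cofactor; this does not affect the correctness of the argument.
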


 Now, we can assert that the following conclusion is correct.

 \begin{proposition}\label{Liu-Wang-equal-Liu}
  Let $\F \in \mathcal{M}$, $a_1,\ldots,a_\beta\in k[\z]$ and $c_1,\ldots,c_\eta\in k[\z]$ be all the $l\times l$ minors and  $(l-1)\times (l-1)$ minors of $\F$, respectively. There are $e_1,\ldots,e_\beta\in k[\z]$ such that $a_i = h e_i$, $i=1,\ldots,\beta$. Then, $\langle h,e_1,\ldots,e_\beta,c_1,\ldots,c_\eta \rangle = k[\z]$ if and only if $\langle h,c_1,\ldots,c_\eta \rangle = k[\z]$.
 \end{proposition}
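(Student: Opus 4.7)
The backward implication is immediate: adding more generators can only enlarge the ideal, so $\langle h, c_1,\ldots,c_\eta\rangle = k[\z]$ trivially implies $\langle h, e_1,\ldots,e_\beta, c_1,\ldots,c_\eta\rangle = k[\z]$. All the work is in the forward direction, and the plan is to argue by contraposition using the Nullstellensatz together with Lemma \ref{Lin-2001-rank}.

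First, I would assume $\langle h, c_1,\ldots,c_\eta\rangle \neq k[\z]$ and, since $k$ is algebraically closed, invoke the Nullstellensatz to produce a common zero $p = (z_{11}, \z_{22}) \in k^n$ of $h$ and all the $c_i$. From $h(p) = 0$ we get $z_{11} = f(\z_{22})$. I then aim to show that $p$ is also a common zero of $e_1,\ldots,e_\beta$, so that $p$ lies in the variety of $\langle h, e_1,\ldots,e_\beta, c_1,\ldots,c_\eta\rangle$, contradicting the hypothesis that this ideal is $k[\z]$.

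The core step is to argue this by contradiction within the contradiction: suppose some $e_{j_0}(z_{11},\z_{22}) \neq 0$. I would specialize $\z_2 = \z_{22}$ and consider the univariate matrix $\F'(z_1) := \F(z_1,\z_{22}) \in k[z_1]^{l\times m}$. The $l\times l$ minors of $\F'$ are the polynomials $a_j(z_1,\z_{22}) = (z_1 - z_{11}) \cdot e_j(z_1,\z_{22})$, and because $e_{j_0}(z_1,\z_{22})$ does not vanish at $z_{11}$, it is in particular a nonzero polynomial in $z_1$; hence the minor $a_{j_0}(z_1,\z_{22})$ is nonzero and $\F'$ has full row rank $l$. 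Moreover the factor $(z_1 - z_{11})$ is a \emph{simple} divisor of $d_l(\F') = (z_1 - z_{11})\cdot \gcd_j e_j(z_1,\z_{22})$, since $(z_1-z_{11}) \nmid e_{j_0}(z_1,\z_{22})$. Lemma \ref{Lin-2001-rank} then yields ${\rm rank}(\F'(z_{11})) = l-1$. On the other hand, the $(l-1)\times(l-1)$ minors of $\F'(z_{11})$ are exactly the numbers $c_i(z_{11},\z_{22})$, all of which vanish by the choice of $p$, forcing ${\rm rank}(\F'(z_{11})) \le l-2$. This contradiction shows $e_j(z_{11},\z_{22}) = 0$ for every $j$, completing the argument.

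I expect the main obstacle to be the rank-drop step, namely verifying that $\F'$ has full row rank so that Lemma \ref{Lin-2001-rank} is applicable and that $(z_1 - z_{11})$ is genuinely a simple (not higher-order) zero of $d_l(\F')$. Both issues are handled by the same observation: the single polynomial $e_{j_0}(z_1,\z_{22})$ is nonzero and does not vanish at $z_{11}$, which simultaneously certifies the full row rank of $\F'$ and the simplicity of the zero. Apart from that, the argument is a routine Nullstellensatz-plus-specialization pattern.
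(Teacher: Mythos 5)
Your proof is correct and follows essentially the same route as the paper: Nullstellensatz to produce a common zero of $h$ and the $c_i$, specialization to the univariate matrix $\F(z_1,\z_{22})$, and Lemma \ref{Lin-2001-rank} on simple zeros to derive a rank contradiction. The only (cosmetic) differences are the outer logical framing—you prove the contrapositive with a nested contradiction, whereas the paper runs one direct contradiction—and that you explicitly verify the full-row-rank hypothesis needed to invoke the lemma, a point the paper leaves implicit.
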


 \begin{proof}
  Sufficiency is obvious, we next prove the necessity.

  Assume that $\langle h,e_1,\ldots,e_\beta,c_1,\ldots,c_\eta \rangle = k[\z]$. If $\langle h,c_1,\ldots,c_\eta \rangle \neq k[\z]$, then there exists a point $\vec{\varepsilon} = (\varepsilon_1, \ldots, \varepsilon_n)\in k^n$ such that
  $$ \varepsilon_1 = f(\varepsilon_2, \ldots, \varepsilon_n) ~ \text{ and } ~ c_i(\vec{\varepsilon}) = 0, ~ i=1,\ldots,\eta.$$
  Then, ${\rm rank}(F(\vec{\varepsilon})) < l-1$. Let $\tilde{\F} = \F(z_1, \varepsilon_2, \ldots, \varepsilon_n)$ be a univariate polynomial matrix with entries in $k[z_1]$, and $\tilde{a}_1,\ldots,\tilde{a}_\beta\in k[z_1]$ be all the $l\times l$ minors of $\tilde{\F}$. Obviously, we have
  $$ \tilde{a}_j = a_j(z_1, \varepsilon_2, \ldots, \varepsilon_n) = (z_1 - \varepsilon_1)\cdot e_j(z_1, \varepsilon_2, \ldots, \varepsilon_n), ~ j = 1,\ldots,\beta. $$
  Assume that $q\in k[z_1]$ is the greatest common divisor of $e_1(z_1, \varepsilon_2, \ldots, \varepsilon_n),\ldots,e_\beta(z_1, \varepsilon_2, \ldots, \varepsilon_n)$, then $d_l(\tilde{\F}) = (z_1 - \varepsilon_1)\cdot q$. It follows from $\langle h,e_1,\ldots,e_\beta,c_1,\ldots,c_\eta \rangle = k[\z]$ that $\vec{\varepsilon}$ is not a common zero of the system $\{ e_1 =0,\ldots,e_\beta =0\}$. Thus, $\varepsilon_1$ is not a zero of $p$. This implies that $\varepsilon_1$ is a simple zero of $d_l(\tilde{\F})$. According to Lemma \ref{Liu-Wang-equal-Liu}, we have ${\rm rank}(\tilde{\F}(\varepsilon_1)) = l-1$, which leads to a contradiction. Therefore, $\langle h,c_1,\ldots,c_\eta \rangle = k[\z]$.
 \end{proof}

 \section{Equivalence for polynomial matrices}\label{sec_matrix equivalence}

 In this section, we first put forward a necessary and sufficient condition to solve Problem \ref{problem_2}, and then use an example to illustrate the effectiveness of the matrix equivalence theorem.

 We introduce a lemma, which is called the Binet-Cauchy formula \citep{Strang2010Linear}.

 \begin{lemma}\label{binet-cauchy}
 Let $\mathbf{F}=\mathbf{G}_1\mathbf{F}_1$, where $\mathbf{G}_1\in k[\z]^{l\times l}$ and $\mathbf{F}_1 \in k[\z]^{l\times m}$. Then an $i\times i$ minor of $\mathbf{F}$ is
  \begin{equation*}
  {\rm det}\Bigl(\mathbf{F}\begin{pmatrix}\begin{smallmatrix}
  r_1\cdots r_i \\ j_1\cdots j_i \end{smallmatrix}\end{pmatrix}\Bigr)\\
      =  \sum_{1\leq s_1<\cdots<s_i\leq l}{\rm det}\Bigl(\mathbf{G}_1\begin{pmatrix}\begin{smallmatrix}
  r_1\cdots r_i \\ s_1\cdots s_i
  \end{smallmatrix}\end{pmatrix}\Bigr)\cdot {\rm det}\Bigl(\mathbf{F}_1\begin{pmatrix}\begin{smallmatrix}
  s_1\cdots s_i \\ j_1\cdots j_i\end{smallmatrix}\end{pmatrix}\Bigr).
  \end{equation*}
\end{lemma}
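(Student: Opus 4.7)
The plan is to reduce the statement to the classical Cauchy--Binet identity by first isolating the relevant submatrix of the product. Fix row indices $1 \leq r_1 < \cdots < r_i \leq l$ and column indices $1 \leq j_1 < \cdots < j_i \leq m$. A direct entrywise computation in the product $\mathbf{F} = \mathbf{G}_1 \mathbf{F}_1$ shows that the $i \times i$ submatrix of $\mathbf{F}$ picked out by these rows and columns equals $A B$, where $A \in k[\z]^{i \times l}$ consists of the rows $r_1, \ldots, r_i$ of $\mathbf{G}_1$ and $B \in k[\z]^{l \times i}$ consists of the columns $j_1, \ldots, j_i$ of $\mathbf{F}_1$. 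This is the content of the block-selection identity and follows immediately from the definition of matrix multiplication.

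Next I would apply the classical Cauchy--Binet formula to $\det(AB)$: for any $i \times l$ matrix $A$ and $l \times i$ matrix $B$,
\[
 \det(AB) = \sum_{1 \leq s_1 < \cdots < s_i \leq l} \det(A_S) \cdot \det(B_S),
\]
where $A_S$ is the $i \times i$ submatrix of $A$ formed by columns $s_1, \ldots, s_i$ and $B_S$ is the $i \times i$ submatrix of $B$ formed by rows $s_1, \ldots, s_i$. Translating back, $\det(A_S)$ is exactly the minor $\det\bigl(\mathbf{G}_1\begin{pmatrix}\begin{smallmatrix} r_1 \cdots r_i \\ s_1 \cdots s_i \end{smallmatrix}\end{pmatrix}\bigr)$ and $\det(B_S)$ is exactly $\det\bigl(\mathbf{F}_1\begin{pmatrix}\begin{smallmatrix} s_1 \cdots s_i \\ j_1 \cdots j_i \end{smallmatrix}\end{pmatrix}\bigr)$, giving the claimed expansion.

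For the Cauchy--Binet step itself, the cleanest route is via the Leibniz formula: expand $\det(AB)$ as $\sum_\sigma \operatorname{sgn}(\sigma) \prod_{t=1}^{i} (AB)_{t,\sigma(t)}$, substitute $(AB)_{t,\sigma(t)} = \sum_{k_t=1}^{l} A_{t,k_t} B_{k_t,\sigma(t)}$, interchange the sums, and group the resulting $l^i$ tuples $(k_1, \ldots, k_i)$ by the underlying set. Tuples with a repeated index contribute zero by antisymmetry of $\det(A_S)$, and the remaining contributions reassemble into the stated sum over $i$-element subsets $S$.

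There is no substantive obstacle here: the lemma is classical and the only real work is bookkeeping to match the notation $\mathbf{F}\begin{pmatrix}\begin{smallmatrix} r_1 \cdots r_i \\ j_1 \cdots j_i \end{smallmatrix}\end{pmatrix} = AB$ with the standard Cauchy--Binet statement. If space permits, one could even omit the proof and simply cite \cite{Strang2010Linear}, since the identity is purely multilinear-algebraic and does not interact with the polynomial ring structure of $k[\z]$ at all.
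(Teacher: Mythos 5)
Your proof is correct, and your reduction of the general minor identity to the square Cauchy--Binet formula $\det(AB) = \sum_S \det(A_S)\det(B_S)$ via block selection, followed by the Leibniz expansion and the observation that tuples with repeated indices vanish by antisymmetry, is the standard argument and contains no gaps. However, the paper does not actually prove Lemma~\ref{binet-cauchy}: it is stated as a known classical fact with the citation to \cite{Strang2010Linear}, exactly as you anticipated in your closing remark. So there is no ``paper's proof'' to compare against; your write-up is a correct self-contained proof of a result the authors simply quote, and your observation that the identity is purely multilinear-algebraic and independent of the $k[\z]$ structure matches how the paper treats it --- as background machinery rather than something requiring argument.
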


 In Lemma \ref{binet-cauchy}, $\mathbf{F}\begin{pmatrix}\begin{smallmatrix} r_1\cdots r_i \\ j_1\cdots j_i \end{smallmatrix}\end{pmatrix}$ denotes an $i\times i$ submatrix consisting of the $r_1,\ldots,r_i$ rows and $j_1,\ldots,j_i$ columns of $\mathbf{F}$. Based on this lemma, we can obtain the following two results.

 \begin{lemma}\label{gcd-equal}
  Let $\mathbf{F}\in k[\z]^{l\times m}$ be of full row rank with $\mathbf{F}=\mathbf{G}_1\mathbf{F}_1$, where $\mathbf{G}_1\in k[\z]^{l\times l}$ and $\mathbf{F}_1 \in k[\z]^{l\times m}$. Then $d_i(\mathbf{F}_1)\mid d_i(\mathbf{F})$ and $d_i(\mathbf{G}_1)\mid d_i(\mathbf{F})$ for each $i\in\{1,\ldots,l\}$.
 \end{lemma}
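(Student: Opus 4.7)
The plan is to apply the Binet–Cauchy formula (Lemma \ref{binet-cauchy}) directly and then use the definition of $d_i$ as a greatest common divisor. Concretely, fix any row indices $1\leq r_1<\cdots<r_i\leq l$ and column indices $1\leq j_1<\cdots<j_i\leq m$. The formula expresses the corresponding $i\times i$ minor of $\mathbf{F}$ as
\[
{\rm det}\Bigl(\mathbf{F}\begin{pmatrix}\begin{smallmatrix}
r_1\cdots r_i \\ j_1\cdots j_i \end{smallmatrix}\end{pmatrix}\Bigr)
= \sum_{1\leq s_1<\cdots<s_i\leq l}{\rm det}\Bigl(\mathbf{G}_1\begin{pmatrix}\begin{smallmatrix}
r_1\cdots r_i \\ s_1\cdots s_i \end{smallmatrix}\end{pmatrix}\Bigr)\cdot {\rm det}\Bigl(\mathbf{F}_1\begin{pmatrix}\begin{smallmatrix}
s_1\cdots s_i \\ j_1\cdots j_i\end{smallmatrix}\end{pmatrix}\Bigr),
\]
so it is a $k[\z]$-linear combination of $i\times i$ minors of $\mathbf{G}_1$ (with coefficients that are $i\times i$ minors of $\mathbf{F}_1$), and simultaneously a $k[\z]$-linear combination of $i\times i$ minors of $\mathbf{F}_1$ (with coefficients that are $i\times i$ minors of $\mathbf{G}_1$).

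From the first viewpoint, $d_i(\mathbf{G}_1)$ divides every $i\times i$ minor of $\mathbf{G}_1$ by definition, hence divides each summand on the right-hand side, hence divides the chosen $i\times i$ minor of $\mathbf{F}$. Since this holds for every such minor, $d_i(\mathbf{G}_1)$ divides their greatest common divisor $d_i(\mathbf{F})$. Exchanging the roles, $d_i(\mathbf{F}_1)$ divides every $i\times i$ minor of $\mathbf{F}_1$, hence divides each summand, hence $d_i(\mathbf{F}_1)\mid d_i(\mathbf{F})$.

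There is essentially no obstacle: the only subtlety to mention is that the assumption that $\mathbf{F}$ has full row rank ensures $d_i(\mathbf{F})\neq 0$ for $i=1,\ldots,l$, so that the divisibility assertions are meaningful (and not the trivial statement that everything divides zero). With this observation, the two claims follow immediately from the Binet–Cauchy expansion combined with the definition of $d_i$.
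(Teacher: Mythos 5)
Your proof is correct and follows essentially the same route as the paper's: apply Lemma~\ref{binet-cauchy} to write each $i\times i$ minor of $\mathbf{F}$ as a $k[\z]$-linear combination of $i\times i$ minors of $\mathbf{F}_1$ (respectively of $\mathbf{G}_1$), then use the fact that $d_i(\mathbf{F}_1)$ (respectively $d_i(\mathbf{G}_1)$) divides every such minor to conclude it divides $d_i(\mathbf{F})$. The paper writes out only the $d_i(\mathbf{F}_1)$ case and remarks that the other is symmetric, exactly as you observe.
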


 \begin{proof}
  We only prove $d_i(\mathbf{F}_1)\mid d_i(\mathbf{F})$, since the proof of $d_i(\mathbf{G}_1)\mid d_i(\mathbf{F})$ follows in a similar manner. For any given $i\in\{1,\ldots,l\}$, let $a_{i,1},\ldots,a_{i,t_i}$ and $\bar{a}_{i,1},\ldots,\bar{a}_{i,t_i}$ be all the $i\times i$ minors of $\mathbf{F}$ and $\mathbf{F}_1$ respectively, where $t_i = \binom l{i}\binom m{i}$. For each $a_{i,j}$, it is a $k[\z]$-linear combination of $\bar{a}_{i,1},\ldots,\bar{a}_{i,t_i}$ by using Lemma \ref{binet-cauchy}, where $j=1,\ldots,t_i$. Since $d_i(\mathbf{F}_1)$ is the greatest common divisor of $\bar{a}_{i,1},\ldots, \bar{a}_{i,t_i}$, for each $j$ we have $d_i(\mathbf{F}_1) \mid a_{i,j}$. Then, $d_i(\mathbf{F}_1)\mid d_i(\mathbf{F})$.
 \end{proof}

 \begin{lemma}\label{gcd-equal-cgd}
  Let $\mathbf{F}_1,\mathbf{F}_2\in k[\z]^{l\times m}$ be of full row rank. If $\mathbf{F}_1$ and $\mathbf{F}_2$ are equivalent, then $d_i(\mathbf{F}_1) = d_i(\mathbf{F}_2)$ for each $i\in\{1,\ldots,l\}$.
 \end{lemma}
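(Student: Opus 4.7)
The plan is to unwind the definition of equivalence and then reduce the statement to a double application of Lemma \ref{gcd-equal} (together with its symmetric version for right multiplication), exploiting the fact that unimodular matrices have polynomial inverses.

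First I would write the hypothesis explicitly: there exist unimodular $\mathbf{U}\in k[\z]^{l\times l}$ and $\mathbf{V}\in k[\z]^{m\times m}$ with $\mathbf{F}_1=\mathbf{U}\mathbf{F}_2\mathbf{V}$. Setting $\mathbf{F}_2':=\mathbf{F}_2\mathbf{V}$, Lemma \ref{gcd-equal} applied to the decomposition $\mathbf{F}_1=\mathbf{U}\cdot\mathbf{F}_2'$ gives $d_i(\mathbf{F}_2')\mid d_i(\mathbf{F}_1)$ for every $i\in\{1,\ldots,l\}$. The slightly awkward step is that Lemma \ref{gcd-equal} as stated only handles a square \emph{left} factor, whereas here we also have a square \emph{right} factor $\mathbf{V}$. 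To remedy this I would invoke Lemma \ref{binet-cauchy} directly on the product $\mathbf{F}_2\mathbf{V}$: each $i\times i$ minor of $\mathbf{F}_2\mathbf{V}$ expands as a $k[\z]$-linear combination of products of $i\times i$ minors of $\mathbf{F}_2$ with $i\times i$ minors of $\mathbf{V}$. In particular $d_i(\mathbf{F}_2)$ divides every such minor, so $d_i(\mathbf{F}_2)\mid d_i(\mathbf{F}_2')$. Chaining the two divisibilities yields $d_i(\mathbf{F}_2)\mid d_i(\mathbf{F}_1)$.

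For the reverse divisibility I would use that $\mathbf{U}$ and $\mathbf{V}$ are unimodular, hence invertible over $k[\z]$, so $\mathbf{F}_2=\mathbf{U}^{-1}\mathbf{F}_1\mathbf{V}^{-1}$. Running the exact same argument on this identity gives $d_i(\mathbf{F}_1)\mid d_i(\mathbf{F}_2)$. Since $d_i(\mathbf{F}_1)$ and $d_i(\mathbf{F}_2)$ are gcds (hence defined up to nonzero constants) and each divides the other, they coincide as elements of $k[\z]$ in the standard normalized sense, which is the meaning of $d_i(\mathbf{F}_1)=d_i(\mathbf{F}_2)$ used elsewhere in the paper.

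I do not anticipate any real obstacle: the only thing to be careful about is that Lemma \ref{gcd-equal} is stated only for the left factorization $\mathbf{G}_1\mathbf{F}_1$ with $\mathbf{G}_1$ square, so one must either record a symmetric statement for right multiplication or, as sketched above, apply the Binet-Cauchy formula (Lemma \ref{binet-cauchy}) directly to $\mathbf{F}_2\mathbf{V}$. Either route is routine, and the full row rank hypothesis is used only to ensure that the quantities $d_i(\mathbf{F}_1),d_i(\mathbf{F}_2)$ are nonzero for $i=1,\ldots,l$ so that the divisibility statements are meaningful.
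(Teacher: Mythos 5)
Your proof is correct and follows essentially the same route as the paper: apply the divisibility consequence of Binet--Cauchy twice in each direction, using the invertibility of $\mathbf{U}$ and $\mathbf{V}$ over $k[\z]$ for the reverse inclusion. In fact you are more careful than the published proof, which writes $d_i(\mathbf{F}_2)\mid d_i(\mathbf{U}\mathbf{F}_2)\mid d_i(\mathbf{F}_1)$ and cites Lemma~\ref{gcd-equal} for both divisibilities even though that lemma, as stated, only treats a square \emph{left} factor; the second step (passing from $\mathbf{U}\mathbf{F}_2$ to $(\mathbf{U}\mathbf{F}_2)\mathbf{V}=\mathbf{F}_1$) silently needs a right-multiplication analogue, which you correctly flag and discharge by appealing to Binet--Cauchy directly.
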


 \begin{proof}
  Since $\mathbf{F}_1$ and $\mathbf{F}_2$ are equivalent, then there exist two unimodular matrices $\mathbf{U}\in k[\z]^{l\times l}$ and $\mathbf{V}\in k[\z]^{m\times m}$ such that $\mathbf{F}_1 = \mathbf{U} \mathbf{F}_2 \mathbf{V}$. For each $i\in\{1,\ldots,l\}$, it follows from Lemma \ref{gcd-equal} that $d_i(\mathbf{F}_2)\mid d_i(\mathbf{U} \mathbf{F}_2) \mid  d_i(\mathbf{F}_1)$. Furthermore, we have $\mathbf{F}_2 = \mathbf{U}^{-1} \mathbf{F}_1 \mathbf{V}^{-1}$ since $\mathbf{U}$ and $\mathbf{V}$ are two unimodular matrices. Similarly, we obtain $d_i(\mathbf{F}_1)\mid d_i(\mathbf{U}^{-1} \mathbf{F}_1) \mid  d_i(\mathbf{F}_2)$. Therefore, $d_i(\mathbf{F}_1) = d_i(\mathbf{F}_2)$ up to multiplication by a nonzero constant.
 \end{proof}

\begin{lemma}[\cite{Lu2017}]\label{theorem_Lu_2017}
 Let $\mathbf{F}\in k[\z]^{l\times m}$ with rank $l-r$. If all the $(l-r)\times (l-r)$ minors of $\mathbf{F}$ generate $k[\z]$, then there exists a ZLP matrix $\mathbf{H}\in k[\z]^{r\times l}$ such that $\mathbf{H}\mathbf{F}=\mathbf{0}_{r\times m}$.
\end{lemma}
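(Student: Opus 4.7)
The plan is to build $\mathbf{H}$ by combining the Lin--Bose (Quillen--Suslin) factorization with a standard extension of a zero prime matrix to a unimodular one. First I would note that since all $(l-r)\times(l-r)$ minors of $\mathbf{F}$ generate $k[\z]$, the greatest common divisor $d_{l-r}(\mathbf{F})$ must be a nonzero constant. Consequently, each $(l-r)\times(l-r)$ reduced minor of $\mathbf{F}$ agrees with the corresponding minor up to a unit, so the $(l-r)\times(l-r)$ reduced minors of $\mathbf{F}$ also generate $k[\z]$. The same is then true of $\mathbf{F}^{\mathrm T}\in k[\z]^{m\times l}$, which has the same rank $l-r$.

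Next I would apply Lemma \ref{Lin-Bose Con} to $\mathbf{F}^{\mathrm T}$ to obtain a factorization $\mathbf{F}^{\mathrm T}=\mathbf{G}\mathbf{F}_1$ with $\mathbf{G}\in k[\z]^{m\times(l-r)}$ and $\mathbf{F}_1\in k[\z]^{(l-r)\times l}$ a ZLP matrix. Transposing gives
\[
 \mathbf{F}=\mathbf{F}_1^{\mathrm T}\mathbf{G}^{\mathrm T},
\]
where $\mathbf{F}_1^{\mathrm T}\in k[\z]^{l\times(l-r)}$ is a ZRP matrix. By the column version of Lemma \ref{QS-2} (the Quillen--Suslin theorem), one can construct a unimodular matrix $\mathbf{U}\in k[\z]^{l\times l}$ whose first $l-r$ columns are precisely $\mathbf{F}_1^{\mathrm T}$.

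Finally I would read off $\mathbf{H}$ from $\mathbf{U}^{-1}$. Writing
\[
 \mathbf{U}^{-1}=\begin{bmatrix}\mathbf{V}_1\\ \mathbf{H}\end{bmatrix},\qquad \mathbf{V}_1\in k[\z]^{(l-r)\times l},\ \ \mathbf{H}\in k[\z]^{r\times l},
\]
the identity $\mathbf{U}^{-1}\mathbf{U}=I_l$ forces the bottom-left $r\times(l-r)$ block to vanish, i.e.\ $\mathbf{H}\mathbf{F}_1^{\mathrm T}=\mathbf{0}_{r\times(l-r)}$. Therefore
\[
 \mathbf{H}\mathbf{F}=\mathbf{H}\mathbf{F}_1^{\mathrm T}\mathbf{G}^{\mathrm T}=\mathbf{0}_{r\times m}.
\]
Since $\mathbf{H}$ consists of $r$ rows of the unimodular matrix $\mathbf{U}^{-1}$, its $r\times r$ minors are (up to sign) the complementary $(l-r)\times(l-r)$ minors of the remaining rows $\mathbf{V}_1$, and via the cofactor expansion of $\det(\mathbf{U}^{-1})\in k^{*}$ these minors generate $k[\z]$. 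Hence $\mathbf{H}$ is ZLP, completing the construction.

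The only real subtlety in this plan is verifying that $\mathbf{H}$ is ZLP; everything else is a direct invocation of Lin--Bose and Quillen--Suslin with careful bookkeeping of transposes and row/column conventions. This last step, however, is standard because any $r$ rows of a unimodular matrix form a ZLP matrix, so no serious obstacle should arise.
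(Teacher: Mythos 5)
Your proof is correct. Note first that the paper cites Lemma~\ref{theorem_Lu_2017} from Lu (2017) without supplying a proof, so there is no in-paper argument to compare against; your construction therefore has to stand on its own, and it does. The route you take — constant $d_{l-r}(\mathbf{F})$, hence reduced minors generate $k[\z]$, apply the (transposed) Lin--Bose factorization to write $\mathbf{F}=\mathbf{F}_1^{\mathrm T}\mathbf{G}^{\mathrm T}$ with $\mathbf{F}_1^{\mathrm T}\in k[\z]^{l\times(l-r)}$ ZRP, extend $\mathbf{F}_1^{\mathrm T}$ to a unimodular $\mathbf{U}$ by the column version of Quillen--Suslin, and read $\mathbf{H}$ off as the bottom $r$ rows of $\mathbf{U}^{-1}$ — is sound, and the block identity $\mathbf{U}^{-1}\mathbf{U}=I_l$ does give $\mathbf{H}\mathbf{F}_1^{\mathrm T}=\mathbf{0}$ and so $\mathbf{H}\mathbf{F}=\mathbf{0}$.

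Two small points worth tightening in a polished write-up. First, you invoke the transposed (column/right) forms of both Lemma~\ref{Lin-Bose Con} and Lemma~\ref{QS-2}; the paper states only the row versions and conventionally assumes $l\leq m$, so you should say explicitly that you are applying them to $\mathbf{F}^{\mathrm T}$, for which the hypotheses transfer because minors, reduced minors, and rank are transpose-invariant. Second, the sentence asserting that the $r\times r$ minors of $\mathbf{H}$ ``are (up to sign) the complementary $(l-r)\times(l-r)$ minors of $\mathbf{V}_1$'' is a misstatement: the generalized Laplace expansion does not identify the two sets of minors, it pairs them, expressing $\det(\mathbf{U}^{-1})$ as a signed sum of products of an $r\times r$ minor of $\mathbf{H}$ with the complementary $(l-r)\times(l-r)$ minor of $\mathbf{V}_1$. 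What you actually need — and what this expansion gives — is that the unit $\det(\mathbf{U}^{-1})$ lies in the ideal generated by the $r\times r$ minors of $\mathbf{H}$, whence $\mathbf{H}$ is ZLP. Neither point affects the validity of the argument.
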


 Combining Lemma \ref{theorem_Lu_2017} and the Quillen-Suslin theorem, we can now solve Problem \ref{problem_2}.

\begin{theorem}\label{matrix-equivalent}
 Let $\mathbf{F}\in k[\z]^{l\times l}$ with ${\rm det}(\mathbf{F})=h^r$, where $h = z_1-f(\z_2)$ and $1 \leq r \leq l$. Then $\mathbf{F}$ and ${\rm diag}(\underbrace{h,\ldots,h}_{r}, \underbrace{1,\ldots,1}_{l-r})$ are equivalent if and only if $h\mid d_{l-r+1}(\mathbf{F})$ and the ideal generated by $h$ and all the $(l-r)\times (l-r)$ minors of $\mathbf{F}$ is $k[\z]$.
\end{theorem}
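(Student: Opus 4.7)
My plan is to handle the two directions separately, with necessity being an invariance computation and sufficiency proceeding by explicit construction via Quillen–Suslin, mirroring the argument used in Theorem \ref{main-theorem-problem-1} but adapted to the square, determinant-$h^{r}$ setting.

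For necessity, I would start from $\mathbf{F} = \mathbf{U}\, \mathbf{D}\, \mathbf{V}$ with $\mathbf{D} = {\rm diag}(h,\ldots,h,1,\ldots,1)$ and unimodular $\mathbf{U},\mathbf{V}$. Repeated application of the Binet–Cauchy formula (Lemma \ref{binet-cauchy}) shows $I_i(\mathbf{F}) = I_i(\mathbf{D})$ for every $i$, and Lemma \ref{gcd-equal-cgd} gives $d_i(\mathbf{F}) = d_i(\mathbf{D})$ up to units. A direct computation on the diagonal matrix $\mathbf{D}$ yields $d_{l-r+1}(\mathbf{D}) = h$ (hence $h \mid d_{l-r+1}(\mathbf{F})$) and $1 \in I_{l-r}(\mathbf{D})$ because one of the $(l-r)\times(l-r)$ principal minors of $\mathbf{D}$ is exactly $1$, so $1 \in I_{l-r}(\mathbf{F})$ and \emph{a fortiori} the ideal generated by $h$ together with the $(l-r)\times(l-r)$ minors of $\mathbf{F}$ equals $k[\z]$.

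For sufficiency, the strategy is to evaluate $z_1 \mapsto f(\z_2)$ and produce a left annihilator. First I would determine ${\rm rank}(\mathbf{F}(f,\z_2))$: since $h$ divides every $(l-r+1)\times(l-r+1)$ minor of $\mathbf{F}$, Lemma \ref{lemma_zero} forces all such minors of $\mathbf{F}(f,\z_2)$ to vanish, so the rank is at most $l-r$; conversely, substituting $z_1=f$ into a Bezout relation coming from $\langle h,\, I_{l-r}(\mathbf{F})\rangle = k[\z]$ shows that the $(l-r)\times(l-r)$ minors of $\mathbf{F}(f,\z_2)$ generate $k[\z_2]$, so the rank is exactly $l-r$. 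Next, Lemma \ref{theorem_Lu_2017} applied to $\mathbf{F}(f,\z_2)^{\mathrm{T}}$ (or to $\mathbf{F}(f,\z_2)$ transposed appropriately) produces a ZLP matrix $\mathbf{H} \in k[\z_2]^{r\times l}$ with $\mathbf{H}\,\mathbf{F}(f,\z_2) = \mathbf{0}_{r\times m}$. Then Quillen–Suslin (Lemma \ref{QS-2}) extends $\mathbf{H}$ to a unimodular $\mathbf{U}_0 \in k[\z_2]^{l\times l}$ whose first $r$ rows are $\mathbf{H}$.

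The finish is a short algebraic computation. Because the first $r$ rows of $\mathbf{U}_0\,\mathbf{F}(f,\z_2)$ vanish, Lemma \ref{lemma_zero} implies that $h$ divides every entry of the first $r$ rows of $\mathbf{U}_0\,\mathbf{F}$, so I can write
\[
 \mathbf{U}_0\,\mathbf{F} \;=\; \mathbf{D}\,\mathbf{F}_1, \qquad \mathbf{D} = {\rm diag}(\underbrace{h,\ldots,h}_{r},\underbrace{1,\ldots,1}_{l-r}),
\]
with $\mathbf{F}_1 \in k[\z]^{l\times l}$. Taking determinants gives $\det(\mathbf{U}_0)\cdot h^r = h^r \cdot \det(\mathbf{F}_1)$, and since $\mathbf{U}_0$ is unimodular and $h \neq 0$, we conclude $\det(\mathbf{F}_1) \in k^{\times}$, i.e., $\mathbf{F}_1$ is unimodular. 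Setting $\mathbf{U} = \mathbf{U}_0^{-1}$ and $\mathbf{V} = \mathbf{F}_1$ delivers the required equivalence $\mathbf{F} = \mathbf{U}\,\mathbf{D}\,\mathbf{V}$. The main obstacle is the rank analysis of $\mathbf{F}(f,\z_2)$ together with invoking Lemma \ref{theorem_Lu_2017} correctly in the square case; once the ZLP left annihilator $\mathbf{H}$ is produced, everything else is standard Quillen–Suslin bookkeeping.
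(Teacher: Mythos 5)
Your proof is correct, and the sufficiency direction follows essentially the paper's own argument: establish that $\mathrm{rank}(\mathbf{F}(f,\z_2))=l-r$ and that the $(l-r)\times(l-r)$ minors of $\mathbf{F}(f,\z_2)$ generate $k[\z_2]$, invoke Lemma~\ref{theorem_Lu_2017} to get a left ZLP annihilator $\mathbf{H}$, extend by Quillen--Suslin, and finish with a determinant count. One small clarification: Lemma~\ref{theorem_Lu_2017} is already stated as producing a \emph{left} annihilator $\mathbf{H}$ with $\mathbf{H}\mathbf{F}=\mathbf{0}$, so no transposing of $\mathbf{F}(f,\z_2)$ is needed; the parenthetical hedge there is a red herring, and applying the lemma to the transpose would actually give a right annihilator, which is not what you want. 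Also note that in the square case $m=l$, so the annihilation is $\mathbf{0}_{r\times l}$.

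Your necessity argument, however, takes a genuinely different and cleaner route than the paper. The paper extracts $h\mid d_{l-r+1}(\mathbf{F})$ from Lemma~\ref{gcd-equal-cgd}, as you do, but then proves the second condition by a contradiction-at-a-point rank argument (choosing $\vec{\varepsilon}$ with $h(\vec{\varepsilon})=0$ and $\mathrm{rank}(\mathbf{F}(\vec{\varepsilon}))<l-r$), which leans on $k$ being algebraically closed. You instead observe via Binet--Cauchy that the determinantal ideals $I_i$ are invariant under left and right multiplication by unimodular matrices, so $I_{l-r}(\mathbf{F})=I_{l-r}(\mathbf{D})=k[\z]$ because $\mathbf{D}$ has an $(l-r)\times(l-r)$ principal minor equal to $1$. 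This is a purely algebraic argument that works over any commutative ring and in fact yields the stronger conclusion $1\in I_{l-r}(\mathbf{F})$, from which the paper's weaker condition $\langle h,I_{l-r}(\mathbf{F})\rangle=k[\z]$ follows a fortiori. The trade-off is that your version gives a slightly sharper necessary condition (which is consistent with the theorem as stated, since it still implies the stated one), while the paper's pointwise argument mirrors more closely the geometric flavor of the sufficiency direction.
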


\begin{proof}
 For convenience, let $\mathbf{D} = {\rm diag}(h,\ldots,h,1,\ldots,1)$ and $\bar{\mathbf{F}} = \mathbf{F}(f,\z_2)$. Let $a_1,\ldots, a_\beta$ be all the $(l-r)\times (l-r)$ minors of $\mathbf{F}$. It is obvious that $a_1(f,\z_2),\ldots, a_\beta(f,\z_2)$ are all the $(l-r)\times (l-r)$ minors of $\bar{\mathbf{F}}$.

 Sufficiency. It follows from $h\mid d_{l-r+1}(\mathbf{F})$ that ${\rm rank}(\bar{\mathbf{F}}) \leq l-r$. Assume that there exists a point $(\varepsilon_2,\ldots,\varepsilon_n) \in k^{1\times (n-1)}$ such that
 \begin{equation}\label{equ-main-theorem-1}
  a_i(f(\varepsilon_2,\ldots,\varepsilon_n),\varepsilon_2,\ldots,
 \varepsilon_n)=0,~~i =1,\ldots,\beta.
 \end{equation}
 Let $\varepsilon_1 = f(\varepsilon_2,\ldots,\varepsilon_n)$, then Equation (\ref{equ-main-theorem-1}) implies that $(\varepsilon_1,\varepsilon_2,\ldots,$ $\varepsilon_n) \in k^{1\times n}$ is a common zero of the polynomial system $\{ h=0, a_1 =0,\ldots, a_\beta=0\}$. This contradicts the fact that $h$ and all the $(l-r)\times (l-r)$ minors of $\mathbf{F}$ generate $k[\z]$. Then, all the $(l-r)\times (l-r)$ minors of $\bar{\mathbf{F}}$ generate $k[\z_2]$. According to Lemma \ref{theorem_Lu_2017}, there exists a ZLP matrix $\mathbf{H}\in k[\z_2]^{r\times l}$ such that $\mathbf{H}\bar{\mathbf{F}} = \mathbf{0}_{r\times l}$. Based on the Quillen-Suslin theorem, we can construct a unimodular matrix $\mathbf{U}\in k[\z_2]^{l\times l}$ such that $\mathbf{H}$ is its first $r$ rows. Then, there is a polynomial matrix $\mathbf{V}\in k[\z]^{l\times l}$ such that $\mathbf{U} \mathbf{F} = \mathbf{D} \mathbf{V}$. Since ${\rm det}(\mathbf{F})=h^r$ and $\mathbf{U}$ is a unimodular matrix, we have $\mathbf{F} = \mathbf{U}^{-1} \mathbf{D} \mathbf{V}$ and $\mathbf{V}$ is a unimodular matrix. Therefore, $\mathbf{F}$ and $\mathbf{D}$ are equivalent.

 Necessity. If $\mathbf{F}$ and $\mathbf{D}$ are equivalent, then there exist two unimodular matrices $\mathbf{U}\in k[\z]^{l\times l}$ and $\mathbf{V}\in k[\z]^{l\times l}$ such that $\mathbf{F}= \mathbf{U}  \mathbf{D} \mathbf{V}$. It follows from Lemma \ref{gcd-equal-cgd} that $d_{l-r+1}(\mathbf{F}) = d_{l-r+1}(\mathbf{D}) = h$. If $\langle h, a_1,\ldots,a_\beta \rangle \neq k[\z]$, then there exists a point $\vec{\varepsilon}\in k^{1\times n}$ such that $h(\vec{\varepsilon})=0$ and ${\rm rank}(\mathbf{F}(\vec{\varepsilon}))< l-r$. Obviously, ${\rm rank}(\mathbf{D}(\vec{\varepsilon})) = l-r$ and ${\rm rank}(\mathbf{U}^{-1}(\vec{\varepsilon})) = {\rm rank}(\mathbf{V}^{-1}(\vec{\varepsilon})) = l$. Since $\mathbf{D}= \mathbf{U}^{-1} \mathbf{F}\mathbf{V}^{-1}$, we have
 $${\rm rank}(\mathbf{D}(\vec{\varepsilon})) \leq {\rm min}\{{\rm rank}(\mathbf{U}^{-1}(\vec{\varepsilon})), {\rm rank}(\mathbf{F}(\vec{\varepsilon})), {\rm rank}(\mathbf{V}^{-1}(\vec{\varepsilon}))\},$$
 which leads to a contradiction. Therefore, $\langle h, a_1,\ldots,a_\beta \rangle = k[\z]$ and the proof is completed.
\end{proof}

\begin{remark}
  When $r=l$ in Theorem \ref{matrix-equivalent}, we just need to check whether $h$ is a divisor of $d_1(\mathbf{F})$.
\end{remark}

 Now, we use Example \ref{counter-example-2} to illustrate a constructive method which follows the proof process of the sufficiency of Theorem \ref{matrix-equivalent} and explain how to obtain the two unimodular matrices associated with equivalent matrices.

 \begin{example}
  Let $\mathbf{F}$ be the same polynomial matrix as in Example \ref{counter-example-2}. It is easy to compute that ${\rm det}(\mathbf{F})=(z_1-z_2)^2$ and $d_2(\mathbf{F}) = z_1-z_2$. Let $h = z_1-z_2$, it is obvious that $h\mid  d_2(\mathbf{F})$. The reduced \gr basis of the ideal generated by $h$ and all the $1\times 1$ minors of $\mathbf{F}$ w.r.t. $\prec_{\z}$ is $\{1\}$. Then, $\mathbf{F}$ is equivalent to ${\rm diag}(h,h,1)$.

  Note that
  \[\mathbf{F}(z_2,z_2,z_3) =
  \begin{bmatrix}
   (z_3+1)(z_2-1)  & z_3(z_2-1)  & 0    \\
       z_3+1    &     z_3     &    0    \\
       0    &  0  &  0
   \end{bmatrix},\]
  ${\rm rank}(\mathbf{F}(z_2,z_2,z_3)) = 1$. Let $W = {\rm Im}(\mathbf{F}(z_2,z_2,z_3))$. We compute a system of generators of the syzygy module of $W$, and obtain
  \[\mathbf{H} =
  \begin{bmatrix}
    1   & -z_2+1  &  z_2^2-z_2    \\
   -1   & z_2-1   &  -z_2^2+z_2+1
  \end{bmatrix}\]
  such that $\mathbf{H}\cdot \mathbf{F}(z_2,z_2,z_3) = \mathbf{0}_{2\times 3}$. It is easy to check that $\mathbf{H}$ is a ZLP matrix. Then, a unimodular matrix $\mathbf{U}\in k[\z_2]^{3\times 3}$ can be constructed such that $\mathbf{H}$ is its the first $2$ rows by using the Maple package QUILLENSUSLIN, where
  \[\mathbf{U} =
  \begin{bmatrix}
        1   & -z_2+1  &  z_2^2-z_2    \\
       -1   & z_2-1   &  -z_2^2+z_2+1 \\
       -1   & z_2     &  -z_2^2
  \end{bmatrix}.\]
  Now we can extract $h$ from the first $2$ rows of $\mathbf{U} \mathbf{F}$, and get
  \[\mathbf{F} = \mathbf{U}^{-1}\cdot {\rm diag}(h,h,1)\cdot \mathbf{V} =\begin{bmatrix}
      0   &  z_2   &  z_2 - 1  \\
     z_2  &  z_2+1 &      1    \\
      1   &    1   &      0
   \end{bmatrix}
   \begin{bmatrix}
    h &    0      &    0    \\
    0 &    h      &    0    \\
    0 &    0      &    1
   \end{bmatrix}
   \begin{bmatrix}
    0     &   1     &  1   \\
    1     & z_3+1   & z_3 \\
    z_3+1 &   z_3   & 0
   \end{bmatrix}.\]
 \end{example}

 \section{Generalizations}\label{sec_generalization}

 We construct the following two sets of polynomial matrices:
 $$ \mathcal{M}_1 = \{ \F \in \mathcal{M} : h \nmid d_{l-1}(\F)\} ~ \text{ and } ~ \mathcal{M}_2 = \{ \F \in \mathcal{M} : h \mid d_{l-1}(\F)\}.$$
 Let $\F\in \mathcal{M}$. Assume that $h = z_1 - f(\z_2)$ is given, then $\F \in \mathcal{M}_1$ or $\F \in \mathcal{M}_2$. If $\F \in \mathcal{M}_1$, we can use Theorem \ref{main-theorem-problem-2} to judge whether $\F$ has a matrix factorization w.r.t. $h$. If $\F \in \mathcal{M}_2$, we need to propose some criteria to factorize $\F$.

 Since $d_0(\mathbf{F}) \mid d_1(\mathbf{F}) \mid \cdots \mid d_{l-1}(\mathbf{F}) \mid d_l(\mathbf{F})$, there exists a unique integer $r$ with $1\leq r \leq l$ such that $h \mid d_{l-r+1}(\mathbf{F})$ but $h \nmid d_{l-r}(\mathbf{F})$. Based on this fact, we subdivide $\mathcal{M}_2$ into the following sets:
 $$ \mathcal{M}_{2,r} = \{ \F \in \mathcal{M}_2 : h \mid d_{l-r+1}(\mathbf{F}) ~ \text{ but } ~ h \nmid d_{l-r}(\mathbf{F})\}, ~ r=2,\ldots,l.$$

 \begin{lemma}\label{rank_matrix-2}
  Let $\mathbf{F}\in \mathcal{M}_2$. Then ${\rm rank}(\F(f,\z_2)) = l-r$ with $2 \leq r \leq l$ if and only if $\mathbf{F}\in \mathcal{M}_{2,r}$.
 \end{lemma}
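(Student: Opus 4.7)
The plan is to mirror the proof of Lemma \ref{rank_matrix}, extending its rank-one-deficiency argument to an arbitrary rank deficiency~$r$. The key observation, as in that earlier proof, is that for any $i$, the $i\times i$ minors of $\mathbf{F}(f,\z_2)$ are obtained by substituting $z_1=f$ into the $i\times i$ minors of $\mathbf{F}$, so via Lemma \ref{lemma_zero} the vanishing of an $i\times i$ minor of $\mathbf{F}(f,\z_2)$ corresponds exactly to divisibility by $h$ of the corresponding $i\times i$ minor of $\mathbf{F}$.

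First, I would fix $\mathbf{F}\in\mathcal{M}_2$ and let $a_1,\ldots,a_\alpha$ denote all $(l-r+1)\times(l-r+1)$ minors of $\mathbf{F}$ and $c_1,\ldots,c_\beta$ denote all $(l-r)\times(l-r)$ minors. Then $a_1(f,\z_2),\ldots,a_\alpha(f,\z_2)$ and $c_1(f,\z_2),\ldots,c_\beta(f,\z_2)$ are exactly all the minors of those sizes of $\mathbf{F}(f,\z_2)$.

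Next, I would prove the two directions separately, each by contradiction, exactly as in Lemma \ref{rank_matrix}. For the ``only if'' direction, assume ${\rm rank}(\mathbf{F}(f,\z_2))=l-r$. Then every $(l-r+1)\times(l-r+1)$ minor of $\mathbf{F}(f,\z_2)$ vanishes, so $a_i(f,\z_2)=0$ for each $i$, and Lemma \ref{lemma_zero} gives $h\mid a_i$ for all $i$, hence $h\mid d_{l-r+1}(\mathbf{F})$. On the other hand, some $(l-r)\times(l-r)$ minor of $\mathbf{F}(f,\z_2)$ is nonzero, i.e.\ $c_j(f,\z_2)\neq 0$ for some $j$, and Lemma \ref{lemma_zero} yields $h\nmid c_j$, whence $h\nmid d_{l-r}(\mathbf{F})$. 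This is precisely the membership condition for $\mathcal{M}_{2,r}$. For the ``if'' direction, assume $\mathbf{F}\in\mathcal{M}_{2,r}$. Since $h\mid d_{l-r+1}(\mathbf{F})$, each $a_i$ is divisible by $h$, so $a_i(f,\z_2)=0$ by Lemma \ref{lemma_zero}, which forces ${\rm rank}(\mathbf{F}(f,\z_2))\leq l-r$. Since $h\nmid d_{l-r}(\mathbf{F})$, at least one $c_j$ is not divisible by $h$, so $c_j(f,\z_2)\neq 0$, giving ${\rm rank}(\mathbf{F}(f,\z_2))\geq l-r$. Combining yields equality.

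There is no real obstacle here: the argument is essentially the same as Lemma \ref{rank_matrix} with $l-1$ and $l$ replaced by $l-r$ and $l-r+1$, respectively. The only cosmetic care is the boundary case $r=l$, where $l-r=0$; one should note the convention $d_0(\mathbf{F})=1$, so $h\nmid d_0(\mathbf{F})$ is automatic and the statement reduces to ``${\rm rank}(\mathbf{F}(f,\z_2))=0$ iff $h\mid d_1(\mathbf{F})$,'' which is exactly the statement that every entry of $\mathbf{F}(f,\z_2)$ vanishes iff $h$ divides every entry of $\mathbf{F}$, again immediate from Lemma \ref{lemma_zero}.
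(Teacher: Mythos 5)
Your proof is correct and takes exactly the approach the paper intends: the paper explicitly omits the proof of Lemma \ref{rank_matrix-2} on the grounds that it is ``basically the same as that of Lemma \ref{rank_matrix},'' and your argument is precisely that replacement of $l-1$ (resp.\ $l$) by $l-r$ (resp.\ $l-r+1$), using Lemma \ref{lemma_zero} to translate between vanishing of a minor of $\mathbf{F}(f,\z_2)$ and divisibility by $h$ of the corresponding minor of $\mathbf{F}$. Your note on the boundary case $r=l$ (via the convention $d_0(\mathbf{F})=1$) is a sensible extra detail.
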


 The proof of Lemma \ref{rank_matrix-2} is basically the same as that of Lemma \ref{rank_matrix}, so it is omitted here. Inspired by Theorem \ref{main-theorem-problem-2} and Theorem \ref{matrix-equivalent}, we propose the following result for the existence of a matrix factorization of $\mathbf{F} \in \mathcal{M}_{2,r}$ w.r.t. $h^r$, where $2 \leq r < l$.

 \begin{theorem}\label{main-theorem-1-2-2}
  Let $\mathbf{F}\in \mathcal{M}_{2,r}$ with $2 \leq r < l$, then the following are equivalent:
  \begin{enumerate}
    \item there are $\mathbf{G}_1\in k[\z]^{l\times l}$ and $\mathbf{F}_1\in k[\z]^{l\times m}$ such that $\mathbf{F}=\mathbf{G}_1\mathbf{F}_1$, and $\mathbf{G}_1$ is equivalent to ${\rm diag}(\underbrace{h,\ldots,h}_{r}, \underbrace{1,\ldots,1}_{l-r})$;

    \item all the $(l-r)\times (l-r)$ column reduced minors of $\F(f,\z_2)$ generate $k[\z_2]$.
  \end{enumerate}
 \end{theorem}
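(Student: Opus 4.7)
The plan is to generalize the template of the proof of Theorem~\ref{main-theorem-problem-2} by replacing the single ZLP row vector $\vec w$ with a ZLP matrix $\mathbf H\in k[\z_2]^{r\times l}$. By Lemma~\ref{rank_matrix-2}, the assumption $\F\in\mathcal{M}_{2,r}$ gives ${\rm rank}(\F(f,\z_2))=l-r$, so if $\bar{\F}\in k[\z_2]^{l\times(l-r)}$ is any full column rank submatrix of $\F(f,\z_2)$, then the column reduced minors of $\F(f,\z_2)$ that appear in~(2) are by definition the $(l-r)\times(l-r)$ reduced minors of $\bar{\F}$. Throughout, write $\mathbf D={\rm diag}(h,\ldots,h,1,\ldots,1)$ with $r$ copies of $h$.

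For $(1)\Rightarrow(2)$, I would decompose $\mathbf G_1=\mathbf U\mathbf D\mathbf V$ with $\mathbf U,\mathbf V\in k[\z]^{l\times l}$ unimodular, rewrite it as $\mathbf U^{-1}\mathbf G_1=\mathbf D\mathbf V$, and substitute $z_1=f$ to kill the first $r$ rows of $\mathbf D(f,\z_2)$. Setting $\mathbf H$ equal to the first $r$ rows of $\mathbf U(f,\z_2)^{-1}$ (which is unimodular in $k[\z_2]^{l\times l}$), we then have $\mathbf H\mathbf G_1(f,\z_2)=0$ and hence $\mathbf H\F(f,\z_2)=0$; because $\mathbf H$ is a row-block of a unimodular matrix it admits a right inverse in $k[\z_2]^{l\times r}$, so by Cauchy--Binet its $r\times r$ minors generate $k[\z_2]$, i.e., $\mathbf H$ is ZLP. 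Restricting to $\bar{\F}$ yields $\mathbf H\bar{\F}=0$, and Lemma~\ref{RM_relation} (with $t=l$ over $k[\z_2]$) identifies the $r\times r$ reduced minors of $\mathbf H$ with the $(l-r)\times(l-r)$ reduced minors of $\bar{\F}$ up to sign; the former generate $k[\z_2]$, establishing~(2).

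For $(2)\Rightarrow(1)$, I would apply Lin--Bose (Lemma~\ref{Lin-Bose Con}) to $\bar{\F}^{\rm T}\in k[\z_2]^{(l-r)\times l}$: its $(l-r)\times(l-r)$ reduced minors coincide with the column reduced minors of $\F(f,\z_2)$ and therefore generate $k[\z_2]$ by hypothesis. This yields a factorization $\bar{\F}^{\rm T}=\mathbf A\mathbf B$ with $\mathbf B\in k[\z_2]^{(l-r)\times l}$ ZLP. The $(l-r)\times(l-r)$ minors of $\mathbf B^{\rm T}$ then generate $k[\z_2]$, so Lemma~\ref{theorem_Lu_2017} supplies a ZLP matrix $\mathbf H\in k[\z_2]^{r\times l}$ with $\mathbf H\mathbf B^{\rm T}=0$, and transposing the Lin--Bose identity gives $\mathbf H\bar{\F}=\mathbf H\mathbf B^{\rm T}\mathbf A^{\rm T}=0$. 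Since every column of $\F(f,\z_2)$ lies in the $k(\z_2)$-span of the columns of $\bar{\F}$, this lifts to $\mathbf H\F(f,\z_2)=0$. Using Lemma~\ref{QS-2} I would embed $\mathbf H$ as the first $r$ rows of a unimodular $\mathbf U'\in k[\z_2]^{l\times l}$; Lemma~\ref{lemma_zero} then produces $\mathbf F_1\in k[\z]^{l\times m}$ with $\mathbf U'\F=\mathbf D\mathbf F_1$, and setting $\mathbf G_1=\mathbf U'^{-1}\mathbf D$ delivers the factorization $\F=\mathbf G_1\mathbf F_1$, with $\mathbf G_1=\mathbf U'^{-1}\mathbf D\,\mathbf I_l$ manifestly equivalent to $\mathbf D$ over $k[\z]$.

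The principal obstacle is the direction $(2)\Rightarrow(1)$: the hypothesis controls \emph{reduced} minors of the \emph{submatrix} $\bar{\F}$ rather than ordinary minors of $\F(f,\z_2)$, so Lemma~\ref{theorem_Lu_2017} does not apply directly to $\F(f,\z_2)$ itself. The transpose--Lin--Bose step is the crux, converting the reduced-minor condition on $\bar{\F}$ into a genuine minors-generate condition on the ZLP factor $\mathbf B^{\rm T}$, where Lemma~\ref{theorem_Lu_2017} does apply verbatim. The remaining lift from a left annihilator of $\bar{\F}$ to one of all of $\F(f,\z_2)$ is a routine rank argument over the fraction field $k(\z_2)$.
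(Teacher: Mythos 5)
Your proof is correct, and in both directions it diverges from the paper's argument. For $(1)\Rightarrow(2)$, the paper first invokes Theorem~\ref{matrix-equivalent}: equivalence of $\mathbf{G}_1$ with $\mathbf{D}$ shows that $h$ together with the $(l-r)\times(l-r)$ minors of $\mathbf{G}_1$ generates $k[\z]$, hence after substituting $z_1=f$ those minors generate $k[\z_2]$, and Lemma~\ref{theorem_Lu_2017} then produces a ZLP left annihilator of $\mathbf{G}_1(f,\z_2)$. Your route is more elementary: it bypasses the equivalence criterion and the $Fitt$-ideal machinery entirely and reads the annihilator directly off the first $r$ rows of $\mathbf{U}(f,\z_2)^{-1}$ from the decomposition $\mathbf{G}_1=\mathbf{U}\mathbf{D}\mathbf{V}$, noting that a row-block of a unimodular matrix is automatically ZLP. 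For $(2)\Rightarrow(1)$, the paper takes a shorter path: pick any full-row-rank $\mathbf{H}\in k[\z_2]^{r\times l}$ annihilating $\F(f,\z_2)$ (it exists since the rank is $l-r$), apply Lemma~\ref{RM_relation} to see that the $r\times r$ reduced minors of $\mathbf{H}$ already generate $k[\z_2]$, then apply Lin--Bose to $\mathbf{H}$ itself to peel off a nonsingular $r\times r$ left factor and leave a ZLP annihilator. Your detour---Lin--Bose applied to $\bar{\F}^{\rm T}$, then Lemma~\ref{theorem_Lu_2017} on the resulting ZRP factor $\mathbf{B}^{\rm T}$, then a fraction-field column-span argument to lift $\mathbf{H}\bar{\F}=\mathbf{0}$ to $\mathbf{H}\F(f,\z_2)=\mathbf{0}$---is valid but inserts two steps (the transpose and the lift) that the paper sidesteps by handing Lin--Bose the kernel matrix rather than the image submatrix. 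Both proofs rely on the same ingredients (Lemma~\ref{RM_relation}, Lin--Bose, Quillen--Suslin); they simply invoke Lin--Bose at a different point, and your $(1)\Rightarrow(2)$ has the genuine merit of making the implication self-contained rather than downstream of Theorem~\ref{matrix-equivalent}.
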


 \begin{proof}
  $1 \rightarrow 2$. Since $\mathbf{G}_1$ and ${\rm diag}(h,\ldots,h,1,\ldots,1)$ are equivalent, we have $h\mid d_{l-r+1}(\mathbf{G}_1)$ and $\langle h, g_1,\ldots,g_\eta \rangle$ $= k[\z]$ by Theorem \ref{matrix-equivalent}, where $g_1,\ldots,g_\eta$ are all the $(l-r)\times (l-r)$ minors of $\mathbf{G}_1$. This implies that all the $(l-r)\times (l-r)$ minors of $\mathbf{G}_1(f,\z_2)$ generate $k[\z_2]$. According to Lemma \ref{theorem_Lu_2017}, we can construct a ZLP matrix $\mathbf{W}\in k[\z_2]^{r\times l}$ such that $\mathbf{W}\mathbf{G}_1(f,\z_2) = \mathbf{0}_{r\times l}$. It follows from $\mathbf{F}=\mathbf{G}_1\mathbf{F}_1$ that $\mathbf{W}\mathbf{F}(f,\z_2)=\mathbf{0}_{r\times m}$. Since $\mathbf{W}$ is a ZLP matrix, all the $(l-r)\times (l-r)$ column reduced minors of $\F(f,\z_2)$ generate $k[\z_2]$.

  $2 \rightarrow 1$. From Lemma \ref{rank_matrix-2}, there exists a full row rank matrix $\mathbf{H}\in k[\z_2]^{r\times l}$ such that $\mathbf{H}\F(f,\z_2) = \mathbf{0}_{r\times m}$. Since all the $(l-r)\times (l-r)$ column reduced minors of $\F(f,\z_2)$ generate $k[\z_2]$, all the $r\times r$ reduced minors of $\mathbf{H}$ generate $k[\z_2]$ by Lemma \ref{RM_relation}. Using Lemma \ref{Lin-Bose Con}, $\mathbf{H}$ has a ZLP matrix factorization: $\mathbf{H} = \mathbf{H}_1\mathbf{H}_2$, where $\mathbf{H}_1\in k[\z_2]^{r\times r}$, and $\mathbf{H}_2\in k[\z_2]^{r\times l}$ is a ZLP matrix. As $\mathbf{H}_1$ is a full column rank matrix, it follows from $\mathbf{H}\F(f,\z_2) = \mathbf{0}_{r\times m}$ that $\mathbf{H}_2\F(f,\z_2) = \mathbf{0}_{r\times m}$. Using Quillen-Suslin theorem, we can construct a unimodular matrix $\mathbf{U}\in k[\z_2]^{l\times l}$ such that $\mathbf{H}_2$ is its first $r$ rows. This implies that there is $\mathbf{F}_1\in k[\z]^{l\times m}$ such that $\mathbf{U}\mathbf{F} = \mathbf{D}\mathbf{F}_1$, where $\mathbf{D} = {\rm diag}(h,\ldots,h,1,\ldots,1)$ with ${\rm det}(\mathbf{D})=h^r$. Therefore, we obtain a matrix factorization of $\mathbf{F}$ w.r.t. $h^r$, i.e., $\mathbf{F}=\mathbf{G}_1 \mathbf{F}_1$ with $\mathbf{G}_1 = \mathbf{U}^{-1}\mathbf{D}$. Obviously, $\mathbf{G}_1$ is equivalent to $\mathbf{D}$.
 \end{proof}

 \begin{remark}\label{main-theorem-2-remark}
  In Theorem \ref{main-theorem-1-2-2}, the matrix factorization $\F = \G_1\F_1$ must satisfies that $\mathbf{G}_1$ is equivalent to ${\rm diag}(h,\ldots,h, 1,\ldots,1)$. Since there exist many polynomial matrices such that their matrix factorizations do not satisfy this requirement, the condition ``all the $(l-r)\times (l-r)$ column reduced minors of $\F(f,\z_2)$ generate $k[\z_2]$" is only a sufficient condition for the existence of a matrix factorization of $\F \in \mathcal{M}_{2,r}$ w.r.t. $h^r$, where $2 \leq r < l$.
 \end{remark}

 \begin{theorem}\label{main-theorem-remark}
  Let $\mathbf{F}\in \mathcal{M}_{2,l}$, then $h$ is a common divisor of all entries in $\F$. We can extract $h$ from each row of $\mathbf{F}$ and obtain a matrix factorization of $\mathbf{F}$ w.r.t. $h^l$.
 \end{theorem}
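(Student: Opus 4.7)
The plan is to unpack the definition of $\mathcal{M}_{2,l}$ and observe that the claim reduces to a direct computation. First I would recall that
$$\mathcal{M}_{2,l} = \{\F \in \mathcal{M}_2 : h \mid d_{l-l+1}(\F) \text{ and } h \nmid d_{l-l}(\F)\},$$
and that by the convention $d_0(\F) = 1$, the condition $h \nmid d_0(\F)$ holds automatically (since $h = z_1 - f(\z_2)$ is nonconstant). Therefore, membership in $\mathcal{M}_{2,l}$ is equivalent to the single condition $h \mid d_1(\F)$.

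Next I would use the definition of $d_1(\F)$: it is the greatest common divisor of all $1\times 1$ minors of $\F$, i.e., of all the entries $\F[i,j]$ of $\F$. Thus $h \mid d_1(\F)$ is equivalent to saying that $h$ is a common divisor of every entry of $\F$, which gives the first assertion.

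For the second assertion, having $h \mid \F[i,j]$ for every $i,j$ means there exist polynomials $\bar f_{ij} \in k[\z]$ with $\F[i,j] = h \cdot \bar f_{ij}$. Assembling these into a matrix $\F_1 = (\bar f_{ij}) \in k[\z]^{l\times m}$ yields $\F = h\,\F_1$. Setting $\mathbf{G}_1 = h \cdot I_l \in k[\z]^{l\times l}$, which has $\det(\mathbf{G}_1) = h^l$, we obtain the factorization $\F = \mathbf{G}_1 \F_1$ with respect to $h^l$ in the sense of Definition \ref{matrix_factorization}. Equivalently, this factorization corresponds to extracting $h$ from each of the $l$ rows of $\F$, as claimed.

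There is no substantive obstacle in this argument; the whole content of the theorem is really the observation that the subtle-looking condition $\F \in \mathcal{M}_{2,l}$ collapses, via the convention $d_0(\F)=1$, to the elementary statement that $h$ divides every entry of $\F$. If anything, the only point to be careful about is noting explicitly why $h \nmid d_0(\F)$ is vacuous, so that the reader does not suspect a hidden constraint beyond $h \mid d_1(\F)$.
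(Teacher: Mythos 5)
Your argument is correct and is essentially the (implicit) argument the paper relies on: the paper states this theorem without an explicit proof, treating it as immediate from the definitions, and your unpacking of $\mathcal{M}_{2,l}$ via the convention $d_0(\F)=1$, together with the observation that $h \mid d_1(\F)$ means $h$ divides every entry, is exactly the intended reasoning.
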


 Let $k[\bar{\z}_i] = k[z_1,\ldots,z_{i-1},z_{i+1},\ldots,z_n]$ and $h_i = z_i - f(\bar{\z}_i)$, where $f(\bar{\z}_i)\in k[\bar{\z}_i]$ and $1 \leq i \leq n$. We construct the following sets of polynomial matrices:
 $$\mathcal{M}^{(i,r)} = \{\mathbf{F}\in k[\z]^{l\times m} : h_i \mid d_{l-r+1}(\mathbf{F}) ~ \text{ but } ~ h_i \nmid d_{l-r}(\mathbf{F})\}, ~ r=1,\ldots,l.$$
 Then, we can get the following corollary.

 \begin{corollary}\label{corollary-problem-1}
  Let $\mathbf{F}\in \mathcal{M}^{(i,r)}$, where $1\leq i \leq n$ and $1 \leq r \leq l$. If all the $(l-r)\times (l-r)$ column reduced minors of $\F(z_1,\ldots,z_{i-1},f,z_{i+1},\ldots,z_n)$ generate $k[\bar{\z}_i]$, then $\F$ admits a matrix factorization w.r.t. $h_i^r$.
 \end{corollary}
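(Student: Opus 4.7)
The plan is to reduce Corollary \ref{corollary-problem-1} to the already-established factorization criteria (Theorems \ref{main-theorem-problem-2}, \ref{main-theorem-1-2-2}, and \ref{main-theorem-remark}) by performing a variable swap. Concretely, let $\sigma\colon k[\z]\to k[\z]$ be the $k$-algebra automorphism that exchanges $z_1$ and $z_i$ and fixes the remaining variables. Apply $\sigma$ entrywise to $\F$ to obtain $\widetilde{\F}=\sigma(\F)\in k[\z]^{l\times m}$, and observe that $\sigma(h_i)=z_1-\tilde f(\z_2)$, where $\tilde f(\z_2)\in k[\z_2]$ is the polynomial obtained from $f(\bar\z_i)$ by the same swap. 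Because $\sigma$ is a ring automorphism acting on minors by permuting their entries, it sends $d_j(\F)$ to an associate of $d_j(\widetilde\F)$ for every $j$, so the hypothesis $\F\in\mathcal{M}^{(i,r)}$ translates to $\sigma(h_i)\mid d_{l-r+1}(\widetilde\F)$ and $\sigma(h_i)\nmid d_{l-r}(\widetilde\F)$; that is, $\widetilde\F$ belongs to $\mathcal{M}$ and to the corresponding stratum (either $\widetilde\F\in\mathcal{M}_1$ when $r=1$, or $\widetilde\F\in\mathcal{M}_{2,r}$ when $2\le r\le l$) with respect to $z_1-\tilde f(\z_2)$.

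Next I would translate the column reduced minor hypothesis. Since $\sigma$ commutes with substitution of $z_i$ by $f(\bar\z_i)$ in $\F$ (which, after the swap, is exactly substitution of $z_1$ by $\tilde f(\z_2)$ in $\widetilde\F$), we have $\sigma\bigl(\F(z_1,\ldots,z_{i-1},f,z_{i+1},\ldots,z_n)\bigr)=\widetilde\F(\tilde f,\z_2)$. The automorphism $\sigma$ restricts to a bijection $k[\bar\z_i]\to k[\z_2]$ preserving the unit ideal and taking $r\times r$ reduced minors to $r\times r$ reduced minors. Hence the assumption that the $(l-r)\times(l-r)$ column reduced minors of $\F(z_1,\ldots,z_{i-1},f,z_{i+1},\ldots,z_n)$ generate $k[\bar\z_i]$ becomes: the $(l-r)\times(l-r)$ column reduced minors of $\widetilde\F(\tilde f,\z_2)$ generate $k[\z_2]$.

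I would then split on the value of $r$ and apply the appropriate theorem to $\widetilde\F$: Theorem \ref{main-theorem-problem-2} for $r=1$, Theorem \ref{main-theorem-1-2-2} for $2\le r<l$, and Theorem \ref{main-theorem-remark} for $r=l$. In each case this produces a factorization $\widetilde\F=\widetilde\G_1\widetilde\F_1$ with $\widetilde\G_1\in k[\z]^{l\times l}$ and $\det(\widetilde\G_1)$ equal, up to a nonzero constant, to $\sigma(h_i)^r$. Pulling back through $\sigma$ (which is an involution) yields $\F=\G_1\F_1$ with $\G_1=\sigma(\widetilde\G_1)$ and $\F_1=\sigma(\widetilde\F_1)$, and $\det(\G_1)=\sigma(\det(\widetilde\G_1))$ is a nonzero constant multiple of $h_i^r$; absorbing that constant into $\F_1$ gives a factorization in the sense of Definition \ref{matrix_factorization}.

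I do not expect a serious obstacle: the whole argument is a symmetry reduction, and the only points requiring care are the verifications that $\sigma$ is compatible with minor extraction, with the passage from minors to reduced minors (so that it maps column reduced minors to column reduced minors), and with the substitution $z_i\mapsto f(\bar\z_i)$. These are formal consequences of $\sigma$ being a $k$-algebra automorphism permuting the indeterminates, so once stated cleanly they reduce the corollary to the three previously proved theorems without any additional work.
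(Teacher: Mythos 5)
Your proposal is correct and is precisely the argument the paper leaves implicit: Corollary~\ref{corollary-problem-1} is stated without proof because it follows from Theorems~\ref{main-theorem-problem-2}, \ref{main-theorem-1-2-2} and \ref{main-theorem-remark} by relabelling $z_i\leftrightarrow z_1$, and your $\sigma$-automorphism argument is the clean formalization of that relabelling. The verifications you flag (compatibility of $\sigma$ with minors, reduced minors, column reduced minors, and substitution) all go through because $\sigma$ is a $k$-algebra automorphism permuting the indeterminates, and the final bookkeeping step of absorbing the nonzero constant $\det(\mathbf{U}^{-1})$ into $\F_1$ so that $\det(\G_1)=h_i^r$ exactly is handled the same way the paper handles it in the cited theorems.
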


\section{Factorization algorithm and its uniqueness}\label{sec_algorithm}

 In this section, we first propose an algorithm to factorize $\F\in \mathcal{M}$ w.r.t. $h^r$, where $1 \leq r \leq l$. And then, we study the uniqueness of matrix factorizations by the algorithm.

\subsection{Factorization algorithm}

 According to Theorem \ref{main-theorem-problem-2}, Theorem \ref{main-theorem-1-2-2} and Theorem \ref{main-theorem-remark}, we construct an algorithm to factorize polynomial matrices in $\mathcal{M}$.

 \begin{algorithm}[!htb]
 \DontPrintSemicolon
 \SetAlgoSkip{}
 \LinesNumbered
 \SetKwInOut{Input}{Input}
 \SetKwInOut{Output}{Output}

 \Input{$\mathbf{F}\in \mathcal{M}$, $h=z_1-f(\z_2)$ and a monomial order $\prec_{\z_2}$ in $k[\z_2]$.}

 \Output{a matrix factorization of $\mathbf{F}$ w.r.t. $h^r$, where $1 \leq r \leq l$.}

 \Begin{

  compute the rank $l-r$ of $\mathbf{F}(f,\z_2)$;

  \If{$r = l$}
  {
    extract $h$ from each row of $\mathbf{F}$ and obtain $\mathbf{F}_1$, i.e., $\mathbf{F} = {\rm diag}(h,\ldots,h) \cdot \mathbf{F}_1$;

    {\bf return} ${\rm diag}(h,\ldots,h)$ and $\mathbf{F}_1$.
  }

  compute a reduced \gr basis $\mathcal{G}$ of all the $(l-r)\times (l-r)$ column reduced minors of $\mathbf{F}(f,\z_2)$ w.r.t. $\prec_{\z_2}$;

  \If{$\mathcal{G} \neq \{1\}$}
  {
    \If{$r =1$}
    {
      {\bf return} $\F$ has no matrix factorizations w.r.t. $h$.
    }
    \Else
    {
      {\bf return} unable to judge.
    }
  }

  compute a ZLP matrix $\mathbf{H} \in k[\z_2]^{r\times l}$ such that $\mathbf{H}\mathbf{F}(f,\z_2) = \mathbf{0}_{r\times m}$;

  construct a unimodular matrix $\mathbf{U}\in k[\z_2]^{l\times l}$ such that $\mathbf{H}$ is its first $r$ rows;

  compute $\F_1 \in k[\z]^{l\times m}$ such that $\mathbf{U}\mathbf{F} = {\rm diag}(h,\ldots,h,1,\ldots,1) \cdot \mathbf{F}_1$;

  {\bf return} $\mathbf{U}^{-1}\cdot {\rm diag}(h,\ldots,h,1,\ldots,1)$ and $\mathbf{F}_1$.
 }
 \caption{factorization algorithm}
 \label{MF_Algorithm}
 \end{algorithm}

 \begin{theorem}
  Algorithm \ref{MF_Algorithm} works correctly.
 \end{theorem}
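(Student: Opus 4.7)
The plan is to justify correctness by tracing each branch of Algorithm \ref{MF_Algorithm} and invoking the appropriate theorem or lemma from the previous sections. The algorithm's branching is determined by the integer $r$ defined via $\mathrm{rank}(\mathbf{F}(f,\z_2)) = l - r$, and within each sub-branch by whether the reduced \gr basis $\mathcal{G}$ of the $(l-r)\times(l-r)$ column reduced minors of $\mathbf{F}(f,\z_2)$ equals $\{1\}$.

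First I would handle the case $r = l$, where $\mathbf{F}(f,\z_2) = \mathbf{0}_{l\times m}$ and Lemma \ref{lemma_zero} immediately implies $h$ divides every entry of $\mathbf{F}$; the extraction step then yields a factorization matching Theorem \ref{main-theorem-remark}. Next, when $r < l$ and $\mathcal{G} \neq \{1\}$, two subcases arise. If $r = 1$, then by Lemma \ref{rank_matrix} we have $h \nmid d_{l-1}(\mathbf{F})$, and Theorem \ref{main-theorem-problem-2} (its necessity direction) guarantees that $\mathbf{F}$ has no factorization w.r.t.~$h$, so the algorithm correctly reports this. If $r \geq 2$, Remark \ref{main-theorem-2-remark} warns that the column reduced minors generating a proper ideal of $k[\z_2]$ does not preclude a factorization of some other shape, so the algorithm correctly returns ``unable to judge.''

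Finally, when $r < l$ and $\mathcal{G} = \{1\}$, the constructive steps must be verified. By Lemma \ref{rank_matrix} (for $r=1$) or Lemma \ref{rank_matrix-2} (for $r\geq 2$), the left kernel of $\mathbf{F}(f,\z_2)$ has rank $r$; Lemma \ref{RM_relation} identifies the $(l-r)\times(l-r)$ column reduced minors of $\mathbf{F}(f,\z_2)$ with the $r\times r$ reduced minors of any full-row-rank kernel matrix, so the hypothesis $\mathcal{G}=\{1\}$ lets us apply Lemma \ref{Lin-Bose Con} to extract a ZLP factor $\mathbf{H}\in k[\z_2]^{r\times l}$ satisfying $\mathbf{H}\mathbf{F}(f,\z_2) = \mathbf{0}$. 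Lemma \ref{QS-2} then provides the unimodular completion $\mathbf{U}\in k[\z_2]^{l\times l}$ whose first $r$ rows form $\mathbf{H}$. The first $r$ rows of $\mathbf{U}\mathbf{F}(f,\z_2)$ vanish, so Lemma \ref{lemma_zero} gives $\mathbf{U}\mathbf{F} = \mathrm{diag}(h,\ldots,h,1,\ldots,1)\mathbf{F}_1$, and setting $\mathbf{G}_1 = \mathbf{U}^{-1}\mathrm{diag}(h,\ldots,h,1,\ldots,1)$ yields the required factorization with $\det(\mathbf{G}_1)=h^r$; this reproduces the proofs of Theorem \ref{main-theorem-problem-2} (for $r=1$) and Theorem \ref{main-theorem-1-2-2} (for $r\geq 2$) nearly verbatim.

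The main obstacle, really a bookkeeping point rather than a substantive difficulty, will be confirming that each subroutine invoked (rank computation over $k[\z_2]$, reduced \gr basis computation, Lin--Bose style extraction of a ZLP factor of the syzygy matrix via Lemma \ref{Lin-Bose Con}, and the Quillen--Suslin completion realised by the \textsc{QuillenSuslin} Maple package) terminates on the inputs it receives and returns output satisfying the hypotheses of the subsequent step. Since all of these are effectively computable routines cited earlier in the paper, correctness of Algorithm \ref{MF_Algorithm} reduces to the case analysis above combined with the already-established theorems.
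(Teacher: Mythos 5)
Your proposal is correct and takes the same approach as the paper, which simply cites Theorems \ref{main-theorem-problem-2}, \ref{main-theorem-1-2-2}, and \ref{main-theorem-remark} as covering the three branches ($r=1$, $1<r<l$, $r=l$) of the algorithm. You have merely unpacked that one-line proof into an explicit case analysis, correctly matching each branch to the corresponding theorem (and Remark \ref{main-theorem-2-remark} for the ``unable to judge'' outcome) and retracing the constructive steps already established in those theorems' proofs.
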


 \begin{proof}
  The proof follows directly from Theorem \ref{main-theorem-problem-2}, Theorem \ref{main-theorem-1-2-2} and Theorem \ref{main-theorem-remark}.
 \end{proof}

 \noindent Before proceeding further, let us remark on Algorithm \ref{MF_Algorithm}.
 \begin{itemize}
  \item It follows from $\mathcal{G} \neq \{1\}$ in Step 7 that all the $(l-r)\times (l-r)$ column reduced minors of $\mathbf{F}(f,\z_2)$ do not generate $k[\z_2]$.

  \item Under the assumption that $\mathcal{G} \neq \{1\}$ and $r = 1$, the algorithm in Step 9 returns that ``$\F$ has no matrix factorizations w.r.t. $h$" by Theorem \ref{main-theorem-problem-2}. When $\mathcal{G} \neq \{1\}$ and $1 < r < l$, the algorithm in Step 11 returns that ``unable to judge" by Remark \ref{main-theorem-2-remark}.

  \item We explain how to calculate a ZLP matrix $\mathbf{H}$ in Step 12. We first compute a \gr basis $\mathcal{G}^\ast$ of the syzygy module of $\mathbf{F}(f,\z_2)$. As ${\rm rank}(\mathbf{F}(f,\z_2)) = l-r$, we can select $r$ $k[\z_2]$-linearly independent vectors from $\mathcal{G}^\ast$ and form $\mathbf{H}_0 \in k[\z_2]^{r\times l}$ with full row rank. According to Lemma \ref{RM_relation}, all the $r\times r$ reduced minors of $\mathbf{H}_0$ generate $k[\z_2]$. Then, $\mathbf{H}_0$ has a ZLP matrix factorization by Lemma \ref{Lin-Bose Con}. Hence, we second use the Maple package QUILLENSUSLIN to compute a ZLP matrix factorization of $\mathbf{H}_0$ and obtain a ZLP matrix $\mathbf{H}$.

  \item In Step 13 we use QUILLENSUSLIN again to construct a unimodular matrix. Since QUILLENSUSLIN is a Maple package, we implement the factorization algorithm on Maple. Codes and examples are available on the website: \url{http://www.mmrc.iss.ac.cn/~dwang/software.html}.
 \end{itemize}

\subsection{Uniqueness of matrix factorizations}

 \cite{Liu2015Further} studied the uniqueness problem of polynomial matrix factorizations. They pointed out that for a non-regular factor $h_0$ of $\mathbf{F}\in k[\z]^{l\times m}$, under the condition that there exists a matrix factorization $\mathbf{F}=\mathbf{G}_1\mathbf{F}_1$ with ${\rm det}(\mathbf{G}_1) = h_0$, ${\rm Im}(\mathbf{F}_1)$ is not uniquely determined. In other words, when $\mathbf{F}=\mathbf{G}_1\mathbf{F}_1 = \mathbf{G}_2\mathbf{F}_2$ with ${\rm det}(\mathbf{G}_1) = {\rm det}(\mathbf{G}_2) = h_0$, ${\rm Im}(\mathbf{F}_1)$ and ${\rm Im}(\mathbf{F}_2)$ might not be the same.

 Let $\mathbf{F}\in \mathcal{M}$, $h = z_1 - f(\z_2)$ and $\prec_{\z_2}$ are given. We use Algorithm \ref{MF_Algorithm} to factorize $\F$ w.r.t. $h^r$, where $1\leq r \leq l$. Assume that all the $(l-r)\times (l-r)$ column reduced minors of $\F(f,\z_2)$ generate $k[\z_2]$, then we need to compute a ZLP matrix and construct a unimodular matrix. Due to the different choices of a ZLP matrix and a unimodular matrix, we will get different matrix factorizations of $\mathbf{F}$ w.r.t. $h^r$. Hence, in the following we study the uniqueness of matrix factorizations by Algorithm \ref{MF_Algorithm}.

 \begin{theorem}\label{main-theorem-uniqueness}
  Let $\mathbf{F}\in \mathcal{M}$ satisfy $\mathbf{F}= \mathbf{U}_1^{-1}\mathbf{D}\mathbf{F}_1 = \mathbf{U}_2^{-1}\mathbf{D}\mathbf{F}_2$, where $\mathbf{U}_1$, $\mathbf{U}_2$ are two unimodular matrices in $k[\z_2]^{l\times l}$, and $\mathbf{D}={\rm diag}(\underbrace{h,\ldots,h}_{r}, \underbrace{1,\ldots,1}_{l-r})$. Then, ${\rm Im}(\mathbf{F}_1)={\rm Im}(\mathbf{F}_2)$.
 \end{theorem}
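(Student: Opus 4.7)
The plan is to show $\mathbf{F}_1 = \mathbf{M}\mathbf{F}_2$ for some unimodular $\mathbf{M}\in k[\z]^{l\times l}$, which will immediately give ${\rm Im}(\mathbf{F}_1) = {\rm Im}(\mathbf{F}_2)$. Setting $\mathbf{U} := \mathbf{U}_1 \mathbf{U}_2^{-1} \in k[\z_2]^{l\times l}$ (again unimodular), the two given equalities combine into $\mathbf{D}\mathbf{F}_1 = \mathbf{U}\mathbf{D}\mathbf{F}_2$. I would partition $\mathbf{U}$ into blocks $\mathbf{U}_{ij}$ with $\mathbf{U}_{11}$ of size $r\times r$, and $\mathbf{F}_j$ into $\mathbf{F}_j^{(1)}$ (first $r$ rows) and $\mathbf{F}_j^{(2)}$ (last $l-r$ rows) conformally with $\mathbf{D}$; the equation then splits into $h\mathbf{F}_1^{(1)} = h\mathbf{U}_{11}\mathbf{F}_2^{(1)} + \mathbf{U}_{12}\mathbf{F}_2^{(2)}$ and $\mathbf{F}_1^{(2)} = h\mathbf{U}_{21}\mathbf{F}_2^{(1)} + \mathbf{U}_{22}\mathbf{F}_2^{(2)}$.

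The decisive step is to prove $\mathbf{U}_{12} = \mathbf{0}$. The first block equation forces $h$ to divide every entry of $\mathbf{U}_{12}\mathbf{F}_2^{(2)}$, so substituting $z_1 = f(\z_2)$ and invoking Lemma~\ref{lemma_zero} yields $\mathbf{U}_{12}(\z_2)\cdot\mathbf{F}_2^{(2)}(f,\z_2) = \mathbf{0}$ in $k[\z_2]^{r\times m}$. To cancel $\mathbf{F}_2^{(2)}(f,\z_2)$ on the right, I would show it has full row rank $l-r$: by Lemma~\ref{rank_matrix} (when $r=1$) or Lemma~\ref{rank_matrix-2} (when $r\ge 2$), ${\rm rank}(\mathbf{F}(f,\z_2)) = l-r$, and evaluating $\mathbf{U}_2\mathbf{F} = \mathbf{D}\mathbf{F}_2$ at $z_1 = f(\z_2)$ gives $\mathbf{U}_2(\z_2)\mathbf{F}(f,\z_2) = \bigl(\begin{smallmatrix}\mathbf{0}\\ \mathbf{F}_2^{(2)}(f,\z_2)\end{smallmatrix}\bigr)$, whose rank equals ${\rm rank}(\mathbf{F}_2^{(2)}(f,\z_2))$ since $\mathbf{U}_2$ is unimodular. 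Hence ${\rm rank}(\mathbf{F}_2^{(2)}(f,\z_2)) = l-r$, so $\mathbf{F}_2^{(2)}(f,\z_2)$ is right-invertible over the fraction field $k(\z_2)$, and $\mathbf{U}_{12} = \mathbf{0}$ follows.

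With $\mathbf{U}_{12} = \mathbf{0}$ the block matrix $\mathbf{U}$ is lower-triangular, so ${\rm det}(\mathbf{U}) = {\rm det}(\mathbf{U}_{11}){\rm det}(\mathbf{U}_{22})\in k^{\ast}$ forces each diagonal block to be unimodular over $k[\z_2]$. Cancelling the nonzerodivisor $h$ in the first block equation gives $\mathbf{F}_1^{(1)} = \mathbf{U}_{11}\mathbf{F}_2^{(1)}$, and together with the second this reads $\mathbf{F}_1 = \mathbf{M}\mathbf{F}_2$ for $\mathbf{M} = \bigl(\begin{smallmatrix} \mathbf{U}_{11} & 0 \\ h\mathbf{U}_{21} & \mathbf{U}_{22}\end{smallmatrix}\bigr) \in k[\z]^{l\times l}$, whose determinant ${\rm det}(\mathbf{U}_{11}){\rm det}(\mathbf{U}_{22})$ lies in $k^{\ast}$; hence $\mathbf{M}$ is unimodular over $k[\z]$ and ${\rm Im}(\mathbf{F}_1) = {\rm Im}(\mathbf{F}_2)$. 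The delicate point I expect is the full-rank claim for $\mathbf{F}_2^{(2)}(f,\z_2)$, which rests crucially on the integer $r$ in $\mathbf{D}$ matching $l - {\rm rank}(\mathbf{F}(f,\z_2))$ as produced by Algorithm~\ref{MF_Algorithm}; if $r$ strictly exceeded this rank deficit the block equation would admit a nonzero $\mathbf{U}_{12}$ and small explicit examples show the row spans can genuinely differ, so the invocation of Lemma~\ref{rank_matrix} or Lemma~\ref{rank_matrix-2} at precisely this step is where the argument earns its conclusion.
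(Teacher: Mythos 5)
Your proposal is correct and follows essentially the same route as the paper: set $\mathbf{U} = \mathbf{U}_1\mathbf{U}_2^{-1} \in k[\z_2]^{l\times l}$, specialize at $z_1 = f(\z_2)$, use ${\rm rank}(\mathbf{F}(f,\z_2)) = l-r$ to deduce that the last $l-r$ rows of $\mathbf{D}\mathbf{F}_2$ specialize to linearly independent vectors and hence that the off-diagonal block $\mathbf{U}_{12}$ vanishes, then cancel the nonzerodivisor $h$. The paper writes this row-by-row and concludes by two-sided inclusion of the row modules, while you package the same computation in block form and exhibit the explicit unimodular $\mathbf{M}\in k[\z]^{l\times l}$ with $\mathbf{F}_1 = \mathbf{M}\mathbf{F}_2$ — a mild sharpening, not a different route — and your closing observation that the argument hinges on $r$ equalling $l - {\rm rank}(\mathbf{F}(f,\z_2))$ is exactly the right thing to flag.
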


 \begin{proof}
  Let $\mathbf{F}_1 = \begin{bmatrix} \vec{u}_1^{\rm T},  \ldots, \vec{u}_l^{\rm T} \end{bmatrix}^{\rm T}$ and $\mathbf{F}_2 = \begin{bmatrix} \vec{v}_1^{\rm T} ,  \ldots,  \vec{v}_l^{\rm T} \end{bmatrix}^{\rm T}$, where $\vec{u}_1,\ldots,$ $\vec{u}_l,\vec{v}_1,\ldots,\vec{v}_l\in k[\z]^{1\times m}$. So, ${\rm Im}(\mathbf{F}_1)= \langle  \vec{u}_1, \ldots, \vec{u}_l\rangle$ and ${\rm Im}(\mathbf{F}_2)$ $= \langle  \vec{v}_1, \ldots, \vec{v}_l\rangle$.

  Let $\mathbf{F}_{01}= \mathbf{U}_1\mathbf{F}$ and $\mathbf{F}_{02}= \mathbf{U}_2\mathbf{F}$. Then $\mathbf{F}_{01}= \mathbf{D}\mathbf{F}_1$ and $\mathbf{F}_{02}= \mathbf{D}\mathbf{F}_2$. It follows that $\mathbf{F}_{01} = \begin{bmatrix} h\vec{u}_1^{\rm T},  \ldots, h\vec{u}_{r}^{\rm T},\vec{u}_{r+1}^{\rm T},\ldots,\vec{u}_l^{\rm T} \end{bmatrix}^{\rm T}$ and $\mathbf{F}_{02} = \begin{bmatrix} h\vec{v}_1^{\rm T},  \ldots, h\vec{v}_{r}^{\rm T},\vec{v}_{r+1}^{\rm T},\ldots,\vec{v}_l^{\rm T} \end{bmatrix}^{\rm T}$. Since $\mathbf{U}_1$ and $\mathbf{U}_2$ are two unimodular matrices in $k[\z_2]^{l\times l}$, we have $\mathbf{F}_{01}= \mathbf{U}_1\mathbf{U}_2^{-1}\mathbf{F}_{02}$. This implies that there exist polynomials $a_{i1},\ldots,a_{il}\in k[\z_2]$ such that
  \begin{equation*}\label{unique-eq-1}
   h\vec{u}_i = h\cdot(\sum_{j=1}^{r}a_{ij}\vec{v}_j) + \sum_{j=r+1}^{l}a_{ij}\vec{v}_j,
  \end{equation*}
  where $i=1,\ldots,r$. Then, for each $i$ setting $z_1$ of the above equation to $f(\z_2)$, we have
  $$a_{i(r+1)}\vec{v}_{r+1}(f,\z_2)+\cdots+ a_{il}\vec{v}_l(f,\z_2)= \vec{0}.$$
  As ${\rm rank}(\mathbf{F}(f,\z_2)) = l-r$ and ${\rm rank}(\mathbf{F}_{02}(f,\z_2)) = {\rm rank}(\mathbf{F}(f,\z_2))$, we have that $\vec{v}_{r+1}(f,\z_2),\ldots,$ $\vec{v}_l(f,\z_2)$ are $k[\z_2]$-linearly independent. This implies that $a_{i(r+1)} = \cdots = a_{il} = 0$. Hence,
  $$\vec{u}_i = a_{i1}\vec{v}_1+ \cdots + a_{ir}\vec{v}_{r},$$
  where $i=1,\ldots,r$. Obviously, $\vec{u}_j$ is a $k[\z]$-linear combination of $\vec{v}_1, \ldots, \vec{v}_l$, where $j = r+1,$ $\ldots,l$. As a consequence, $\langle  \vec{u}_1, \ldots, \vec{u}_l\rangle \subset \langle  \vec{v}_1, \ldots, \vec{v}_l\rangle$. We can use the same method to prove that $\langle  \vec{v}_1, \ldots, \vec{v}_l\rangle \subset \langle  \vec{u}_1, \ldots, \vec{u}_l\rangle$.

  Therefore, we have ${\rm Im}(\mathbf{F}_1) = {\rm Im}(\mathbf{F}_2)$.
 \end{proof}

 Based on Theorem \ref{main-theorem-uniqueness}, we can now derive the conclusion: the output $\F_1$ of Algorithm \ref{MF_Algorithm} is unique, i.e., ${\rm Im}(\mathbf{F}_1)$ is uniquely determined.

 \section{Examples}\label{sec_examples}

  We use two examples to illustrate the calculation process of Algorithm \ref{MF_Algorithm}. We first return to Example \ref{counter-example-1}.

  \begin{example}\label{counter-example-3}
  Let
  \[\mathbf{F} =
  \begin{bmatrix}
       -2z_1z_2^2+z_1^2z_3+z_2^2z_3-z_1z_3^2+z_2z_3^2   & z_1^3-z_2^3-z_1^2z_3+z_2z_3^2  & z_1z_2-z_2z_3 & z_2^2   \\
         -z_1z_2+z_3^2    &     -z_2^2+z_1z_3  & 0 & z_2
   \end{bmatrix}\]
  be a polynomial matrix in $\mathbb{C}[z_1,z_2,z_3]^{2\times 4}$, where $z_1>z_2>z_3$ and $\mathbb{C}$ is the complex field.

  It is easy to compute that $d_2(\mathbf{F})=z_2(z_1-z_3)$ and $d_1(\mathbf{F})=1$. Let $\mathbf{F}$, $h=z_1-z_3$ and $\prec_{z_2,z_3}$ be the inputs of Algorithm \ref{MF_Algorithm}, where $\prec_{z_2,z_3}$ is the degree reverse lexicographic order.

  Note that
  \[\mathbf{F}(z_3,z_2,z_3) =
  \begin{bmatrix}
       -z_2^2z_3+z_2z_3^2   & -z_2^3+z_2z_3^2  & 0 & z_2^2   \\
         -z_2z_3+z_3^2    &     -z_2^2+z_3^2  & 0 & z_2
   \end{bmatrix},\]
  ${\rm rank}(\mathbf{F}(z_3,z_2,z_3)) = 1$ and $r=1$. All the $1\times 1$ column reduced minors of $\mathbf{F}(z_3,z_2,z_3)$ are $z_2,1$. Since the reduced \gr basis of $\langle z_2,1 \rangle$ w.r.t. $\prec_{z_2,z_3}$ is $\{1\}$, $\F$ has a matrix factorization w.r.t. $h$.

  Let $W = {\rm Im}(\mathbf{F}(z_3,z_2,z_3))$. Then we compute a reduced \gr basis of the syzygy module of $W$, and obtain
  \[\mathbf{H} =
  \begin{bmatrix}
    1  &  -z_2
  \end{bmatrix}.\]
  It is easy to check that $\mathbf{H}$ is a ZLP matrix. $\mathbf{H}$ can be extended as the first row of a unimodular matrix
  \[\mathbf{U} =
  \begin{bmatrix}
       1   & -z_2   \\
       0   & 1
  \end{bmatrix}\]
  by using the package QUILLENSUSLIN. We extract $h$ from the first row of $\mathbf{U} \mathbf{F}$, and get
  \[\mathbf{U} \mathbf{F} = \mathbf{D}\mathbf{F}_1 =
    \begin{bmatrix}
    z_1 - z_3 & 0 \\
       0      & 1
   \end{bmatrix}
   \begin{bmatrix}
    z_1z_3-z_2^2   &  z_1^2-z_2z_3   & z_2 &  0   \\
    -z_1z_2+z_3^2  &  -z_2^2+z_1z_3  &  0  & z_2
   \end{bmatrix}.\]
  Then, $\mathbf{F}$ has a matrix factorization w.r.t. $h$:
  \[\mathbf{F} = \mathbf{G}_1 \mathbf{F}_1 = (\mathbf{U}^{-1} \mathbf{D})\mathbf{F}_1 =
    \begin{bmatrix}
    z_1 - z_3 & z_2 \\
       0      & 1
   \end{bmatrix}
   \begin{bmatrix}
    z_1z_3-z_2^2   &   z_1^2-z_2z_3  & z_2 &  0   \\
    -z_1z_2+z_3^2  &  -z_2^2+z_1z_3  &  0  & z_2
   \end{bmatrix},\]
  where ${\rm det}(\mathbf{G}_1)={\rm det}(\mathbf{U}^{-1} \mathbf{D}) = h$.

  At this moment, $d_2(\mathbf{F}_1) = z_2$. We reuse Algorithm \ref{MF_Algorithm} to judge whether $\mathbf{F}_1$ has a matrix factorization w.r.t. $z_2$. Note that
  \[\mathbf{F}_1(z_1,0,z_3) =
  \begin{bmatrix}
       z_1z_3   & z_1^2  & 0 & 0   \\
       z_3^2    & z_1z_3  & 0 & 0
   \end{bmatrix},\]
  ${\rm rank}(\mathbf{F}_1(z_1,0,z_3)) = 1$ and $r = 1$. All the $1\times 1$ column reduced minors of $\mathbf{F}_1(z_1,0,z_3)$ are $z_1,z_3$, and the reduced \gr basis $\mathcal{G}$ of $\langle z_1, z_3\rangle$ is $\{z_1,z_3\}$. Since $\mathcal{G} \neq \{ 1 \}$ and $r = 1$, $\mathbf{F}_1$ has no matrix factorizations w.r.t. $z_2$.
 \end{example}

 \begin{remark}
  In Example \ref{counter-example-3}, we can first judge whether  $\mathbf{F}$ has a matrix factorization w.r.t. $z_2$. Note that
  \[\mathbf{F}(z_1,0,z_3) =
  \begin{bmatrix}
       z_1z_3(z_1-z_3)   & z_1^2(z_1-z_3)  & 0 & 0   \\
         z_3^2    &     z_1z_3  & 0 & 0
   \end{bmatrix},\]
  ${\rm rank}(\mathbf{F}(z_1,0,z_3)) = 1$ and $r = 1$. All the $1\times 1$ column reduced minors of $\mathbf{F}(z_1,0,z_3)$ are $z_1(z_1-z_3),z_3$, and do not generate $k[z_1,z_3]$. This implies that $\mathbf{F}$ has no matrix factorizations w.r.t. $z_2$.

  According to the above calculations, we have the following conclusion: $\mathbf{F}$ has a matrix factorization w.r.t. $z_1-z_3$, but does not have a matrix factorization w.r.t. $z_2$.
 \end{remark}

 \begin{example}\label{counter-example-4}
  Let
  \[\mathbf{F} =
  \begin{bmatrix}
       z_1^2-z_1z_2   & z_2z_3+z_3^2+z_2+z_3  & -z_2z_3-z_2    \\
       z_1z_2-z_2^2  & -z_1z_3+z_2z_3  & z_1^3-z_1^2z_2+z_1z_2-z_2^2 \\
         0    &     z_2+z_3  &  -z_2
   \end{bmatrix}\]
  be a polynomial matrix in $\mathbb{C}[z_1,z_2,z_3]^{3\times 3}$, where $z_1>z_2>z_3$ and $\mathbb{C}$ is the complex field.

  It is easy to compute that $d_3(\mathbf{F})=-z_1(z_1-z_2)^2(z_1^2z_2+z_1^2z_3+z_2^2)$, $d_2(\mathbf{F})=z_1-z_2$ and $d_1(\mathbf{F})=1$. Let $\mathbf{F}$, $h=z_1-z_2$ and $\prec_{z_2,z_3}$ be the inputs of Algorithm \ref{MF_Algorithm}, where $\prec_{z_2,z_3}$ is the degree reverse lexicographic order.

  Note that
  \[\mathbf{F}(z_2,z_2,z_3) =
  \begin{bmatrix}
       0    & (z_2+z_3)(z_3+1)  & -z_2(z_3+1)    \\
       0    &     0     &    0    \\
       0    &  z_2+z_3  &  -z_2
   \end{bmatrix},\]
  ${\rm rank}(\mathbf{F}(z_2,z_2,z_3))=1$ and $r=2$. Obviously, all the $1\times 1$ column reduced minors of $\F(z_2,z_2,z_3)$ are $z_3+1, 1$. Since the reduced \gr basis of $\langle z_3+1, 1 \rangle$ w.r.t. $\prec_{z_2,z_3}$ is $\{1\}$, $\F$ has a matrix factorization w.r.t. $h^2$.

  Let $W = {\rm Im}(\mathbf{F}(z_2,z_2,z_3))$. Then we compute a reduced \gr basis of the syzygy module of $W$, and obtain
  \[\mathbf{H} =
  \begin{bmatrix}
       1   & 0  & -z_3-1   \\
       0   & 1  &  0
  \end{bmatrix}.\]
  It is easy to check that the reduced \gr basis of all the $2\times 2$ minors of $\mathbf{H}$ w.r.t. $\prec_{z_2,z_3}$ is $\mathcal{G} = \{ 1 \}$. Then, $\mathbf{H}$ is a ZLP matrix. We use the package QUILLENSUSLIN to construct a unimodular matrix
  \[\mathbf{U} =
  \begin{bmatrix}
       1   & 0  & -z_3-1   \\
       0   & 1  &  0       \\
       0   & 0  &  1
  \end{bmatrix}\]
  such that $\mathbf{H}$ is the first $2$ rows of $\mathbf{U}$. We extract $h$ from the first $2$ rows of $\mathbf{U} \mathbf{F}$, and get
  \[\mathbf{U} \mathbf{F} = \mathbf{D}\mathbf{F}_1 =
    \begin{bmatrix}
    z_1 - z_2 & 0         &  0   \\
       0      & z_1 - z_2 &  0   \\
       0      &    0      &  1
   \end{bmatrix}
   \begin{bmatrix}
    z_1   &   0       & 0           \\
    z_2   &  -z_3      &  z_1^2+z_2  \\
    0     &  z_2+z_3  &  -z_2
   \end{bmatrix}.\]
  Then, we obtain a matrix factorization of $\mathbf{F}$ w.r.t. $h^2$:

  \[\mathbf{F} = \mathbf{G}_1 \mathbf{F}_1= (\mathbf{U}^{-1} \mathbf{D})\mathbf{F}_1 =
    \begin{bmatrix}
    z_1 - z_2 & 0         &  z_3+1   \\
       0      & z_1 - z_2 &  0   \\
       0      &    0      &  1
   \end{bmatrix}
   \begin{bmatrix}
    z_1   &   0       & 0           \\
    z_2   &  -z_3      &  z_1^2+z_2  \\
    0     &  z_2+z_3  &  -z_2
   \end{bmatrix},\]
  where ${\rm det}(\mathbf{G}_1)={\rm det}(\mathbf{U}^{-1} \mathbf{D}) = h^2$.

  At this moment, $d_3(\mathbf{F}_1)=-z_1(z_1^2z_2+z_1^2z_3+z_2^2)$. We reuse Algorithm \ref{MF_Algorithm} to judge whether $\mathbf{F}_1$ has a matrix factorization w.r.t. $z_1$. Similarly, we obtain
  \[\mathbf{F}_1 = \mathbf{G}_2 \mathbf{F}_2 =
    \begin{bmatrix}
       z_1    &    0      &  0  \\
       0      &    1      &  0   \\
       0      &    0      &  1
   \end{bmatrix}
   \begin{bmatrix}
      1   &   0       & 0           \\
    z_2   &  -z_3      &  z_1^2+z_2  \\
    0     &  z_2+z_3  &  -z_2
   \end{bmatrix},\]
  where ${\rm det}(\mathbf{G}_2)=z_1$.

  Therefore, we obtain a matrix factorization of $\mathbf{F}$ w.r.t. $z_1(z_1-z_2)^2$, i.e.,
  \[ \mathbf{F} = \mathbf{G} \mathbf{F}_2 = (\mathbf{G}_1\mathbf{G}_2)\mathbf{F}_2 =
    \begin{bmatrix}
    z_1(z_1 - z_2) & 0         &  z_3+1   \\
       0      & z_1 - z_2 &  0   \\
       0      &    0      &  1
   \end{bmatrix}
   \begin{bmatrix}
    1   &   0       & 0           \\
    z_2   &  -z_3      &  z_1^2+z_2  \\
    0     &  z_2+z_3  &  -z_2
   \end{bmatrix},\]
  where ${\rm det}(\mathbf{G})=z_1(z_1-z_2)^2$.
 \end{example}

 \begin{remark}
  In Example \ref{counter-example-4}, we can first judge whether  $\mathbf{F}$ has a matrix factorization w.r.t. $z_1$. Note that
  \[\mathbf{F}(0,z_2,z_3) =
  \begin{bmatrix}
       0    & (z_2+z_3)(z_3+1)  & -z_2(z_3+1)    \\
    -z_2^2  &      z_2z_3  & -z_2^2 \\
       0    &     z_2+z_3  &  -z_2
   \end{bmatrix},\]
  ${\rm rank}(\mathbf{F}(0,z_2,z_3)) = 2$ and $r = 1$. All the $2\times 2$ column reduced minors of $\mathbf{F}(0,z_2,z_3)$ are $z_3+1,1$, and generate $k[z_2,z_3]$. This implies that $\mathbf{F}$ has a matrix factorization w.r.t. $z_1$.

  According to the above calculations, we have the following conclusion: $\mathbf{F}$ has a matrix factorization w.r.t. $z_1$, $z_1-z_2$, $z_1(z_1-z_2)$, $(z_1-z_2)^2$ and $z_1(z_1-z_2)^2$, respectively.
 \end{remark}

\section{Concluding remarks}\label{sec_conclusions}

 In this paper, we point out two directions of research in which multivariate polynomial matrices have been explored. The first is concerned with the factorization problem for a class of multivariate polynomial matrices, and the second direction is devoted to the investigation of the equivalence problem of a square polynomial matrix and a diagonal matrix.

 The main contributions of this paper include: 1) some new factorization criteria are given to factorize $\mathbf{F}\in \mathcal{M}$ w.r.t. $h^r$, and the relationships among all existed factorization criteria have been studied; 2) a necessary and sufficient condition is proposed to judge whether a square polynomial matrix with the determinant being $h^r$ is equivalent to the diagonal matrix ${\rm diag}(h,\ldots,h,1,\ldots,1)$; 3) based on new criteria, a factorization algorithm is given and the output of the algorithm is proved to be unique; 4) the algorithm is implemented on Maple, and two examples are given to illustrate the effectiveness of the algorithm.

 A sufficient condition is obtained for the existence of a matrix factorization of $\F$ w.r.t. $h^r~(1< r < l)$. At this moment, how to establish a necessary and sufficient condition for $\mathbf{F}$ admitting a matrix factorization w.r.t. $h^r$ is the question that remains for further investigation.

\section*{Acknowledgments}

 This research was supported by the CAS Key Project QYZDJ-SSW-SYS022.

\bibliographystyle{elsarticle-harv}

\bibliography{20-mf}

\end{document}